\def\isarxivversion{1} %

\ifdefined\isarxivversion
\documentclass[11pt]{article}

\usepackage[numbers]{natbib}

\else
\documentclass{article} 
\usepackage{microtype} 
\usepackage{graphicx}
\usepackage{subfig}
\usepackage{hyperref}
\usepackage{icml2024}
\fi

\usepackage{amsmath}
\usepackage{amsthm}
\usepackage{amssymb}
\usepackage{algorithm}
\usepackage{color}
\usepackage[english]{babel}
\usepackage{wrapfig,epsfig}
\usepackage{epstopdf}
\usepackage{url}
\usepackage{color}
\usepackage{makecell}
\usepackage{epstopdf}
\usepackage{algpseudocode}
\usepackage{scrextend}
\usepackage[T1]{fontenc}
\usepackage{bbm}
\usepackage{multicol} 
\usepackage{paralist}
\usepackage{pdfpages}

\usepackage{tikz}

\usetikzlibrary{arrows}

\graphicspath{{./figs/}} %

\ifdefined\isarxivversion
\usepackage{hyperref}

\hypersetup{colorlinks=true,citecolor=red,linkcolor=blue} 
\usepackage[margin=1in]{geometry}
\linespread{1}
\else
\usepackage{hyperref}
\definecolor{mydarkblue}{rgb}{0,0.08,0.45}
\hypersetup{colorlinks=true, citecolor=mydarkblue,linkcolor=mydarkblue}
\fi

\theoremstyle{plain}
\newtheorem{theorem}{Theorem}[section]
\newtheorem{lemma}[theorem]{Lemma}
\newtheorem{definition}[theorem]{Definition}

\newtheorem{fact}[theorem]{Fact}
\newtheorem{remark}[theorem]{Remark}

\newtheorem{problem}[theorem]{Problem}

\newcommand{\cP}{\mathbf{P}}

\newcommand{\Err}{\mathrm{Err}}
\newcommand{\poly}{\mathrm{poly}}

\newcommand{\wt}{\widetilde}

\renewcommand{\tilde}{\wt}

\renewcommand{\k}{\mathsf{K}}

\renewcommand{\d}{\mathrm{d}}

\newcommand{\new}{\mathrm{new}}

\renewcommand{\P}{{\mathsf{P}}}

\def\N{{\mathbb N}}
\def\R{{\mathbb R}}

\def\eps{\varepsilon}
\renewcommand{\epsilon}{\eps}

\definecolor{b2}{RGB}{51,153,255}
\definecolor{mygreen}{RGB}{80,180,0}

\newcommand{\blue}[1]{{#1}}

\ifdefined\isarxivversion

\else

\usepackage[textsize=tiny]{todonotes}

\icmltitlerunning{A Dynamic Low-Rank Fast Gaussian Transform}

\fi

\begin{document}

\ifdefined\isarxivversion

\date{}

\title{A Dynamic Low-Rank Fast Gaussian Transform}
\author{
Baihe Huang\thanks{\texttt{banghua@berkeley.edu}. University of California, Berkeley.}
\and 
Zhao Song\thanks{\texttt{zsong@adobe.com}. Adobe Research.}
\and 
Omri Weinstein\thanks{\texttt{omri@cs.columbia.edu}. The Hebrew University and Columbia University.}
\and
Junze Yin\thanks{\texttt{junze@bu.edu}. Boston University.}
\and
Hengjie Zhang\thanks{\texttt{hengjie.z@columbia.edu}. Columbia University.}
\and 
Ruizhe Zhang\thanks{\texttt{rzzhang@berkeley.edu}. Simons Institute for the Theory of Computing.}
}

\else

\twocolumn[
\icmltitle{A Dynamic Low-Rank Fast Gaussian Transform}

\icmlsetsymbol{equal}{*}

\begin{icmlauthorlist}
\icmlauthor{Firstname1 Lastname1}{equal,yyy}
\icmlauthor{Firstname2 Lastname2}{equal,yyy,comp}
\icmlauthor{Firstname3 Lastname3}{comp}
\icmlauthor{Firstname4 Lastname4}{sch}
\icmlauthor{Firstname5 Lastname5}{yyy}
\icmlauthor{Firstname6 Lastname6}{sch,yyy,comp}
\icmlauthor{Firstname7 Lastname7}{comp}
\icmlauthor{Firstname8 Lastname8}{sch}
\icmlauthor{Firstname8 Lastname8}{yyy,comp}
\end{icmlauthorlist}

\icmlaffiliation{yyy}{Department of XXX, University of YYY, Location, Country}
\icmlaffiliation{comp}{Company Name, Location, Country}
\icmlaffiliation{sch}{School of ZZZ, Institute of WWW, Location, Country}

\icmlcorrespondingauthor{Firstname1 Lastname1}{first1.last1@xxx.edu}
\icmlcorrespondingauthor{Firstname2 Lastname2}{first2.last2@www.uk}

\icmlkeywords{Machine Learning, ICML}

\vskip 0.3in
]

\printAffiliationsAndNotice{\icmlEqualContribution} %

\fi

\ifdefined\isarxivversion
\begin{titlepage}
  \maketitle
  \begin{abstract}

The \emph{Fast Gaussian Transform} (FGT) enables subquadratic-time multiplication of an $n\times n$ Gaussian kernel matrix $\mathsf{K}_{i,j}= \exp ( -  \| x_i - x_j \|_2^2 ) $ with an arbitrary vector $h \in \mathbb{R}^n$, where $x_1,\dots, x_n \in \mathbb{R}^d$ are a set of \emph{fixed} source points. This kernel plays a central role in machine learning and random feature maps. Nevertheless, in most modern data analysis applications, datasets are dynamically changing (yet often have low rank), and recomputing the FGT from scratch in (kernel-based) algorithms incurs a major computational overhead ($\gtrsim n$ time for a single source update $\in \mathbb{R}^d$). These applications motivate a \emph{dynamic FGT} algorithm, which maintains a dynamic set of sources under \emph{kernel-density estimation} (KDE) queries in \emph{sublinear time} while retaining Mat-Vec multiplication accuracy and speed.  

Assuming the dynamic data-points $x_i$ lie in a (possibly changing) $k$-dimensional subspace ($k\leq d$), our main result is an efficient dynamic FGT algorithm, supporting the following operations in $\log^{O(k)}(n/\varepsilon)$ time: (1) Adding or deleting a source point, and (2) Estimating the ``kernel-density'' of a query point with respect to sources with $\varepsilon$ additive accuracy. The core of the algorithm is a dynamic data structure for maintaining the \emph{projected} ``interaction rank'' between source and target boxes, decoupled into finite truncation of Taylor and Hermite expansions.

  \end{abstract}
  \thispagestyle{empty}
\end{titlepage}

\else
\begin{abstract}

\end{abstract}

\fi

\section{Introduction}

The fast Multipole method (FMM) was described as one of the top 10 most important algorithms of the 20th century \cite{dongarra2000guest}. It is a numerical technique that was originally developed to speed up calculations of long-range forces for the $n$-body problem in theoretical physics. FMM was first introduced in 1987 by Greengard and Rokhlin \cite{gr87}, based on the multipole expansion of the vector Helmholtz equation (see Section \ref{sec:preli}). By treating the interactions between far-away basis functions using the FMM, the underlying matrix entries $M_{ij} \in \R^{n \times n}$ 
 (encoding the pairwise ``interaction'' between $x_i,x_j \in \R^d$)
need not be explicitly computed nor stored for matrix-vector operations --  
This technique allows to improve the na\"ive $O(n^2)$ matrix-vector multiplication time to quasi-linear time $\approx n\cdot \log^{O(d)}(n)$,  with negligible (polynomial-small)  additive error. 

Since the discovery of FMM in the late 80s, it had a profound impact on scientific computing and has been extended and applied in many different fields, including physics, mathematics,  numerical analysis and computer science  \cite{gr87,g88,gr88,gr89,g90,gs91,emrv92,g94,gr96,bg97,d00,ydgd03,ydd04,m12,CDG_HSS}. To mention just one important example, we note that FMM plays a key role in efficiently maintaining the SVD of a matrix under low-rank perturbations, based on the Cauchy structure of the perturbed eigenvectors \cite{GE94}. 

In the context of machine learning, 
the FMM technique can be extended to the evaluation of matrix-vector products with certain \emph{Kernel matrices}
$\k_{i,j}= 
f\left(\| x_i - x_j \|\right)$, 
 most notably, the \emph{Gaussian Kernel} 
$\k_{i,j} = \exp (-\| x_i - x_j \|_2^2 )$ 
\cite{gs91}. 
For any query vector $q \in \R^n$, the \emph{fast Gaussian transform} (FGT) algorithm outputs an arbitrarily-small \emph{pointwise additive} approximation to $\k \cdot q$, i.e.,  
a vector $z \in \R^n$ such that 
\begin{align*}
  \|  \k \cdot q - z \|_{\infty} \leq \epsilon, 
\end{align*}
in merely 
$n \log^{O(d)} ( \| q \|_1 / \epsilon )$ time, which is dramatically faster than na\"ive matrix-vector multiplication ($n^2$) for constant dimension $d$. Note that the (poly)logarithmic dependence on   $1/\eps$ means that FGT can achieve \emph{polynomially-small} additive error in quasi-linear time, which is as good as exact computation for all practical purposes. The crux of FGT is that the $n\times n$ matrix $\k$ can be stored \emph{implicitly}, using a clever spectral-analytic decomposition of the geometrically-decaying pairwise distances (``interaction rank'', more on this below).

Kernel matrices play a central role in machine learning \cite{sc04,rr08}, 
as they allow to extend convex optimization and learning algorithms to nonlinear feature spaces and even to non-convex problems \cite{ll18,jgh18,dzps19,als19_dnn,als19_rnn,lsswy20}. 
Accordingly, matrix-vector multiplication with kernel matrices is a basic operation in  many ML optimization tasks, such as Kernel PCA and ridge regression \cite{am15,acw17,akm+17,lsswy20}, 
Gaussian-process regression (GPR) \cite{REN10}), 
Kernel linear system solvers (via Conjugate Gradient \cite{acss20}), and in fast implementation of the dynamic ``state-space model'' (SSM) for sequence-correlation modeling (which crucially relies on the Multipole method \cite{GGR21}), 
to mention a few. 
The related data-structure  problem of \emph{kernel density estimation} of a point  \cite{cs17,bcis18,cs19,ckns20,zhdk23,as23}  
\begin{align*}
    \mathsf{KDE}(X,y) = \frac{1}{n}\sum_{i=1}^n \k(x_i,y)
\end{align*}
has various applications in data analysis and statistics
\cite{FG96, SS02, OutlierDetection_KDE}, and is the main subroutine in the implementation of transfer learning using kernels (see \cite{cs17,ckns20} and references therein, and the Related Work section below).
As such, speeding up matrix-vector multiplication with kernel matrices, such as FGT, 
is an important question in theory and practice.

One drawback of FMM and FGT techniques, however,  is that they are \emph{static} algorithms, i.e., they assume a \emph{fixed} set of $n$ data points $x_i \in \R^d$. By contrast, most aforementioned ML and data analysis applications are \emph{dynamic} by nature and need to process rapidly-evolving datasets to maintain prediction and model accuracy.
One example is the renewed interest in \emph{online regression} \cite{CLMMPS15, JPW22}, motivated by \emph{continual learning} theory \cite{PARISI201954}. 
Indeed, it is becoming increasingly clear that many static optimization algorithms do not capture the requirements of real-world applications \cite{jkdg08,cmf+20,clp20,syz21,szz21,xss21,ssx21}.   

Notice that changing a single source-point $x_i\in \R^d$ generally affects an \emph{entire row} ($n$ distances $\|x_i-x_j\|$) of the matrix $\k$. As such, naively re-computing the static FGT on the modified set of distances, incurs a prohibitive computational overhead ($n\gg d$). This raises the natural question of whether it is possible to achieve \emph{sublinear}-time insertion and deletion of source points, as well as  ``local'' \emph{kernel-density estimation} (KDE) queries  \cite{cs17, ydgd03}, while maintaining speed and accuracy of matrix-vector multiplication queries: 

\begin{center}
{
\it Is it possible to `dynamize' the Fast Gaussian Transform, in sublinear time? Can the exponential dependence on $d$ \cite{gs91} be mitigated if the data-points  
$x_i$ lie in a $k$-dimensional subspace of $\R^d$?
}
\end{center}

The last question is motivated by the recent work of \cite{cn22}, who observed that kernel-based methods and algorithms typically involve \emph{low-rank} datasets, 
(where the ``intrinsic'' dimension is $w\ll d$), in which 
case one could hope to circumvent the exponential dependence on $d$ in the aforementioned (static) FMM algorithm \cite{gs91,acss20}. 

\subsection{Main Result}

Our main result is an affirmative answer to the above question. We design a fully-dynamic FGT data structure, supporting \emph{polylogarithmic}-time updates and ``density estimation'' queries, while retaining quasi-linear time for arbitrary Mat-Vec queries ($\k q$). 

More formally, for a set of $N$ ``source'' points $s_1,\ldots,s_N$, the $j$-th coordinate $(\k q)_{j\in [N]}$  is $G(s_j) = \sum_{i=1}^N q_i \cdot e^{-\|s_j-s_i\|_2^2/\delta}$, which measures the kernel-density at $s_j$ (``interaction'' of $s_j$ with the rest of the sources). More generally, for any ``target'' point $t \in \R^d$, let $$G(t) := \sum_{i=1}^N q_i \cdot e^{ -\| t - s_i \|_2^2/\delta }$$
denote the \emph{kernel density} of $t$ with respect to the sources, where each source $s_i$ is equipped with a \emph{charge} $q_i$. Our data structure supports fully-dynamic source updates and density-estimation queries in \emph{sublinear} time.
Observe that this immediately implies that  
entire Mat-Vec queries ($\k\cdot q$) can be computed in quasi-linear time $N^{1+o(1)}$. 
The following is our main result:  
\begin{theorem}[Dynamic Low-Rank FGT, Informal version of  Theorem~\ref{thm:fmm_online_low_rank}] \label{thm:main_informal}
Let $\mathfrak{B}$ denote a $w$-dimensional subspace $\subset \R^d$. Given a set of source points $s$, and charges $q$, there is a (deterministic) data structure that maintains a fully-dynamic set of $N$ source vectors $s_1, \cdots, s_N \in \mathfrak{B}$
under the following operations:
\begin{itemize}

    \item \textsc{Insert/Delete}$(s_i \in \R^d, q_i \in \R )$ Insert or Delete a source point $s_i \in \R^d$ along with its ``charge'' $q_i\in \R$, in $\log^{O(w)} ( \| q \|_1 / \epsilon ) $ time. The intrinsic subspace $B$ could change as the source points are updated.  
    
    \item\textsc{Density-Estimation}$(t\in \mathfrak{B})$ For any point $t \in \mathfrak{B} \subset \R^d$, output the kernel density of $t$ with respect to the sources, i.e., output $\wt{G}$ such that $G(t)-\eps \leq \wt{G} \leq G(t) + \eps$ in $\log^{O(w)} ( \| q \|_1 / \epsilon ) $ time. 
        
\end{itemize}

\end{theorem}
We note that when $w=d$, the costs of our dynamic algorithm match the statistic FGT algorithm.

As one might expect, our data structure applies to a more general subclass of `geometrically-decaying' kernels $ \k_{i,j} = f(\|x_i-x_j\|)$    ($f(tx) \leq (1-\alpha)^tf(x)$), see Theorem \ref{thm:fmm_online} for the formal statement of our main result.
It is also noteworthy that our data structure is deterministic, and therefore handles even \emph{adaptive} update sequences \cite{hw13,bjwy20,cn20}. %
This feature is important in adaptive data analysis and in the use of dynamic data structures for accelerating \emph{path-following} iterative optimization algorithms  \cite{blss20}, where proximity to the original gradient flow (linear) equations is crucial for convergence, hence the data structure needs to ensure the approximation guarantees hold against \emph{any} outcome of previous iterations.

\paragraph{Remark on Dynamization of     
  ``Decomposable'' Problems}  
A data structure problem $\cP(D,q)$ is called  \emph{decomposable}, if a query $q$ to the \emph{union} of two separate datasets can be recovered from the two \emph{marginal} answers of the query on each of them separately, i.e., 
\begin{align*}
\cP(D_1 \cup D_2,q) = g\left(\cP(D_1,q), \cP(D_2,q)\right)
\end{align*}
for some function $g$. 
A classic technique in data structures \cite{BS80} asserts that decomposable data structure problems can be (partially) dynamized in a \emph{black-box} fashion -- It is possible to convert any \emph{static} DS for $\cP$ into a dynamic one supporting incremental updates, with an amortized update time  
$t_u \sim (T/N) \cdot \log(N)$, where 
$T$ is the preprocessing   
time of building the static data structure, and $N$ is the input size. 
It is not hard to see that Matrix-Vector multiplication over a field (with row-updates to the matrix), is a decomposable problem (since $(A+B)q = Aq +Bq$), and so one might hope that the dynamization of static FMM/FGT methods is an immediate consequence of decomposability. 
This reasoning is, unfortunately,  incorrect, since changing even a 
\emph{single} input point $x_i\in \R^d$, perturbs $n$ distances (i.e., an entire \emph{row} in the kernel matrix $\k$), and so the aforementioned reduction is prohibitively expensive (yields update time at least  $n\gg d$ for adding/removing a point).

\paragraph{Notations}
For a vector $x$, we use $\| x \|_2$ to denote its $\ell_2$-norm, $\|x \|_1$, $\|x \|_0$ and $\| x \|_{\infty}$ for its $\ell_1$-norm, $\ell_0$-norm and $\ell_{\infty}$-norm. We use $\wt{O}(f)$ to denote $f \cdot \poly(\log f)$. For a vector $x \in \R^d$ and a real number $p$, we say $x \leq p$ if $x_i \leq p$ for all $i \in [d]$. We say $x \geq p$ if there exists an $i \in [d]$ such that $x_i \geq p$. For a positive integer $n$, we use $[n]$ to denote a set $\{1,2,\cdots,n\}$.

\subsection{Related Work}\label{sec:related_work}

\paragraph{Structured Linear Algebra}

Multiplying an $n\times n$ matrix $M$ by an arbitrary vector $q\in \R^n$ generally requires $\Theta(n^2)$ time, and this is information-theoretically optimal since merely reading the entries of the matrix requires $\sim n^2$ operations. Nevertheless, 
if $M$ has some \emph{structure}  ($\tilde{O}(n)$-bit description-size), one could hope for quasi-linear time for computing $M\cdot q$. Kernel matrices ($\k_{ij} = f(\|x_i-x_j\|)$), which are the subject of this paper, are special cases of such \emph{geometric-analytic} structure, as their $n^2$ entries are determined by only $\sim n$ points in $\R^d$, i.e.,  $O(nd)$ bits of information.
There is a rich and active body of work in \emph{structured linear algebra}, exploring various ``algebraic'' structures that allow quasi-linear time matrix-vector multiplication, most of which relies on (novel) extensions of the \emph{Fast Fourier Transform} (see \cite{DHR97,SGPRR18,cdl+21} and references therein). 

A key difference between FMMs and the aforementioned FFT-style line of work is that the latter develops \emph{exact} Mat-Vec algorithms, whereas FMM techniques must inevitably resort to (small) approximation, based on the \emph{analytic} smoothness properties of the underlying function and metric space \cite{acss20,acmnss21}. This distinction makes the two lines of work mostly incomparable.

\paragraph{Comparison to LSH-based KDEs} A recent line of work due to  \cite{cs17,bcis18,cs19,ckns20,bik23} 
develops fast KDE data structures based on \emph{locality-sensitive hashing} (LSH), which seems possible to be dynamized naturally (as LSH is dynamic by nature).  
However, this line of work is incomparable to FGT, as it solves KDE in the \emph{low-accuracy} regime, i.e., the runtime dependence on $\eps$ of these works is $\poly(1/\eps)$ (but polynomial in $d$), as opposed to FGT ($\poly \log(1/\eps)$ but exponential in $d$). Additionally, some work (e.g., \cite{ckns20}) also needs an upper bound of the ground-truth value $\mu_\star = \k\cdot q$, and the efficiency of their data structure depends on $\mu_\star^{-O(1)}$, while FGT does not need any prior knowledge of $\mu_\star$.

\paragraph{Kernel Methods in ML}

Kernel methods can be thought of as instance-based learners: rather than learning some fixed set of parameters corresponding to the features of their inputs, they instead “remember” the $i$-th training example $(x_i, y_i)$ and learn for it a corresponding weight $w_i$. Prediction for unlabeled inputs, i.e., those not in the training set, is treated using an application of a \emph{similarity} function $\k$ (i.e., a kernel) between the unlabeled input $x'$ and each of the training-set inputs $x_i$. This framework is one of the main motivations for the development of kernel methods in ML and high-dimensional statistics \cite{ssb02}. 
There are two main themes of research on kernel methods in the context of machine learning: The first one is focused on understanding the expressive power and generalization of learning with kernel feature maps   \cite{njw02,ssb02,sc04,rr08,hss08,jgh18,dzps19,yjz23}; The second line is focused on the \emph{computational} aspects of 
kernel-based algorithms  \cite{acss20,bpsw21,syz21,szz21,hswz22,als+22,z22,as23,dms23,gsy23,gms23}. We refer the reader to these references for a much more thorough overview of these lines of research and the role of kernels in ML.

 \section{Technical Overview}
\label{sec:tech}

In this section, we provide a streamlined overview of our data structure (Theorem \eqref{thm:fmm_online}).

In Section~\ref{sec:offline_fgt}, we review the \emph{offline} FGT algorithm \cite{gr87,acss20} and analyze the computational costs.
In Section~\ref{sec:online_static_kde}, we illustrate the technique of estimating $G(t)$ for an arbitrary target vector $t \in \R^d$.
In Section~\ref{sec:dynamization}, we explain that the data structures support the dynamic setting where the source vectors are allowed to come and leave.
In Section~\ref{sec:fast_decaying_kernels}, we describe how to extend the data structure to a more general kernel function. 
In Section~\ref{sec:low_dimensional_space_static}, we show that if the source and target vectors come from a low dimensional subspace, the data structure can bypass the curse of dimension.
In Section~\ref{sec:low_dimensional_space_dynamic}, we modify the data structure to support the scenario where the rank of data points varies across iterations.

\begin{figure*}[t!]
\centering
\includegraphics[width=0.7\textwidth]{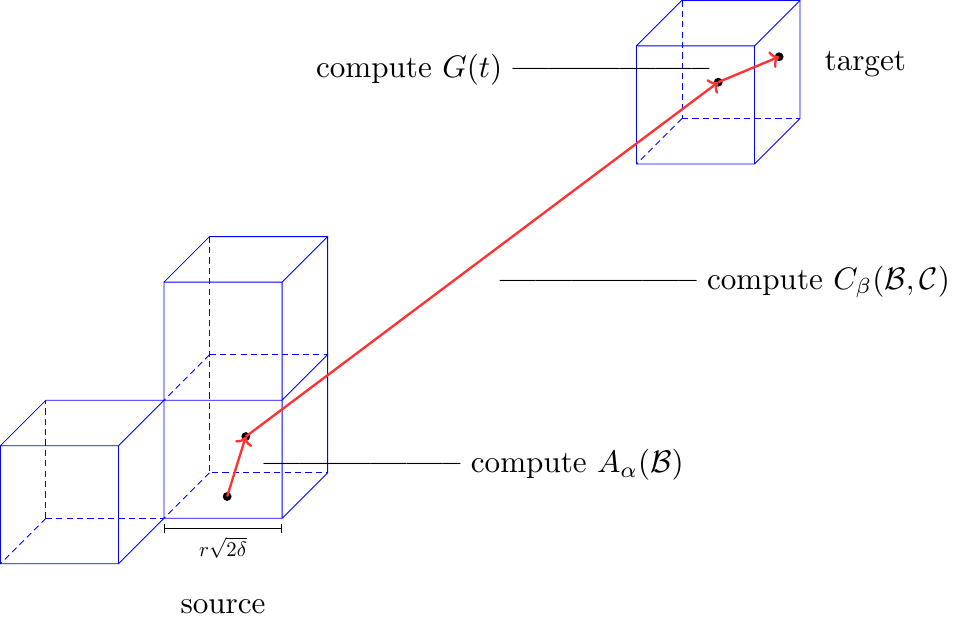}
\caption{An illustration of the source-target boxing our data structure maintains in high dimensional space, using the ``hybrid'' of Taylor-Hermite  expansions.}\label{fig:boxes}
\end{figure*}

\subsection{Offline FGT Algorithm}\label{sec:offline_fgt}
We first review \cite{acss20}'s offline FGT algorithm.
Consider the following easier problem: given $N$ source vectors $s_1,\dots, s_N \in \R^d$, and $M$ target vectors $t_1,\dots,t_M \in \R^d$, estimate 
\begin{align*}
    G(t_i) = \sum_{j=1}^N q_j \cdot e^{ -\| t_i - s_j \|_2^2/\delta }
\end{align*}
for any $i\in [M]$, in  quasi-linear time. 

Following \cite{gs91, acss20}, our algorithm subdivides $B_0=[0,1]^d$
into smaller boxes with sides of length $L=r \sqrt{2\delta}$ parallel to the axes, for a fixed $r \leq 1/2$, and then assign each source $s_j$ to the box ${\cal B}$ in which it lies and each target $t_i$ to the box ${\cal C}$ in which it lies. Note that there are $(1/L)^d$ boxes in total. Let $N(B)$ and $N(C)$ denote the number of non-empty source and target boxes, respectively.
For each target box ${\cal C}$, we need to evaluate the total field due to sources in all boxes. Since each box ${\cal B}$ has side length $r \sqrt{2\delta}$, only a fixed number of source boxes ${\cal B}$ can contribute more than $\|q\|_1 \epsilon$ to the field in a given target box ${\cal C}$, where $\epsilon$ is the precision parameter. 
Hence, for a target vector in box ${\cal C}$, if we only count the contributions of the source vectors in its $(2k+1)^d$ nearest boxes where $k$ is a parameter, it will incur an error that can be upper bounded as follows: 
\begin{align}\label{eq:box_error}
    \sum_{j : \| t - s_j \|_{\infty} \geq k r \sqrt{2\delta} } |q_j| \cdot e^{-\| t - s_j \|_2^2 / \delta} \leq \|q\|_1 \cdot e^{ - 2r^2 k^2}
\end{align}
When we take $k=\log(\|q\|_1/\epsilon)$, this error becomes $o(\epsilon)$.

For a single source vector $s_j\in {\cal B}$, its field $G_{s_j}(t)=q_j\cdot e^{-\|t-s_j\|^2/\delta}$ has the following Taylor expansion at $t_{\cal C}$ (the center of ${\cal C}$):
\begin{align}\label{eq:G_s_j}
    G_{s_j}(t) = \sum_{\beta \geq 0} {\cal B}_{\beta}(j,{\cal C}) \left( \frac{ t - t_{\cal C} }{ \sqrt{\delta} } \right)^{\beta},
\end{align}
where $\beta\in \mathbb{N}^d$ is a multi-index,  
\begin{align*}
    {\cal B}_{\beta}(j, {\cal C}) = q_j \cdot \frac{ (-1)^{ \| \beta\|_1 } }{ \beta ! } \cdot H_{\beta} \left( \frac{ s_j - t_{\cal C} }{ \sqrt{\delta} } \right),
\end{align*}
and $H_\beta(x)$ is the multi-dimensional Hermite function indexed by $\beta$ (Definition~\ref{def:multi_dim_hermite}). 
We can also control the truncation error of the first $p^d$ terms by $\epsilon$ for $p=\log(\|q\|_1/\epsilon)$ (see Lemma~\ref{lem:fast_gaussian_lemma_4}).
Then, for a fixed source box ${\cal B}$, the field can be approximated by
\begin{align*}
  \sum_{\beta\leq p} C_\beta({\cal B}, {\cal C}) \left( \frac{ t - t_{\cal C} }{ \sqrt{\delta} } \right)^{\beta},
\end{align*}
where $C_{\beta}({\cal B},{\cal C}):=\sum_{j\in {\cal B}} {\cal B}_\beta(j,{\cal C})$. 
Hence, for each query point $t$, we just need to locate its target box ${\cal C}$, and then $G(t)$ can be approximated by:
\begin{align*}
    \widetilde{G}(t)
    = & ~ \sum_{{\cal B}\in \mathsf{nb}({\cal C})}  \sum_{\beta\leq p}  C_{\beta}({\cal B}, {\cal C}) \left( \frac{ t - t_{\cal C} }{ \sqrt{\delta} } \right)^{\beta}\\
    = & ~  \sum_{\beta \leq p} C_\beta({\cal C}) \left( \frac{ t - t_{\cal C} }{ \sqrt{\delta} } \right)^{\beta},
\end{align*}
where $\mathsf{nb}({\cal C})$ is the set of $(2k+1)^d$ nearest-neighbor of ${\cal C}$ and 
\begin{align*}
    C_\beta({\cal C}):=\sum_{{\cal B}\in \mathsf{nb}({\cal C})} C_{\beta}({\cal B},{\cal C}).
\end{align*}

Notice that we can further pre-compute $C_\beta({\cal C})$ for each target box ${\cal C}$ and $\beta \leq p$. Then, the running time for each target point becomes $O(p^d)$.

For the preprocessing time, notice that each $C_\beta({\cal B}, {\cal C})$ takes $O(N_{\cal B})$-time to compute, where $N_{\cal B}$ is the number of source points in ${\cal B}$. Fix a $\beta\leq p$. 
Consider the computational cost of $C_\beta({\cal C})$ for all target boxes ${\cal C}$. Note that each source box can interact with at most $(2k+1)^d$ target boxes.  
Therefore, the total running time for computing  $\{C_\beta({\cal C}_\ell)\}_{\ell\in [N(C)]}$ is bounded by $$O\left(N\cdot (2k+1)^d+M\right).$$
 
Then, the total cost of the preprocessing is
\begin{align*}
    O\left(N\cdot (2k+1)^d\cdot p^d+M\cdot p^d\right).
\end{align*}
By taking $p=\log(\|q\|_1/\epsilon)$ and $k\leq \log(\|q\|_1/\epsilon)$, we get an algorithm with $\widetilde{O}_d(N+M)$-time for preprocessing and $\widetilde{O}_d(1)$-time for each target point.

We note that this algorithm also supports fast computing $\k q$ for any $q\in \R^d$ and $\k\in \R^{n\times n}$ with $\k_{i,j}=e^{-\|s_i-s_j\|_2^2/\delta}$. Roughly speaking, for each query vector $q$, we can build this data structure, and then the $i$-th coordinate of $\k q$ is just $G(s_i)$, which can be computed in poly-logarithmic time. Hence, $\k q$ can be approximately computed in nearly-linear time with $\ell_\infty$ error at most $\epsilon$. 

\subsection{Online Static KDE Data Structure (Query-Only)}\label{sec:online_static_kde}

Next, we consider the same static setting, except target queries $t\in \R^d$ arrive online, and the goal is to estimate $G(t)$ for an arbitrary vector in \emph{sublinear} time.
To this end, note that 
if $t$ is contained in a non-empty target box ${\cal C}_\ell$, then $G(t)$ can be approximated using pre-computed $C_\beta({\cal C}_\ell)$ in poly-logarithmic time. Otherwise, we need to add a new target box ${\cal C}_{N(C)+1}$ for $t$ and compute $C_{\beta}({\cal C}_{N(C)+1})$, which takes time 
\begin{align*}
    \sum_{{\cal B}\in \mathsf{nb}({\cal C}_{N(C)+1})} O(N_{{\cal B}}).
\end{align*}
Note, however, that this linear scan na\"ively takes $O(N)$ time in the worst case. 
Indeed, looking into the coefficients $C_\beta({\cal B}, {\cal C})$:
\begin{align*}
    C_\beta({\cal B}, {\cal C}) = \sum_{j\in {\cal B}} q_j \cdot \frac{ (-1)^{ \| \beta\|_1 } }{ \beta ! } \cdot H_{\beta} \left( \frac{ s_j - t_{\cal C} }{ \sqrt{\delta} } \right)
\end{align*}
reveals that the source vectors $s_j$ are ``entangled'' with $t_{\cal C}$, so  evaluating $C_\beta({\cal B}, {\cal C})$ brute-forcely for a new target box ${\cal C}$, incurs a linear scan of all source vectors in ${\cal B}$.

To ``disentangle'' $s_j$ and $t_{\cal C}$, we use the Taylor series of Hermite function (Eq.~\eqref{eq:taylor_series_of_H_t}):
\begin{align*}
    & ~ H_\beta\left(\frac{s_j-t_{\cal C}}{\sqrt{\delta}}\right) \\
    = & ~  H_\beta\left( \frac{s_j-s_{\cal B}}{\sqrt{\delta}}+\frac{s_{\cal B} - t_{\cal C}}{\sqrt{\delta}}\right)\\
    = & ~ \sum_{\alpha \geq 0} \frac{(-1)^{\| \alpha \|_1}  }{ \alpha ! } \left(\frac{s_j-s_{\cal B}}{\sqrt{\delta}}\right)^{\alpha} H_{\alpha + \beta} \left(\frac{s_{\cal B} - t_{\cal C}}{\sqrt{\delta}}\right),
\end{align*}
where $s_{\cal B}$ denotes the center of the source box ${\cal B}$.

Hence,  $C_\beta({\cal B}, {\cal C})$ can be re-written as:
\begin{align*}
    & ~C_\beta({\cal B}, {\cal C}) \\
    = & ~   \sum_{j\in {\cal B}} q_j g(\beta)\sum_{\alpha \geq 0} g(\alpha) \left(\frac{s_j-s_{\cal B}}{\sqrt{\delta}}\right)^{\alpha} H_{\alpha + \beta} \left(\frac{s_{\cal B} - t_{\cal C}}{\sqrt{\delta}}\right)\\
    = &~ g(\beta)\sum_{\alpha\geq 0} A_\alpha({\cal B}) H_{\alpha + \beta} \left(\frac{s_{\cal B} - t_{\cal C}}{\sqrt{\delta}}\right),
\end{align*}
where $g(x)=(-1)^{\|x\|_1}/x!$ and 
\begin{align}\label{def:A_alpha_cal_B}
    A_\alpha({\cal B}) := \sum_{j\in {\cal B}} q_j g(\alpha)\left(\frac{s_j-s_{\cal B}}{\sqrt{\delta}}\right)^{\alpha}.
\end{align}
Now, $A_\alpha({\cal B})$ does not rely on the target box and can be pre-computed, hence we can compute $C_{\beta}({\cal B}, {\cal C})$ without going over each source vector. However, there is a price for this conversion, namely, that now  $C_\beta({\cal B}, {\cal C})$ involves summing over all $\alpha \geq 0$, so we need to somehow truncate this series while controlling the overall truncation error for $G(t)$, which appears difficult to achieve. 

To this end, we observe that this two-step approximation is equivalent to first forming a truncated Hermite series of $e^{\|t-s_j\|^2/\delta}$ at the center of the source box $s_{\cal B}$, and then transforming all Hermite expansions into Taylor expansions at the center of a \emph{target} box $t_{\cal C}$. More formally, the Hermite approximation of $G(t)$ is
\begin{align*}
    & ~ G(t) \\
    = & ~ \sum_{ {\cal B} } \sum_{ \alpha \leq p } (-1)^{\|\alpha\|_1}A_{\alpha} ({\cal B}) H_{\alpha} \left( \frac{ t - s_{\cal B} }{ \sqrt{\delta} } \right) + \Err_H(p),
\end{align*}
where $|\Err_H(p)|\leq \epsilon$ (see Lemma~\ref{lem:fast_gaussian_lemma_1}). Hence, we can Taylor-expand each $H_\alpha$ at $t_{\cal C}$ and get that:
\begin{align*}
    G(t) = & ~ \sum_{\beta \leq p} C_{\beta}({\cal C}) \left( \frac{ t - t_{\cal C} }{ \sqrt{\delta} } \right)^{\beta} + \Err_T(p) + \Err_H(p),
\end{align*}
where 
\begin{align*}
    |\Err_H(p)|+|\Err_T(p)|\leq \epsilon
\end{align*}
(for the formal argument, see Lemma~\ref{lem:fast_gaussian_lemma_3}).
\begin{remark}
The original FGT paper contains a flaw in the error estimation, which was partially fixed in \cite{br02} for the Hermite expansion. Later, \cite{lmg05} corrected the error in both Hermite and Taylor expansions. However, their proofs are brief and use different notations that are adapted for their dual-tree algorithm. We provide more detailed and user-friendly proofs for the correct error estimations in Section~\ref{sec:error_estimation}. We believe that they are of independent interest to the community. 
\end{remark}

This means that, at preprocessing time, it suffices to compute $A_\alpha({\cal B})$ for all source boxes and all $\alpha\leq p$, which takes
\begin{align*}
    \sum_{k\in [N(B)]}O\left(p^d\cdot N_{{\cal B}_k}\right) 
    =  ~ O\left(p^d\cdot N\right) %
    =  ~ \widetilde{O}_d(N).
\end{align*}
time. Then, at query time, given an arbitrary query vector $t$ in a target box ${\cal C}$, we compute
\begin{align*}
    C_\beta({\cal C}) = h(\beta)\sum_{{\cal B}\in \mathsf{nb}({\cal C})} \sum_{\alpha \leq p} A_{\alpha} ({\cal B}) H_{\alpha + \beta} \left( \frac{ s_{ {\cal B } } - t_{ {\cal C} } }{ \sqrt{\delta} } \right),
\end{align*}
which takes 
$
    O\left(d\cdot p^d\cdot (2k+1)^d\right) = \poly\log(n)
$
time, so long as $d=O(1)$ and $\eps=n^{-O(1)}$.

\subsection{Dynamization}\label{sec:dynamization} 
Given our (static) representation of points from the last paragraph, dynamizing the above static  KDE data structure now becomes simple.  
Suppose we add a source vector $s$ in the source box ${\cal B}$. We first update the intermediate variables $A_\alpha(\mathcal{B}), \alpha \leq p$, which takes $O(p^d)$ time. 
So long as the $\ell_1$-norm  of the updated charge-vector $q$ remains polynomial in the norm of the previously maintained vector, namely
\begin{align*}
    \sqrt{\log(\|q^{\mathrm{new}}\|_1)}>\log(\|q\|_1),
\end{align*}
we show that one source box can only affect $(2k+1)^d$ nearest target box ${\cal C}$; otherwise, when the change is super-polynomial, we rebuild the data structure, but this cost is amortized away. Hence, we only need to update $C_\beta({\cal C})$ for those ${\cal C}\in \mathsf{nb}({\cal B})$. Notice that each $C_{\beta}({\cal B}, {\cal C})$ can be updated in $O_d(1)$ time, so each affected $C_\beta({\cal C})$ can also be updated in $O_d(1)$ time. Hence, adding a source vector can be done in time
$
    O((2k+1)^d p^d)  
    =\widetilde{O}_d(1)
$
as before. 
Deleting a source vector follows from a similar procedure.

\subsection{Generalization to Fast-Decaying Kernels}\label{sec:fast_decaying_kernels}
We briefly explain how the dynamic FGT data structure generalizes to more general kernel functions $\k (s,t)=f(\|s-t\|_2)$ where $f$ satisfies the 3 properties in Definition~\ref{def:property_1} (cf. \cite{acss20}).

In the general case, 
\begin{align*}
    G_f(t)=\sum_{{\cal B}} \sum_{j\in {\cal B}}q_j\k(s_j, t).
\end{align*}

Similar to the Gaussian kernel case, we can first show that only near boxes matter:
\begin{align*}
    \sum_{j : \| t - s_j \|_{\infty} \geq k r  } |q_j| \cdot f(\|s-t\|_2) 
\leq  \epsilon
\end{align*}
by the fast-decreasing property
 ({\bf P2}) in Definition~\ref{def:property_1} of $f$ and taking $k=O(\log(\|q\|_1/\epsilon))$.\footnote{Indeed, by property \bf P2 \rm, $f(\Theta(\log(1/\epsilon')))\leq \epsilon'$. Taking $\epsilon':=\epsilon/\|q\|_1$, we get that $f(\|s-t\|_2)\leq \epsilon/\|q\|_1$. Hence, the summation is at most $\epsilon$.} 
Then, we can follow the same ``decoupling'' approach as the Gaussian kernel case to first Hermite expand $G_f(t)$ at the center of each source box and then Taylor expands each Hermite function at the center of the target box. In this way, we can show that
\begin{align*}
    G_f(t)\approx \sum_{\beta \leq p} C_{f,\beta}({\cal C})\left( \frac{ t - t_{\cal C} }{ \sqrt{\delta} } \right)^{\beta}
\end{align*}
where 
\begin{align*}
    C_{f,\beta}({\cal C}) = c_\beta\sum_{{\cal B}\in \mathsf{nb}({\cal C})} \sum_{\alpha \leq p} A_{f,\alpha} ({\cal B}) H_{\alpha + \beta} \left( \frac{ s_{ {\cal B } } - t_{ {\cal C} } }{ \sqrt{\delta} } \right),
\end{align*}
and the approximation error can be bounded since $f$ is truncateable. 
$A_{f,\alpha}({\cal B})$ depends on the kernel function $f$ and can be pre-computed in the preprocessing. Then, each $C_{f,\beta}({\cal C})$ can be computed in poly-logarithmic time. Hence, $G(t)$ can be approximately computed in poly-logarithmic time for any target vector $t$.

\begin{figure*}[ht!]
    \centering
    \includegraphics[width= \linewidth]{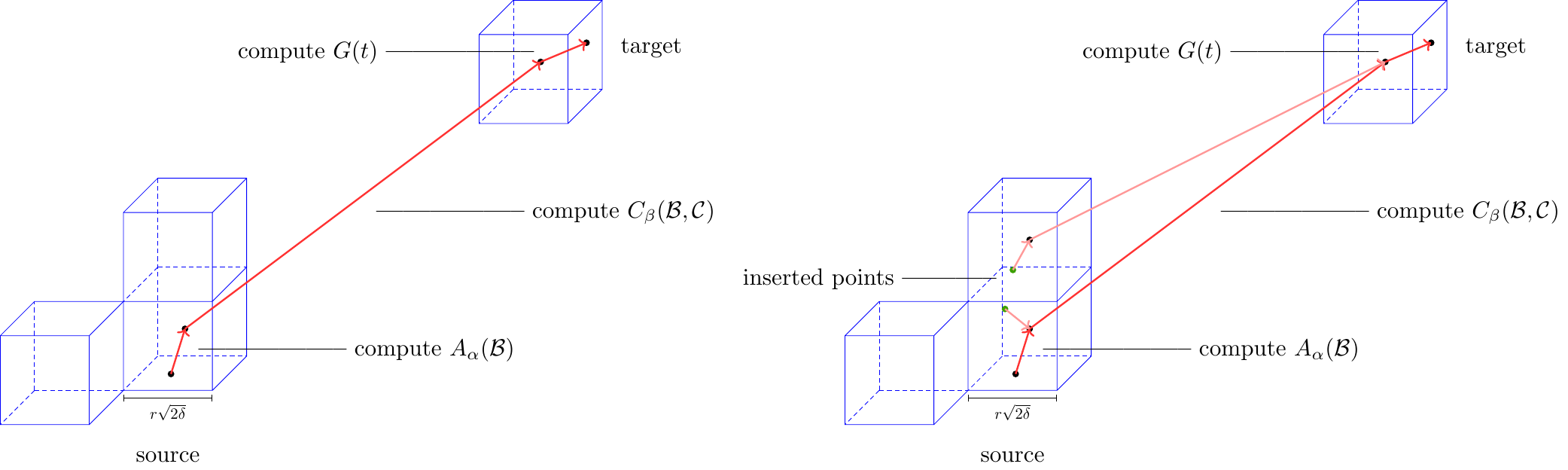}
    \caption{An illustration of inserting two source points with corresponding interactions to the data structure.}
    \label{fig:boxes_ins}
\end{figure*}

\subsection{Handling Points From Low-Dimensional Static Spaces }\label{sec:low_dimensional_space_static}

In many practical problems, the data lies in a low dimensional subspace of $\R^d$. We can first project the data into this subspace and then perform FGT on $\R^w$, where $w$ is the rank. The following lemma shows that FGT can be performed on the projections of the data.

\begin{lemma}[Hermite projection lemma in low-dimensional space, informal version of Lemma~\ref{lem:proj_expansion}]\label{lem:proj_expansion_informal}
Given $B \in \R^{d \times w}$ that defines a $w$-dimensional subspace of $\R^d$, let 
\begin{align*}
    B^\top B = U \Lambda U^\top \in \R^{w \times w}
\end{align*}
denote the spectral decomposition where $U \in \R^{w \times w}$ and a diagonal matrix $\Lambda \in \R^{w \times w}$.

We define 
\begin{align*}
    \P := \Lambda^{-1/2}U^{-1} B^\top \in \R^{w \times d}.
\end{align*}

Then we have for any $t,s\in \R^d$ from subspace $B$, the following equation holds
\begin{align*}
    & ~ e^{- \| t - s \|_2^2/\delta } \\
    = & ~ \sum_{ \alpha \geq 0 } \frac{ ( \sqrt{{1}/{\delta}} \P (t-s) )^{\alpha} }{ \alpha ! } h_{\alpha} ( \sqrt{{1}/{\delta}} \P (t-s) ).
\end{align*}
\end{lemma}

By Lemma~\ref{lem:proj_expansion_informal}, it suffices to divide $\R^w$ instead of $\R^d$ into boxes and conduct Hermite expansion and Taylor expansion on the low-dimensional subspace. More specifically, given the initial source points, we can compute $\P$ by SVD or QR decomposition in $N\cdot w^{\omega-1}$-time\footnote{$\omega\approx 2.372$ is the fast matrix multiplication time exponent.}, which is of smaller order than the FGT's preprocessing time\footnote{In practice, we can run numerical algorithms such as randomized SVD that are very fast for low-rank matrices.}. Then, we can project each point $s_i\in \R^d $ to $x_i:=\P s_i\in \R^w$ for $i\in [N]$. The remaining procedure in preprocessing is the same as before, but directly working on the low-dimensional sources $\{x_1,\dots,x_N\}$.

In the query phase, consider a target point $t$ in the subspace. We are supposed to compute
\begin{align*}
    G(t) \approx \sum_{{\cal B}} \sum_{j \in {\cal B}} q_j \cdot e^{ - \| t - s_j \|_2^2 / \delta }.
\end{align*}
By Lemma~\ref{lem:proj_expansion_informal}, we know that
\begin{align*}
G(t) %
\approx & ~ \sum_{\beta \leq p} C_{\beta}({\cal C}) \left( \frac{ \P( t - t_{\cal C} ) }{ \sqrt{\delta} } \right)^{\beta} \\
= & ~ \sum_{\beta \leq p} C_{\beta}({\cal C}) \left( \frac{ y - y_{\cal C} }{ \sqrt{\delta} } \right)^{\beta},
\end{align*}
where ${\cal C}$ is the target box that contains $t$, $y=\P t$ and $y_{{\cal C}}=\P t_{\cal C}$ projected points. Moreover, for each $\beta\leq p$ and target box ${\cal C}$, we have
\begin{align*}
& ~ C_{\beta}({\cal C}) \\
= &~ \frac{ (-1)^{\| \beta \|_1 } }{ \beta ! } \sum_{{\cal B}} \sum_{\alpha \leq p} A_{\alpha} ({\cal B}) H_{\alpha + \beta} \left( \frac{ \P ( s_{ {\cal B } } - t_{ {\cal C} } ) }{ \sqrt{\delta} } \right),\\
= &~ \frac{ (-1)^{\| \beta \|_1 } }{ \beta ! } \sum_{{\cal B}} \sum_{\alpha \leq p} A_{\alpha} ({\cal B}) H_{\alpha + \beta} \left( \frac{ x_{{\cal B}}-y_{{\cal C}} }{ \sqrt{\delta} } \right).
\end{align*}
Similarly, for each $\alpha\leq p$ and source box ${\cal B}$,
\begin{align*}
    A_{\alpha}  ( {\cal B} ) = \frac{(-1)^{\|\alpha\|_1}}{\alpha !} \sum_{j \in {\cal B}} q_j \cdot \left( \frac{ x_j-x_{\cal B} }{ \sqrt{\delta} } \right)^{\alpha}.
\end{align*}
Therefore, each query is equivalent to being conducted in a $w$-dimensional space using our data structure, which takes $\log^{O(w)}(\|q\|_1/\epsilon)$-time. 

The update can be done in a similar way in the low-dimensional space using the procedure described in Section~\ref{sec:dynamization}. Hence, each update (insertion or deletion) takes $\log^{O(w)}(\|q\|_1/\epsilon)$.

\subsection{Handling Points From Low-Dimensional Dynamic Spaces }\label{sec:low_dimensional_space_dynamic} 

We note that when we add a new source point to the data structure, the intrinsic rank of the data might change by 1 when the point is not in the subspace. For an inserting source point $s$, consider the rank-increasing case, i.e., $(I-\P)s\ne 0$. Then, this new source point contributes to one new basis 
\begin{align*}
    u:=\frac{(I-\P)s}{\|(I-\P)s\|_2}.
\end{align*}
And we can update the projection matrix $\P$ by 
    $\begin{bmatrix}\P & u\end{bmatrix}\in \R^{(w+1)\times d}$.

However, as the subspace is changed, we need to maintain the intermediate variables $A_\alpha({\cal B}), C_{\beta}({\cal C})$. It is easy to observe that for the original projected source and target points or boxes, they can easily be ``lifted'' to the new subspace by setting zero to the $(w+1)$-th coordinate. We show how to update $A_\alpha({\cal B})$ efficiently. For each source box ${\cal B}$ and $\alpha\leq p$, we have
\begin{align*}
    & ~ A_{(\alpha,0)}^{\mathsf{new}}  ( {\cal B} ) \\
    = & ~ \frac{(-1)^{\|\alpha\|_1}\cdot (-1)^i}{\alpha !\cdot i!} \sum_{j \in {\cal B}} q_j \cdot \left( \frac{ x'_j-x'_{\cal B} }{ \sqrt{\delta} } \right)^{(\alpha,i)} \\
    = & ~ A_{\alpha}({\cal B}),
\end{align*}
where $x'_j$ denotes the lifted point.
And $A_{(\alpha,1)}^{\mathsf{new}}({\cal B})=0$ for all $i>0$. Similarly, for each target box ${\cal C}$,
\begin{align*}
    & ~ C_{(\beta,i)}^{\mathsf{new}}({\cal C}) \\
    = & ~ \frac{ (-1)^{\| \beta \|_1 } (-1)^i }{ \beta ! i!} \\
    \cdot & ~ \sum_{{\cal B}} \sum_{\alpha \leq p}\sum_{j=0}^p A_{(\alpha,j)}^{\mathsf{new}} ({\cal B}) H_{(\alpha + \beta,i+j)} \left( \frac{ x'_{{\cal B}}-y'_{{\cal C}} }{ \sqrt{\delta} } \right)\\
    = & ~ \frac{ (-1)^{\| \beta \|_1 } (-1)^i }{ \beta ! i!} \\
    \cdot & ~ \sum_{{\cal B}} \sum_{\alpha \leq p} A_{\alpha}({\cal B}) H_{\alpha + \beta} \left( \frac{ x_{{\cal B}}-y_{{\cal C}} }{ \sqrt{\delta} } \right)\cdot h_i(0)\\
    = & ~ \frac{(-1)^i}{i!}\cdot C_{\beta}({\cal C}).
\end{align*}
Therefore, by enumerating all boxes ${\cal B}, {\cal C}$ and indices $\alpha,\beta\leq p$, we can compute $A_{(\alpha,0)}^{\mathsf{new}}  ( {\cal B} )$ and $C_{(\beta,i)}^{\mathsf{new}}({\cal C})$ in $\log^{O(w)}(\|q\|_1/\epsilon)$-time. 

Then, we just follow the static subspace insertion procedure to insert the new source point $s$. In this way, we obtain a data structure that can handle dynamic low-rank subspaces.

\section{Conclusion and Future Directions}

In this paper, we study the Fast Gaussian Transform (FGT) in a dynamic setting and propose a dynamic data structure to maintain the source vectors that support very fast kernel density estimation, Mat-Vec queries ($\k \cdot q$), as well as updating the source vectors. We further show that the efficiency of our algorithm can be improved when the data points lie in a low-dimensional subspace. Our results are especially valuable when FGT is used in real-world applications with rapidly-evolving datasets, e.g., online regression, federated learning, etc. 

One open problem in this direction is, can we compute $\k q$ in 
\begin{align*}
    O(N)+\log^{O(d)}(N/\epsilon)
\end{align*}
time? Currently, it takes $N\log^{O(d)}(N/\epsilon)$ time even in the static setting. The lower bounds in \cite{acss20} indicate that this improvement is impossible for some ``bad'' kernels $\k$ which are very non-smooth. It remains open when $\k$ is a Gaussian-like kernel. It might be helpful to apply more complicated geometric data structures to maintain the interactions between data points.

Another open problem is, can we fast compute Mat-Vec product or KDE for slowly-decaying kernels? The main difficulty is the current FMM techniques cannot achieve high accuracy when the kernel decays slowly. New techniques might be required to resolve this problem. 

\ifdefined\isarxivversion
\else
\section*{Impact Statement}

In this paper, we present the work which aims at advancing the field of Machine Learning. Although there are potential societal consequences of our work, we believe that it is not necessary to particularly highlight them here.
\fi

\ifdefined\isarxivversion

\else
\bibliographystyle{icml2024}%
\bibliography{ref}
\fi

\newpage
\onecolumn
\appendix
\section*{Appendix}
{\bf Roadmap.} 
In Section~\ref{sec:preli}, we provide several notations and definitions about the Fast Multipole Method. In Section~\ref{sec:results}, we present the formal statement of our main result. In Section~\ref{sec:alg}, we present our data-structures and algorithms. In Section~\ref{sec:proof}, we provide a complete and full for our results. In Section~\ref{sec:error_estimation}, we prove several lemmas to control the error. In Section~\ref{sec:subspace}, we generalize our results to low dimension subspace setting. 

\section{Preliminaries}\label{sec:preli}

We first give a quick overview of the high-level ideas of FMM in Section~\ref{sec:fastmm_overview}. In Section~\ref{sec:fastmm_gaussian_kernel}, we provide a complete description and proof of correctness for the fast Gaussian transform, where the kernel function is the Gaussian kernel. Although a number of researchers have used FMM in the past, most of the previous papers about FMM either focus on low-dimensional or low-error cases. We therefore focus on the superconstant-error, high dimensional case, and carefully analyze the joint dependence on $\eps$ and $d$. We believe that our presentation of the original proof in Section~\ref{sec:fastmm_gaussian_kernel} is thus of independent interest to the community.  

\subsection{FMM Background}\label{sec:fastmm_overview}
We begin with a description of high-level ideas of the Fast Multipole Method (FMM). Let $\k : \R^d \times \R^d \rightarrow \R_+$ denote a kernel function. The inputs to the FMM are $N$ sources $s_1, s_2, \cdots, s_N \in \R^d$  and $M$ targets $t_1, t_2, \cdots, t_M$. For each $i \in [N]$, source $s_i$ has associated  `strength' $q_i$. Suppose all sources are in a `box' ${\cal B}$ and all the targets are in a `box' ${\cal C}$. The goal is to evaluate
\begin{align*}
u_j = \sum_{i=1}^N \k (s_i, t_j ) q_i, ~~~ \forall j \in [M]
\end{align*}
Intuitively, if $\k$ has some nice property (e.g. smooth), we can hope to approximate $\k$ in the following sense:
\begin{align*}
\k(s,t) \approx \sum_{p=0}^{P-1} B_p(s) \cdot C_p(t), ~~~ s \in {\cal B} , t \in {\cal C}
\end{align*}
for some functions $B_p, C_p:\R^d\rightarrow \R$, 
where $P$ is a small positive integer, usually called the \emph{interaction rank} in the literature \cite{cmz15,m19}. 

Now, we can construct $u_i$ in two steps:
\begin{align*}
v_p = \sum_{i \in {\cal B}}  B_p (s_i) q_i, ~~~ \forall p = 0,1, \cdots, P-1,
\end{align*}
and
\begin{align*}
\wt{u}_j = \sum_{p=0}^{P-1} C_p (t_j) v_p, ~~~ \forall i \in [M].
\end{align*}

Intuitively, as long as ${\cal B}$ and ${\cal C}$ are well-separated, then $\wt{u}_j$ is very good estimation to $u_j$ even for small $P$, i.e., $|\wt{u}_j - u_j | < \epsilon$. 

Recall that, at the beginning of this section, we assumed that all the sources are in the the same box ${\cal B}$ and ${\cal C}$. This is not true in general. To deal with this, we can discretize the continuous space into a batch of boxes ${\cal B}_1, {\cal B}_2, \cdots $ and ${\cal C}_1, {\cal C}_2, \cdots $. For a box ${\cal B}_{l_1}$ and a box ${\cal C}_{l_2}$, if they are very far apart, then the interaction between points within them is small, and we can ignore it. If the two boxes are close, then we deal with them efficiently by truncating the high order expansion terms in $\k$ (only keeping the first $\log^{O(d)}(1/\epsilon)$ terms). For each box, we will see that the number of nearby relevant boxes is at most $\log^{O(d)}(1/\epsilon)$.

\subsection{Fast Gaussian Transform}\label{sec:fastmm_gaussian_kernel}

Given $N$ vectors $s_1, \cdots s_N \in \R^d$, $M$ vectors $t_1, \cdots, t_M \in \R^d$ and a strength vector $q \in \R^n$, Greengard and Strain \cite{gs91} provided a fast algorithm for evaluating discrete Gauss transform
\begin{align*}
G(t_i) = \sum_{j=1}^N q_j e^{ - \| t_i - s_j \|^2 / \delta }
\end{align*}
for all $i \in [M]$ in $O(M+N)$ time. In this section, we re-prove the algorithm described in \cite{gs91}, and determine the exact dependence on $\epsilon$ and $d$ in the running time.

Without loss of generality, we can assume that all the sources $s_j$ and targets are belonging to the unit box ${\cal B}_0 = [0,1]^d$. The reason is, if not, we can shift the origin and rescaling $\delta$.

Let $t$ and $s$ lie in $d$-dimensional Euclidean space $\R^d$, and consider the Gaussian 
\begin{align*}
e^{- \| t - s \|_2^2 } = e^{ - \sum_{i=1}^d (t_i - s_i)^2  }
\end{align*}

We begin with some definitions. One important tool we use is the Hermite polynomial, which is a well-known class of orthogonal polynomials with respect to Gaussian measure and widely used in analyzing Gaussian kernels. 
\begin{definition}[One-dimensional Hermite polynomial, \cite{hermite1864}]
The Hermite polynomials $\wt{h}_n : \R \rightarrow \R$ is defined as follows
\begin{align*}
\wt{h}_n(t) = (-1)^n e^{t^2} \frac{ \mathrm{d}^n }{ \mathrm{d} t } e^{-t^2}
\end{align*}
\end{definition}
The first few Hermite polynomials are:
\begin{align*}
    \wt{h}_1(t) = 2t, ~ \wt{h}_2(t) = 4t^2-2, ~ \wt{h}_3(t) = 8t^3 - 12t,~\cdots
\end{align*}
\begin{definition}[One-dimensional Hermite function, \cite{hermite1864}]
The Hermite functions $h_n : \R \rightarrow \R$ is defined as follows
\begin{align*}
h_n(t) = e^{-t^2} \wt{h}_n(t) = (-1)^{n}\frac{\d^n} {\d t} e^{-t^2}
\end{align*}
\end{definition}

We use the following Fact to simplify $e^{-(t-s)^2/\delta}$.
\begin{fact}
For $s_0 \in \R$ and $\delta > 0$, we have
\begin{align*}
e^{ - (t-s)^2 / \delta } = \sum_{n=0}^{\infty} \frac{1}{n !} \cdot \left( \frac{ s - s_0 }{ \sqrt{\delta} } \right)^n \cdot h_n \left( \frac{ t - s_0 }{ \sqrt{\delta} } \right)
\end{align*}
and
\begin{align*}
e^{ - (t-s)^2 / \delta } =  e^{ - (t - s_0)^2 / \delta } \sum_{n=0}^{\infty} \frac{1}{n !} \cdot \left( \frac{ s - s_0 }{ \sqrt{\delta} } \right)^n \cdot \wt{h}_n \left( \frac{t - s_0}{ \sqrt{\delta} } \right) .
\end{align*}
\end{fact}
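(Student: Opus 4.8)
The plan is to reduce both displayed identities to a single Taylor-expansion statement, and then read that statement off directly from the Rodrigues-type definition of the Hermite functions. First I would normalize variables: set $u := (t-s_0)/\sqrt{\delta}$ and $v := (s-s_0)/\sqrt{\delta}$, so that $(t-s)/\sqrt{\delta} = u-v$ and hence $e^{-(t-s)^2/\delta} = e^{-(u-v)^2}$. Since $h_n(x) = e^{-x^2}\wt{h}_n(x)$ by definition, the first claimed identity becomes $e^{-(u-v)^2} = \sum_{n\geq 0}\frac{v^n}{n!}h_n(u)$ and the second becomes $e^{-(u-v)^2} = e^{-u^2}\sum_{n\geq 0}\frac{v^n}{n!}\wt{h}_n(u)$; the latter is just the former with $e^{-u^2}$ factored out, so it suffices to prove the first.

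Next I would fix $u$ and view $F(v) := e^{-(u-v)^2} = \phi(u-v)$, where $\phi(x) := e^{-x^2}$, as a function of $v$ alone. By the chain rule $\frac{\d^n}{\d v^n}F(v) = (-1)^n\,\phi^{(n)}(u-v)$, so evaluating at $v=0$ and using the definition $\wt{h}_n(u) = (-1)^n e^{u^2}\frac{\d^n}{\d u^n}e^{-u^2}$ gives
\begin{align*}
\frac{\d^n}{\d v^n}F(v)\Big|_{v=0} = (-1)^n\,\phi^{(n)}(u) = (-1)^n\,\frac{\d^n}{\d u^n}e^{-u^2} = (-1)^n(-1)^n e^{-u^2}\wt{h}_n(u) = h_n(u).
\end{align*}
Because $\phi$ is entire, $v\mapsto \phi(u-v)$ is entire too, so its Taylor series about $v=0$ converges to it for every real (indeed complex) $v$; substituting the coefficients just computed yields $e^{-(u-v)^2} = \sum_{n\geq 0}\frac{h_n(u)}{n!}v^n$, and undoing the substitution recovers both displayed formulas.

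I expect no genuine obstacle here: the only points that need care are tracking the sign in the chain rule (the two factors of $(-1)^n$ cancel) and explicitly invoking the entireness of $v\mapsto e^{-(u-v)^2}$ to guarantee the series converges on all of $\R$ — which is exactly the property the Fast Gaussian Transform later exploits when it truncates the series with a controlled tail bound. An equivalent shortcut is to cite the classical Hermite generating function $\sum_{n\geq 0}\frac{\wt{h}_n(u)}{n!}v^n = e^{2uv - v^2}$ together with the algebraic identity $2uv - v^2 = u^2 - (u-v)^2$.
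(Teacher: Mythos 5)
Your proof is correct, and it is essentially the same argument the paper intends: expand $v \mapsto e^{-(u-v)^2}$ in its (everywhere-convergent) Taylor series about $v=0$ and identify the coefficients with $h_n(u)$ via the Rodrigues definition, with the second identity following by factoring out $e^{-u^2}$. Your write-up simply makes the sign bookkeeping and the convergence justification explicit, and the generating-function shortcut you mention at the end is a standard equivalent route.
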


\begin{lemma}[Cramer's inequality for one-dimensional, \cite{h26}]\label{lem:cramer}
For any $K < 1.09$,
\begin{align*}
    |\wt{h}_n(t)| \leq K 2^{n/2}\sqrt{n!} e^{t^2/2}.
\end{align*}
\end{lemma}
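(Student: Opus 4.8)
The plan is to prove Cram\'er's bound by the generating-function / saddle-point method, turning it into a single one-variable contour integral that must be estimated sharply. Recall the Hermite generating identity $e^{2tz-z^{2}}=\sum_{n\ge0}\frac{\wt h_n(t)}{n!}z^{n}$, which is entire in $z$; Cauchy's formula therefore gives, for every radius $r>0$,
\begin{align*}
\wt h_n(t) & = \frac{n!}{2\pi i}\oint_{|z|=r}\frac{e^{2tz-z^{2}}}{z^{n+1}}\,\d z \\
& = \frac{n!}{2\pi r^{n}}\int_{-\pi}^{\pi}e^{-in\theta}\,e^{2tre^{i\theta}-r^{2}e^{2i\theta}}\,\d\theta ,
\end{align*}
so that $|\wt h_n(t)|\le\frac{n!}{2\pi r^{n}}\int_{-\pi}^{\pi}e^{\phi_{r,t}(\theta)}\,\d\theta$ with $\phi_{r,t}(\theta):=2tr\cos\theta-r^{2}\cos2\theta$. (One could instead start from the real integral representation $\wt h_n(t)=\frac{2^{n}}{\sqrt\pi}\int_{\R}(t+iy)^{n}e^{-y^{2}}\,\d y$, but the contour version is easier to optimize.) Everything now reduces to choosing $r$ and bounding $\int e^{\phi_{r,t}}$ \emph{without} merely replacing it by $2\pi$ times its maximum — the latter loses a factor $\Theta(n^{1/4})$ and is not enough.

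First I would take $r=\sqrt{n/2}$, the common modulus of the two saddle points $z_{\pm}=\tfrac{1}{2}(t\pm i\sqrt{2n-t^{2}})$ of $2tz-z^{2}-n\log z$; on the circle $|z|=r$ they sit at angles $\pm\theta_{0}$ with $\cos\theta_{0}=t/\sqrt{2n}$. Writing $u=\cos\theta$ one has $\phi_{r,t}=-2r^{2}u^{2}+2tru+r^{2}$, so for $|t|\le\sqrt{2n}$ the maximum over $\theta$ is at $u=t/(2r)$ with value $\phi_{r,t}(\theta_{0})=\tfrac{t^{2}}{2}+\tfrac{n}{2}$ and $\phi_{r,t}''(\theta_{0})=-(2n-t^{2})$. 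A Laplace estimate localized around $\pm\theta_{0}$ gives $\int_{-\pi}^{\pi}e^{\phi_{r,t}(\theta)}\,\d\theta\le\frac{C}{\sqrt{2n-t^{2}}}\,e^{t^{2}/2+n/2}$ for an explicit $C$, and substituting this together with $r^{-n}=(2/n)^{n/2}$ and Stirling's formula $n!=\sqrt{2\pi n}(n/e)^{n}(1+o(1))$ collapses the bound into $K_{n,t}\cdot2^{n/2}\sqrt{n!}\,e^{t^{2}/2}$; in the ``bulk'' $|t|\le(1-c)\sqrt{2n}$ this yields $K_{n,t}=O(n^{-1/4})$, comfortably below $1$.

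Two further regimes I would treat separately. For $|t|\ge\sqrt{2n+1}$ I would switch to $\psi_n(t):=\wt h_n(t)e^{-t^{2}/2}$, which solves $\psi_n''=(t^{2}-2n-1)\psi_n$; then $w:=\psi_n^{2}$ satisfies $w''=2(\psi_n')^{2}+2(t^{2}-2n-1)\psi_n^{2}\ge0$ on $(\sqrt{2n+1},\infty)$, so $w$ is convex there and tends to $0$ at infinity, hence non-increasing, which reduces the estimate to its value at $|t|=\sqrt{2n+1}$. The finitely many small $n$ for which Stirling is not yet accurate I would verify by hand, starting from $\wt h_0\equiv1$ (so $|\psi_0|=e^{-t^{2}/2}\le1$), and using $\wt h_n(-t)=(-1)^{n}\wt h_n(t)$ to restrict to $t\ge0$ throughout.

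The main obstacle is the turning-point window $|t|$ near $\sqrt{2n}$: there $z_{+}$ and $z_{-}$ coalesce, $\theta_{0}\to0$ and $\phi_{r,t}''(\theta_{0})\to0$, so the quadratic Laplace approximation above degenerates — $\int e^{\phi_{r,t}}$ is then of order $n^{-1/4}e^{\phi_{\max}}$ rather than $n^{-1/2}e^{\phi_{\max}}$, which is exactly the extra $\Theta(n^{1/4})$ that turns $K_{n,t}$ into an honest positive constant. By the Plancherel--Rotach asymptotics the supremum of $|\psi_n|$ is attained precisely in this window, so this is where the sharp constant $K\approx1.0864$ lives, and pushing the bound below the stated $1.09$ \emph{uniformly} in $t$ requires replacing the naive Gaussian Laplace step by a uniform Airy-type saddle estimate across the coalescence. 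If one only needs $K=O(1)$ — which is all the downstream FGT error analysis uses — a cruder but uniform bound on $\int e^{\phi_{r,t}}$ that does not separate the two saddles already suffices, and one may alternatively just invoke the classical estimate of \cite{h26}.
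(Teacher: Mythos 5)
The paper does not prove this lemma at all: it is stated as a classical fact and attributed to \cite{h26}, so there is no ``paper proof'' to compare against. Treating your proposal as a standalone attempt: the saddle-point framework you set up is correct as far as it goes. The generating-function identity, the Cauchy contour step, the parametrization $\phi_{r,t}(\theta)=2tr\cos\theta-r^{2}\cos2\theta=-2r^{2}u^{2}+2tru+r^{2}$ with $u=\cos\theta$, the choice $r=\sqrt{n/2}$, the saddle value $t^2/2+n/2$, and the Hessian $\phi_{r,t}''(\theta_0)=-(2n-t^2)$ all check out, and the convexity trick $w=\psi_n^2$, $w''\ge0$ on $(\sqrt{2n+1},\infty)$ for the forbidden region is a clean and correct way to dispose of large $|t|$.

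However, there is a genuine gap, and you yourself name it: the bound is supposed to be uniform in $t$ with the specific constant $K<1.09$, and the supremum over $t$ of $|\wt h_n(t)|/(2^{n/2}\sqrt{n!}\,e^{t^2/2})$ is attained precisely in the turning-point window $|t|=\sqrt{2n}+O(n^{-1/6})$, where your quadratic Laplace estimate degenerates because $\phi_{r,t}''(\theta_0)\to0$ as the two saddles coalesce. In that window the integral is $\Theta(n^{-1/4})e^{\phi_{\max}}$ rather than $\Theta(n^{-1/2})e^{\phi_{\max}}$, and after absorbing Stirling factors this is exactly what produces an $O(1)$ constant rather than an $o(1)$ one; so the entire content of the sharp inequality lives in the step you have not carried out. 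Outlining that one ``would'' do a uniform Airy-type estimate is not the same as doing it, and without it the argument proves $K_{n,t}=O(n^{-1/4})$ only in the bulk and nothing uniform near the edge. If one is willing to settle for an unspecified absolute constant $K$, then as you note a cruder global bound on $\int e^{\phi_{r,t}}\,\d\theta$ (e.g.\ bounding by $2\pi e^{\phi_{\max}}$ and absorbing the resulting $\Theta(n^{1/4})$ loss into $K$ fails, but a one-dimensional Laplace estimate kept uniform over all $\theta_0\in[0,\pi/2]$ does not fail; the point is that this still requires care at $\theta_0=0$). In short: the structure is right, the bulk and tail regimes are handled, but the edge regime --- which is where the constant is decided --- is a placeholder, so the proof as written does not establish the stated bound with the stated constant.
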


Using Cramer's inequality (Lemma~\ref{lem:cramer}), we have the following standard bound.
\begin{lemma}\label{lem:cramer_univariate}
For any constant $K < 1.09$,  
we have
\begin{align*}
| h_n(t) | \leq K \cdot 2^{n/2} \cdot \sqrt{n !} \cdot e^{-t^2/2} .
\end{align*}
\end{lemma}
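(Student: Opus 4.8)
The plan is to derive this directly from Cramer's inequality (Lemma~\ref{lem:cramer}) together with the defining relation between the Hermite function and the Hermite polynomial. First I would recall that, by definition, $h_n(t) = e^{-t^2}\,\wt{h}_n(t)$, so that $|h_n(t)| = e^{-t^2}\,|\wt{h}_n(t)|$. Then I would invoke Lemma~\ref{lem:cramer}, which for any $K < 1.09$ gives $|\wt{h}_n(t)| \leq K\, 2^{n/2} \sqrt{n!}\, e^{t^2/2}$.

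Substituting this bound yields
\begin{align*}
    |h_n(t)| = e^{-t^2}\,|\wt{h}_n(t)| \leq e^{-t^2}\cdot K\, 2^{n/2}\sqrt{n!}\, e^{t^2/2} = K\, 2^{n/2}\sqrt{n!}\, e^{-t^2/2},
\end{align*}
using only the arithmetic identity $e^{-t^2}\cdot e^{t^2/2} = e^{-t^2/2}$. This is exactly the claimed inequality, and it holds for every $t \in \R$ and every $n \geq 0$.

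There is essentially no obstacle here: the statement is a one-line consequence of Lemma~\ref{lem:cramer} once the substitution $h_n = e^{-t^2}\wt{h}_n$ is made. The only nontrivial ingredient — the Gaussian-weighted growth bound on $\wt{h}_n$ — is precisely Cramer's inequality, which we take as given. The lemma is stated separately because this Gaussian-decaying form of the bound (rather than the polynomial form) is what will be convenient when truncating the Hermite and Taylor expansions in the subsequent error analysis.
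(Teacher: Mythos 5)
Your proof is correct and matches the paper's intended derivation: the paper introduces Lemma~\ref{lem:cramer_univariate} immediately after Cramer's inequality with the remark that it is the ``standard bound'' that follows, and your substitution $h_n(t) = e^{-t^2}\wt{h}_n(t)$ into Lemma~\ref{lem:cramer} is exactly that one-line argument. No gap.
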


Next, we will extend the above definitions and observations to the high dimensional case. To simplify the discussion, we define multi-index notation. A multi-index $\alpha = (\alpha_1,\alpha_2, \cdots, \alpha_d)$ is a $d$-tuple of nonnegative integers, playing the role of a multi-dimensional index. For any multi-index $\alpha \in \R^d$ and any $t\in \R^t$, we write
\begin{align*}
\alpha ! = ~ \prod_{i=1}^d (\alpha_i ! ), ~~~~
t^{\alpha} = ~ \prod_{i=1}^d t_i^{\alpha_i} , ~~~~
D^{\alpha} = ~ \partial_1^{\alpha_1} \partial_2^{\alpha_2} \cdots \partial_d^{\alpha_d} .
\end{align*}
where $\partial_i$ is the differential operator with respect to the $i$-th coordinate in $\R^d$. \blue{For integer $p$, we say $\alpha \leq p$ if $\alpha_i \leq p, \forall i \in [d]$; and we say $\alpha \geq p$ if $\alpha_i \geq p, \exists i \in [d]$. We use these definitions to guarantee that $\{\alpha\leq p\}\cup \{\alpha \geq p\}=\N^d$.}

We can now define multi-dimensional Hermite polynomial:
\begin{definition}[Multi-dimensional Hermite polynomial, \cite{hermite1864}]
We define function $\wt{H}_{\alpha} : \R^d \rightarrow \R$ as follows:
\begin{align*}
\wt{H}_{\alpha}(t) = \prod_{i=1}^d \wt{h}_{\alpha_i} (t_i) .
\end{align*}
\end{definition}

\begin{definition}[Multi-dimensional Hermite function, \cite{hermite1864}]\label{def:multi_dim_hermite}
We define function $H_{\alpha} : \R^d \rightarrow \R$ as follows:
\begin{align*}
H_{\alpha}(t) = \prod_{i=1}^d h_{\alpha_i}(t_i). 
\end{align*}
It is easy to see that $ H_{\alpha}(t) = e^{ - \| t \|_2^2 } \cdot \wt{H}_{\alpha}(t)$
\end{definition}

The Hermite expansion of a Gaussian in $\R^d$ is
\begin{align}\label{eq:hermite_expansion_of_gaussian}
e^{- \| t - s \|_2^2 } = \sum_{ \alpha \geq 0 } \frac{ ( t - s_0 )^{\alpha} }{ \alpha ! } h_{\alpha} ( s - s_0 ).
\end{align}
Cramer's inequality generalizes to
\begin{lemma}[Cramer's inequality for multi-dimensional case,  \cite{gs91,acss20}]\label{lem:cramer_inequality}
Let $K<(1.09)^d$, then
\begin{align*}
 | \wt{H}_{\alpha} (t) | \leq K \cdot e^{ \| t \|_2^2 / 2 } \cdot 2^{ \| \alpha \|_1 /2 } \cdot \sqrt{ \alpha ! } 
\end{align*}
and 
\begin{align*}
 | H_{\alpha} (t) | \leq K \cdot e^{ - \| t \|_2^2 / 2 } \cdot 2^{ \| \alpha \|_1 /2 } \cdot \sqrt{ \alpha ! } .
\end{align*}
\end{lemma}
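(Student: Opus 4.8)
The plan is to derive the multi-dimensional Cramer inequality (Lemma~\ref{lem:cramer_inequality}) from the one-dimensional version (Lemma~\ref{lem:cramer}) by a straightforward product argument, using the tensor-product definition of $\wt{H}_\alpha$ and $H_\alpha$. First I would write, for $t = (t_1,\dots,t_d) \in \R^d$ and the multi-index $\alpha = (\alpha_1,\dots,\alpha_d)$,
\begin{align*}
|\wt{H}_\alpha(t)| = \prod_{i=1}^d |\wt{h}_{\alpha_i}(t_i)|,
\end{align*}
which is immediate from the definition $\wt{H}_\alpha(t) = \prod_{i=1}^d \wt{h}_{\alpha_i}(t_i)$.

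Next I would apply the one-dimensional Cramer bound (Lemma~\ref{lem:cramer}) to each factor. Picking a constant $K_1 < 1.09$, we have $|\wt{h}_{\alpha_i}(t_i)| \leq K_1 \, 2^{\alpha_i/2} \sqrt{\alpha_i!}\, e^{t_i^2/2}$ for every $i \in [d]$. Multiplying these $d$ bounds together gives
\begin{align*}
|\wt{H}_\alpha(t)| \leq K_1^d \cdot \prod_{i=1}^d 2^{\alpha_i/2} \cdot \prod_{i=1}^d \sqrt{\alpha_i!} \cdot \prod_{i=1}^d e^{t_i^2/2}
= K_1^d \cdot 2^{\|\alpha\|_1/2} \cdot \sqrt{\alpha!} \cdot e^{\|t\|_2^2/2},
\end{align*}
where I have used $\sum_i \alpha_i = \|\alpha\|_1$, $\prod_i \alpha_i! = \alpha!$, and $\sum_i t_i^2 = \|t\|_2^2$ from the multi-index notation. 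Setting $K := K_1^d < (1.09)^d$ yields the first claimed inequality. For the second inequality, I would use the identity $H_\alpha(t) = e^{-\|t\|_2^2} \wt{H}_\alpha(t)$ stated in the definition of the multi-dimensional Hermite function, so that $|H_\alpha(t)| = e^{-\|t\|_2^2} |\wt{H}_\alpha(t)| \leq K \cdot e^{-\|t\|_2^2} \cdot e^{\|t\|_2^2/2} \cdot 2^{\|\alpha\|_1/2} \sqrt{\alpha!} = K \cdot e^{-\|t\|_2^2/2} \cdot 2^{\|\alpha\|_1/2} \sqrt{\alpha!}$, as desired.

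There is essentially no substantive obstacle here: the proof is a bookkeeping exercise in the multi-index conventions, and the only point requiring any care is the exponential growth of the constant — the bound degrades from $K < 1.09$ in one dimension to $K < (1.09)^d$ in $d$ dimensions, which is exactly why the final FGT running time carries a $\log^{O(d)}$ (rather than $\poly\log$) factor. I would make sure the statement is phrased with the constant $K < (1.09)^d$ so that downstream truncation-error lemmas (Lemma~\ref{lem:fast_gaussian_lemma_1}, Lemma~\ref{lem:fast_gaussian_lemma_3}) can absorb this factor into the choice of truncation order $p = \log(\|q\|_1/\epsilon)$.
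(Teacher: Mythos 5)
Your proof is correct and is the standard tensor-product argument that the cited references \cite{gs91,acss20} use (the paper itself states the lemma without proof). The coordinate-wise application of the one-dimensional Cramer bound followed by multiplying the $d$ factors, and the identity $H_\alpha(t) = e^{-\|t\|_2^2}\wt{H}_\alpha(t)$ for the second inequality, are exactly right.
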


The Taylor series of $H_{\alpha}$ is
\begin{align}\label{eq:taylor_series_of_H_t}
H_{\alpha}(t) = \sum_{\beta \geq 0} \frac{ (t-t_0)^{\beta} }{ \beta ! } (-1)^{\| \beta \|_1} H_{\alpha + \beta} (t_0).
\end{align}

\newpage
\section{Our Result}\label{sec:results}

\begin{figure*}[ht!]
    \centering
    \includegraphics[width=0.9\textwidth]{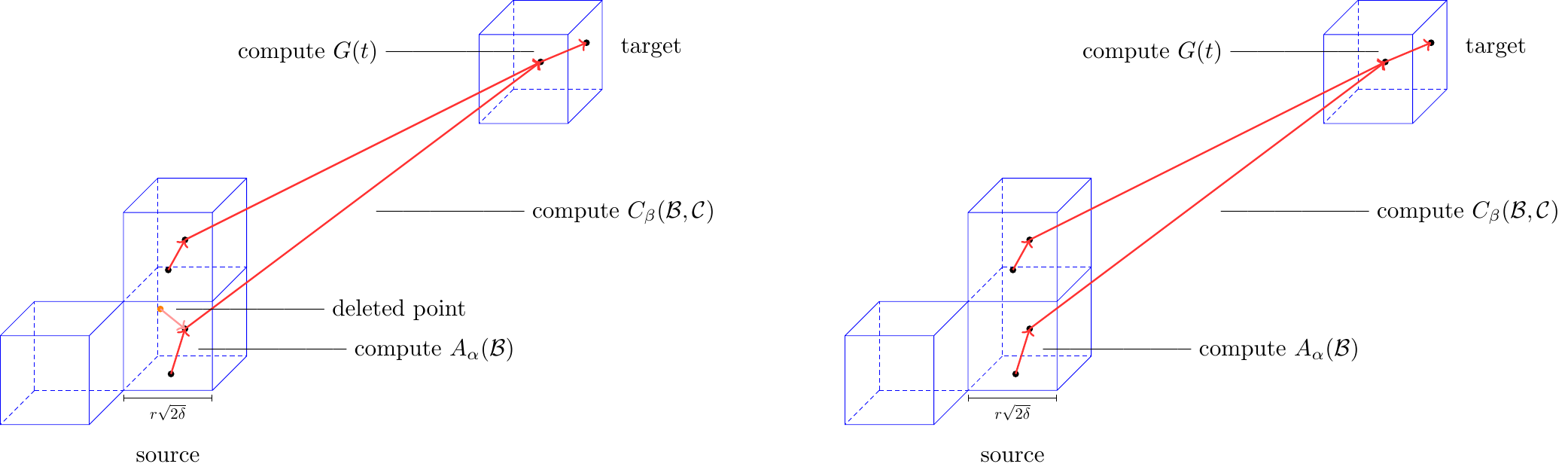}
    \caption{An illustration of deleting a source point from the data structure.}
    \label{fig:boxes_del}
\end{figure*}

\subsection{Properties of Kernel Function}
 
\cite{acss20} identified the three key properties of kernel functions $\k(s,t) = f (\|s-t\|_2)$ which allow sub-quadratic matrix-vector multiplication via the fast Multipole method. Our dynamic algorithm will work for any kernel satisfying these properties. 
\begin{definition}[Properties of general kernel function, \cite{acss20}]\label{def:property_1}
We define the following properties of the function $f: \R \rightarrow \R_+$: 
\begin{itemize}
    \item {\bf P1:} $f$ is non-increasing, i.e., $f(x)\leq f(y)$ when $x\geq y$.
    \item {\bf P2:} $f$ is decreasing fast, i.e., $f(\Theta(\log(1/\epsilon)))\leq \epsilon$.
    \item {\bf P3:} $f$’s Hermite expansion and Taylor expansion are truncateable: the truncation error of the first $(\log^d(1/\epsilon))$ terms in the Hermite and Taylor expansion of $\k$ is at most $\epsilon$.
\end{itemize}
\end{definition}

\begin{remark}\label{def:property_2}
We note that {\bf P3} can be replaced with the following more general property: 
\begin{itemize}
    \item {\bf P4:} $\k: \R^d \times \R^d \rightarrow \R$ is  \emph{$\{\phi_\alpha\}_{\alpha \in \N^d}$-expansionable}: there exist constants $c_\alpha$ that only depend on $\alpha\in \N^d$ and functions $\phi_\alpha: \R^d\rightarrow \R$ such that 
    \begin{align*}
        \k(s,t) = \sum_{\alpha \in \N^d} c_\alpha \cdot (s-s_0)^\alpha  \cdot \phi_\alpha (t-s_0)
    \end{align*}
    for any $s_0\in \R^d$ and $s$ close to $s_0$. Moreover, the truncation error of the first $(\log^d(1/\epsilon))$ terms is $\leq \epsilon$.
\end{itemize}
\end{remark}

\begin{remark}
Two examples of kernels that satisfy Properties 1 and 2 are: 
\begin{itemize}
    \item $\k (s,t)=e^{-\alpha\|s-t\|^2}$ for any $\alpha\in \R_+$.
    \item $\k (s,t)=e^{-\alpha\|s-t\|^{2p}}$ for any $p\in \N_+$.
\end{itemize}
\end{remark}

\subsection{Dynamic FGT}\label{subsec_dyn_FGT}

In this section, we present our main result. We first define the dynamic density-estimation maintenance problem with respect to the Gaussian kernel.

\begin{algorithm}[!t]\caption{Informal version of Algorithm~\ref{alg:fmm_init}, \ref{alg:fmm_query}, \ref{alg:fmm_insert} and \ref{alg:fmm_delete}.}
\begin{algorithmic}[1]
\State {\bf data structure} \textsc{DynamicFGT} \Comment{Theorem~\ref{thm:fmm_online}}
\State {\bf members} 
    \State \hspace{4mm} $A_{\alpha}(\mathcal{B}_{k}), k \in [N(B)], \alpha \leq p$
    \State \hspace{4mm} $C_{\beta}(\mathcal{C}_k), k \in [N(C)], \beta \leq p$
    \State \hspace{4mm} $t_{\mathcal{C}_k}, k \in [N(C)]$
    \State \hspace{4mm} $s_{\mathcal{B}_k}, k \in [N(B)]$
\State {\bf end members}

\vspace{1em}
\Procedure{Update}{$s \in \R^d, q \in \R$} \Comment{Informal version of Algorithm~\ref{alg:fmm_insert} and \ref{alg:fmm_delete}}
    \State Find the box $s \in \mathcal{B}_{k}$
    \State Update  $A_{\alpha}(\mathcal{B}_{k})$ for all $\alpha \leq p$
    \State Find $(2k+1)^d$ nearest target boxes to $\mathcal{B}_k$, denote by $\mathsf{nb}(\mathcal{B}_k)$
    \Comment{$k \leq \log(\|q\|_1/\epsilon)$}
    \For{$\mathcal{C}_l \in \mathsf{nb}(\mathcal{B}_k)$}
    \State Update $C_{\beta}(\mathcal{C}_l)$ for all $\beta \leq p$
    \EndFor
\EndProcedure

\vspace{1em}
\Procedure{KDE-Query}{$t\in \R^d$} \Comment{Informal version of Algorithm~\ref{alg:fmm_query}}
    \State Find the box $t\in {\cal C}_k$
    \State $\widetilde{G}(t)\gets \sum_{\beta\leq p} C_\beta({\cal C}_k)((t-t_{{\cal C}_k})/\sqrt{\delta})^\beta$
\EndProcedure
\State {\bf end data structure}
\end{algorithmic}
\end{algorithm}

\begin{definition}[Dynamic FGT Problem]
We wish to design a data-structure that efficiently supports any sequence of the following operations: 
\begin{itemize}
    \item \textsc{Init($S \subset \R^d, q \in \R^{|S|},\epsilon \in \R$)} Let $N = |S|$. The data structure is given $N$ source points $s_1,\cdots,s_N \in \R^d$ with their charge $q_1,\cdots,q_N \in \R$.  
    \item \textsc{Insert}$(s \in \R^d,~q_s\in \R)$ Add the source point $s$ with its charge $q_s$ to the point set $S$.

    \item
     \textsc{Delete}$(s \in \R^d)$ Delete $s$ (and its charge $q_s$) from the point set $S$. 
    \item
    \textsc{KDE-Query}$(t\in \R^d)$ Output $\wt{G}$ such that $G(t)-\eps \leq \wt{G} \leq G(t) + \eps$.
    
\end{itemize}
\end{definition}

\vspace{.2cm} 

\indent The main result of this paper is a fully-dynamic data structure supporting all of the above operations in \emph{polylogarithmic} time: 

\begin{theorem}[Dynamic FGT Data Structure]\label{thm:fmm_online}
Given $N$ vectors $S = \{ s_1, \cdots, s_N \} \subset \R^d$,  
a number $\delta > 0$, and a vector $q \in \R^N$, let $G : \R^d \rightarrow \R$ be defined as $G(t) = \sum_{i=1}^N q_i \cdot \k(s_i, t)$ denote the \emph{kernel-density} of $t$ with respect to $S$, where $\k(s_i,t)=f(\|s_i-t\|_2)$ for $f$ satisfying the properties in Definition~\ref{def:property_1} . 
There is a dynamic data structure  
that supports the following operations:
\begin{itemize}
    \item \textsc{Init}$()$ (Algorithm~\ref{alg:fmm_init}) Preprocess in $ N \cdot \log^{O(d)} ( \| q \|_1 / \epsilon ) $ time.
    \item \textsc{KDE-Query}$(t\in \R^d)$ (Algorithm~\ref{alg:fmm_query}) Output $\wt{G}$ such that $G(t)-\eps \leq \wt{G} \leq G(t) + \eps$ in $\log^{O(d)} ( \| q \|_1 / \epsilon ) $ time.
    \item \textsc{Insert}$(s \in \R^d, q_s \in \R)$ (Algorithm~\ref{alg:fmm_insert})  For any source point $s \in \R^d$ and its charge $q_s$, update the data structure by adding this source point in $\log^{O(d)} ( \| q \|_1 / \epsilon ) $ time.  
    \item \textsc{Delete}$(s \in \R^d)$ (Algorithm~\ref{alg:fmm_delete}) For any source point $s \in \R^d$ and its charge $q_s$, update the data structure by deleting this source point in $\log^{O(d)} ( \| q \|_1 / \epsilon ) $ time.  
    \item \textsc{Query}$(q\in \R^N)$ (Algorithm~\ref{alg:fmm_query}) Output $\widetilde{\k  q}\in \R^N$ such that $\|\widetilde{\k q}-\k q\|_\infty\leq \epsilon$, where $\k\in \R^{N\times N}$ is defined by $\k_{i,j}=\k(s_i,s_j)$ in $N\log^{O(d)}(\|q\|_1/\epsilon)$ time. 
\end{itemize}
\end{theorem}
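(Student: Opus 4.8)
The plan is to assemble the data structure by combining the ``decoupled'' representation developed in the Technical Overview with the standard box-subdivision of the Fast Gaussian Transform, and to verify that each advertised operation meets its time bound and accuracy guarantee. First I would set up the spatial data structure: subdivide $\mathcal{B}_0 = [0,1]^d$ into boxes of side length $L = r\sqrt{2\delta}$ (rescaling so all points lie in $\mathcal{B}_0$), and maintain, via a balanced search tree or hash table keyed by box coordinates, the nonempty source boxes $\mathcal{B}_k$ and target boxes $\mathcal{C}_k$, together with the stored coefficients $A_\alpha(\mathcal{B}_k)$ for $\alpha \le p$ (defined as in Eq.~\eqref{def:A_alpha_cal_B}, or its $f$-analogue $A_{f,\alpha}$) and $C_\beta(\mathcal{C}_k)$ for $\beta \le p$, with $p = k = \Theta(\log(\|q\|_1/\epsilon))$. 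The \textsc{Init} procedure computes every $A_\alpha(\mathcal{B})$ in $O(p^d N_{\mathcal{B}})$ time and then accumulates $C_\beta(\mathcal{C})$ by summing Hermite-to-Taylor transfer contributions over the $(2k+1)^d$ source boxes in $\mathsf{nb}(\mathcal{C})$; summing over all boxes gives the claimed $N \cdot \log^{O(d)}(\|q\|_1/\epsilon)$ bound.

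Next I would handle \textsc{KDE-Query}$(t)$: locate the target box $\mathcal{C}_k$ containing $t$ (creating it and computing its $C_\beta$ on the fly, in $O(d\,p^d(2k+1)^d)$ time, if it is empty), then output $\widetilde G(t) = \sum_{\beta \le p} C_\beta(\mathcal{C}_k)\big((t - t_{\mathcal{C}_k})/\sqrt\delta\big)^\beta$. Correctness follows by chaining the two truncation-error bounds already established in the excerpt: the box-truncation error (Eq.~\eqref{eq:box_error}, or its \textbf{P2}-based analogue for general $f$), the Hermite-truncation error $|\Err_H(p)| \le \epsilon$ (Lemma~\ref{lem:fast_gaussian_lemma_1}), and the Taylor-truncation error $|\Err_T(p)| \le \epsilon$ (Lemma~\ref{lem:fast_gaussian_lemma_3}); rescaling $\epsilon$ by a constant makes the total at most $\epsilon$. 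For \textsc{Insert}$(s, q_s)$, I would locate (or create) the box $\mathcal{B}_k \ni s$, update each $A_\alpha(\mathcal{B}_k)$ by adding $q_s g(\alpha)\big((s - s_{\mathcal{B}_k})/\sqrt\delta\big)^\alpha$ in $O(p^d)$ time, and then for each $\mathcal{C}_l \in \mathsf{nb}(\mathcal{B}_k)$ update $C_\beta(\mathcal{C}_l)$ in $O(p^d)$ time apiece, for a total of $O((2k+1)^d p^d) = \log^{O(d)}(\|q\|_1/\epsilon)$. \textsc{Delete} is symmetric (subtracting the same increments). Finally, \textsc{Query}$(q)$ is obtained by observing that $(\k q)_j = G(s_j)$, so one runs a fresh \textsc{Init} with charge vector $q$ and calls \textsc{KDE-Query} at each $s_j$, giving $N \cdot \log^{O(d)}(\|q\|_1/\epsilon)$ time and $\ell_\infty$-error $\le \epsilon$.

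The main obstacle I anticipate is the bookkeeping around the dependence of $p$ and $k$ on $\|q\|_1$: since the $\ell_1$-norm of the maintained charge vector changes under insertions and deletions, the truncation orders are not static, and I must argue that as long as $\|q^{\mathrm{new}}\|_1$ stays within a polynomial factor of the value used to set $p$ (equivalently $\sqrt{\log\|q^{\mathrm{new}}\|_1} > \log\|q\|_1$, as noted in the overview), the existing coefficients remain valid with only $O((2k+1)^d p^d)$ incremental work, whereas a super-polynomial change triggers a full rebuild whose cost amortizes to $\log^{O(d)}(\|q\|_1/\epsilon)$ per update. A secondary subtlety is confirming that one source box influences only its $(2k+1)^d$ nearest target boxes (and vice versa) --- this is exactly the content of the box-truncation bound, but it must be invoked carefully so that inserting a point never forces updates to more than $\log^{O(d)}(\cdot)$ many $C_\beta(\mathcal{C})$ arrays, and that newly created target boxes far from all sources can be answered with $\widetilde G \equiv 0$ up to error $\epsilon$. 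For the general kernel $f$, the only additional check is that Properties \textbf{P1}--\textbf{P3} (or \textbf{P4}) supply exactly the three ingredients used above: \textbf{P1}--\textbf{P2} for the box-truncation step, and \textbf{P3}/\textbf{P4} for the truncateability of the Hermite and Taylor expansions, so the entire argument goes through verbatim with $A_{f,\alpha}$, $C_{f,\beta}$, and constants $c_\alpha$ in place of the Gaussian-specific quantities.
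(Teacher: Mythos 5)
Your proposal is correct and follows essentially the same route as the paper's proof: box-subdivision of $[0,1]^d$ into cells of side $r\sqrt{2\delta}$, maintaining the source-side Hermite coefficients $A_\alpha(\mathcal B)$ and the target-side Taylor coefficients $C_\beta(\mathcal C)$ via the Hermite-to-Taylor transfer (Lemma~\ref{lem:fast_gaussian_lemma_3}), $(2k+1)^d$-locality of box interactions, and reduction of \textsc{Query} to $N$ \textsc{KDE-Query} calls after an \textsc{Init} with the given charge vector. You even go slightly beyond the paper's written proof by flagging the dependence of the truncation order $p$ on $\|q\|_1$ and the amortized-rebuild remedy (discussed only in the paper's technical overview, not in the formal proof section), and by noting that target boxes can be created lazily at query time.
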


\begin{remark}
The \textsc{Query} time can be further reduced when the change of the charge vector $q$ is sparsely changed between two consecutive queries. More specifically, let $\Delta := \|q^{\new}-q\|_0$ be the number of changed coordinates of $q$. Then, $\textsc{Query}$ can be done in $\widetilde{O}_d(\Delta)$ time. 
\end{remark}

\clearpage
\newpage
\section{Algorithms}\label{sec:alg}

\begin{algorithm}[!ht]\caption{This algorithm are the init part of Theorem~\ref{thm:fmm_online}.}
\label{alg:fmm_init}
\begin{algorithmic}[1]
\State {\bf data structure} \textsc{DynamicFGT} \Comment{Theorem~\ref{thm:fmm_online}}
\State {\bf members}
    \State $A_{\alpha}(\mathcal{B}_{k}), k \in [N(B)], \alpha \leq p$
    \State $C_{\beta}(\mathcal{C}_k), k \in [N(C)], \beta \leq p$
    \State $t_{\mathcal{C}_k}, k \in [N(C)]$
    \State $s_{\mathcal{B}_k}, k \in [N(B)]$
\State {\bf end members}
\\
\Procedure{Init}{$\{ s_j \in \R^d, j \in [N]\}$, $\{q_j \in \R, j \in [N]\}$}
    \State $p \leftarrow \log(\| q \|_1/\epsilon)$
    \State Assign $N$ sources into $N(B)$ boxes $\mathcal{B}_1, \dots, \mathcal{B}_{N(B)}$ of length $r\sqrt{\delta}$
    \State Divide space into $N(C)$ boxes $\mathcal{C}_1, \dots, \mathcal{C}_{N(C)}$ of length $r\sqrt{\delta}$
    \State Set center $s_{\mathcal{B}_k}, k \in [N(B)]$ of source boxes $\mathcal{B}_1, \dots, \mathcal{B}_{N(B)}$
    \State Set centers $t_{\mathcal{C}_k}, k \in [N(C)]$ of target boxes $\mathcal{C}_1, \dots, \mathcal{C}_{N(C)}$
    \For{$k \in [N(B)]$} \Comment{Source box $\mathcal{B}_{k}$ with center $s_{\mathcal{B}_{k}}$}
    \For{$\alpha \leq p$} \Comment{we say $\alpha \leq p$ if $\alpha_i \leq p, \forall i \in [d]$}
         \State Compute
         $$A_{\alpha}(\mathcal{B}_{k}) \leftarrow \frac{(-1)^{\|\alpha\|_1}}{\alpha !} \sum_{s_j \in \mathcal{B}_{k}} q_j \left(\frac{s_j - s_{\mathcal{B}_{k}}}{\sqrt{\delta}}\right)^{\alpha}$$ \Comment{Takes $p^d N$ time in total} \label{lin:compute-A}
    \EndFor
    \EndFor
    \For{$k \in [N(C)]$}\Comment{Target box $\mathcal{C}_{k}$ with center $t_{\mathcal{C}_{k}}$}
    \State Find $(2k+1)^d$ nearest source boxes to $\mathcal{C}_k$, denote by $\mathsf{nb}(\mathcal{C}_k)$
    \Comment{$k \leq \log(\|q\|_1/\epsilon)$}
    \For{$\beta \leq p$}
         \State Compute
         $$C_{\beta}(\mathcal{C}_k) \leftarrow \frac{ (-1)^{\| \beta \|_1} }{ \beta ! } \sum_{\mathcal{B}_{l} \in \mathsf{nb}(\mathcal{C}_k)}\sum_{\alpha \leq p} A_{\alpha}(\mathcal{B}_{l}) \cdot H_{\alpha + \beta} \left( \frac{s_{ \mathcal{B}_{l}} - t_{ \mathcal{C}_{k}} }{ \sqrt{\delta} } \right)$$\Comment{Takes $N(C) \cdot (2k+1)^d d p^{d+1}$ time in total}\label{lin:compute-C}
         \State \Comment{$N(C) \leq \min \{(r\sqrt{2\delta})^{-d/2},M\}$}
    \EndFor
    \EndFor
\EndProcedure
\State {\bf end data structure}
\end{algorithmic}
\end{algorithm}

\begin{algorithm}[!ht]\caption{This algorithm is the query part of Theorem~\ref{thm:fmm_online}.}
\label{alg:fmm_query}
\begin{algorithmic}[1]
\State {\bf data structure} \textsc{DynamicFGT}
\Procedure{KDE-Query}{$t \in \R^d$}
    \State Find the box $t \in  \mathcal{C}_k$
    \State Compute\Comment{Takes $p^{d}$ time in total}\label{lin:compute-G}
        $$G_p(t) \leftarrow \sum_{\beta \leq p} C_{\beta}(\mathcal{C}_k) \cdot \left( \frac{ t - t_{\mathcal{C}_k} }{ \sqrt{\delta} } \right)^{\beta}$$
    \State \Return $G_p(t)$
\EndProcedure

\Procedure{Query}{$q\in \R^N$}
    \State \textsc{Init}($\{s_j,j\in [N]\}, q$) \Comment{Takes $\widetilde{O}(N)$ time}
    \For{$j\in [N]$}
        \State $u_j\gets \textsc{Local-Query}$($s_j$) \Comment{$\|u-\k q\|_\infty\leq \epsilon$}
    \EndFor
    \State \Return $u$
\EndProcedure
\State {\bf end data structure}
\end{algorithmic}
\end{algorithm}

\begin{algorithm}[!ht]\caption{This algorithm is the update part of Theorem~\ref{thm:fmm_online}.}
\label{alg:fmm_insert}
\begin{algorithmic}[1]
\State {\bf data structure} \textsc{DynamicFGT} \Comment{Theorem~\ref{thm:fmm_online}}
\State {\bf members} \Comment{This is exact same as the members in Algorithm~\ref{alg:fmm_init}.}
    \State \hspace{4mm} $A_{\alpha}(\mathcal{B}_{k}), k \in [N(B)], \alpha \leq p$
    \State \hspace{4mm} $C_{\beta}(\mathcal{C}_k), k \in [N(C)], \beta \leq p$
    \State \hspace{4mm} $t_{\mathcal{C}_k}, k \in [N(C)]$
    \State \hspace{4mm} $s_{\mathcal{B}_k}, k \in [N(B)]$
\State {\bf end members}
\\
\Procedure{Insert}{$s \in \R^d, q \in \R$}
    \State Find the box $s \in \mathcal{B}_{k}$
    \For{$\alpha \leq p$} \Comment{we say $\alpha \leq p$ if $\alpha_i \leq p, \forall i \in [d]$}
         \State Compute $$A^{\new}_{\alpha}(\mathcal{B}_{k}) \leftarrow A_{\alpha}(\mathcal{B}_{k}) + \frac{(-1)^{\|\alpha\|_1}q}{\alpha !} (\frac{s - s_{\mathcal{B}_{k}}}{\sqrt{\delta}})^{\alpha}$$ %
         \Comment{Takes $p^d$ time}
        \label{lin:update_A}
    \EndFor
    \State Find $(2k+1)^d$ nearest target boxes to $\mathcal{B}_k$, denote by $\mathsf{nb}(\mathcal{B}_k)$
    \Comment{$k \leq \log(\|q\|_1/\epsilon)$}
    \For{$\mathcal{C}_l \in \mathsf{nb}(\mathcal{B}_k)$}
    \For{$\beta \leq p$}
        \State Compute
        \begin{align*}
                C^{\new}_{\beta}(\mathcal{C}_l) \leftarrow C_{\beta}(\mathcal{C}_l)+ \frac{ (-1)^{\| \beta \|_1} }{ \beta ! } \sum_{\alpha \leq p} \left(A^{\new}_{\alpha}(\mathcal{B}_{k}) - A_{\alpha}(\mathcal{B}_{k})\right)\cdot H_{\alpha + \beta} \left( \frac{s_{ \mathcal{B}_{k}} - t_{ \mathcal{C}_{l}} }{ \sqrt{\delta} } \right)
        \end{align*}
        \Comment{Takes $(2k+1)^d p^d$ time}
        \label{lin:update_C}
    \EndFor
    \EndFor
    \For{$\alpha \leq p$}
        \State $A_{\alpha}(\mathcal{B}_{k}) \leftarrow A^{\new}_{\alpha}(\mathcal{B}_{k})$
        \Comment{Takes $p^d$ time}
    \EndFor
    \For{$\mathcal{C}_l \in \mathsf{nb}(\mathcal{B}_k)$}
        \For{$\beta \leq p$}
            \State $C_{\beta}(\mathcal{C}_{l}) \leftarrow C^{\new}_{\beta}(\mathcal{C}_{l})$
            \Comment{Takes $(2k+1)^d p^d$ time}
        \EndFor
    \EndFor
\EndProcedure

\State {\bf end data structure}
\end{algorithmic}
\end{algorithm}

\begin{algorithm}[!ht]\caption{This algorithm is another update part of Theorem~\ref{thm:fmm_online}.}
\label{alg:fmm_delete}
\begin{algorithmic}[1]
\State {\bf data structure} \textsc{DynamicFGT}
\State {\bf members}
    \State \hspace{4mm} $A_{\alpha}(\mathcal{B}_{k}), k \in [N(B)], \alpha \leq p$
    \State \hspace{4mm} $C_{\beta}(\mathcal{C}_k), k \in [N(C)], \beta \leq p$
    \State \hspace{4mm} $t_{\mathcal{C}_k}, k \in [N(C)]$
    \State \hspace{4mm} $s_{\mathcal{B}_k}, k \in [N(B)]$
    \State \hspace{4mm}
    $\delta \in \R$
\State {\bf end members}
\\
\Procedure{Delete}{$s \in \R^d, q \in \R$}
    \State Find the box $s \in \mathcal{B}_{k}$
    \For{$\alpha \leq p$} \Comment{we say $\alpha \leq p$ if $\alpha_i \leq p, \forall i \in [d]$}
         \State Compute $$A^{\new}_{\alpha}(\mathcal{B}_{k}) \leftarrow A_{\alpha}(\mathcal{B}_{k}) - \frac{(-1)^{\|\alpha\|_1}q}{\alpha !} \left(\frac{s - s_{\mathcal{B}_{k}}}{\sqrt{\delta}}\right)^{\alpha}$$ %
         \label{lin:delete_A}
         \Comment{Takes $p^d$ time}
    \EndFor
    \State Find $(2k+1)^d$ nearest target boxes to $\mathcal{B}_k$, denote by $\mathsf{nb}(\mathcal{B}_k)$
    \Comment{$k \leq \log(\|q\|_1/\epsilon)$}
    \For{$\mathcal{C}_l \in \mathsf{nb}(\mathcal{B}_k)$}
    \For{$\beta \leq p$}
        \State Compute
        \begin{align*}
                C^{\new}_{\beta}(\mathcal{C}_l) \leftarrow C_{\beta}(\mathcal{C}_l) + \frac{ (-1)^{\| \beta \|_1} }{ \beta ! } \sum_{\alpha \leq p} \left(A^{\new}_{\alpha}(\mathcal{B}_{k}) - A_{\alpha}(\mathcal{B}_{k})\right)\cdot H_{\alpha + \beta} \left( \frac{s_{ \mathcal{B}_{k}} - t_{ \mathcal{C}_{l}} }{ \sqrt{\delta} } \right)
        \end{align*}
        \Comment{Takes $(2k+1)^d p^d$ time}
        \label{lin:delete_C}
    \EndFor
    \EndFor
    \For{$\alpha \leq p$}
        \State $A_{\alpha}(\mathcal{B}_{k}) \leftarrow A^{\new}_{\alpha}(\mathcal{B}_{k})$
        \Comment{Takes $p^d$ time}
    \EndFor
    \For{$\mathcal{C}_l \in \mathsf{nb}(\mathcal{B}_k)$}
        \For{$\beta \leq p$}
            \State $C_{\beta}(\mathcal{C}_{l}) \leftarrow C^{\new}_{\beta}(\mathcal{C}_{l})$
            \Comment{Takes $(2k+1)^d p^d$ time}
        \EndFor
    \EndFor
\EndProcedure

\State {\bf end data structure}
\end{algorithmic}
\end{algorithm}

\clearpage
\newpage
\section{Analysis}\label{sec:proof}

\begin{proof}[Proof of Theorem~\ref{thm:fmm_online}] 
{\bf Correctness of \textsc{KDE-Query}.}
Algorithm~\ref{alg:fmm_init} accumulates all sources into truncated Hermite expansions and transforms all Hermite expansions into Taylor expansions via Lemma~\ref{lem:fast_gaussian_lemma_3}, thus it can approximate the function $G(t)$ by
\begin{align*}
G(t) = & ~ \sum_{{\cal B}} \sum_{j \in {\cal B}} q_j \cdot e^{ - \| t - s_j \|_2^2 / \delta } \\
= & ~ \sum_{\beta \leq p} C_{\beta} \left( \frac{ t - t_{\cal C} }{ \sqrt{\delta} } \right)^{\beta} + \Err_T(p) + \Err_H(p)
\end{align*}
where $|\Err_H(p)| + |\Err_T(p)| \leq Q \cdot \epsilon$ by $p = \log ( \| q \|_1 / \epsilon )$,
\begin{align*}
C_{\beta} = \frac{ (-1)^{\| \beta \|_1 } }{ \beta ! } \sum_{{\cal B}} \sum_{\alpha \leq p} A_{\alpha} ({\cal B}) H_{\alpha + \beta} \left( \frac{ s_{ {\cal B } } - t_{ {\cal C} } }{ \sqrt{\delta} } \right)
\end{align*}
and the coefficients $A_{\alpha}({\cal B})$ are defined as Eq.~\eqref{def:A_alpha_cal_B}.

{\bf Running time of \textsc{KDE-Query}.}
In line~\ref{lin:compute-A}, it takes $O(p^d N)$ time to compute all the Hermite expansions, i.e., to compute the coefficients $A_{\alpha} ({\cal B})$ for all $\alpha \leq p$ and all sources boxes ${\cal B}$.

Making use of the large product in the definition of $H_{\alpha+\beta}$, we see that the time to compute the $p^d$ coefficients of $C_{\beta}$ is only $O(d p^{d+1})$ for each box ${\cal B}$ in the range. Thus, we know for each target box ${\cal C}$, the running time is $O( (2k +1)^d d p^{d+1} )$, thus the total time in line~\ref{lin:compute-C} is
\begin{align*}
O(N(C) \cdot (2k+1)^d d p^{d+1}).
\end{align*} 

Finally we need to evaluate the appropriate Taylor series for each target $t_i$, which can be done in $O(p^d M)$ time in line~\ref{lin:compute-G}. Putting it all together, Algorithm~\ref{alg:fmm_init} takes time
\begin{align*}
& ~ O( (2k+1)^d d p^{d+1} N(C) ) + O(p^d N) + O(p^d M) \\
= & ~ O \left( (M + N) \log^{O(d)} ( \| q \|_1 / \epsilon ) \right).
\end{align*}

{\bf Correctness of \textsc{Update}.}
Algorithm~\ref{alg:fmm_insert} and Algorithm~\ref{alg:fmm_delete} maintains $C_\beta$ as follows,
\begin{align*}
C_{\beta} = \frac{ (-1)^{\| \beta \|_1 } }{ \beta ! } \sum_{{\cal B}} \sum_{\alpha \leq p} A_{\alpha} ({\cal B}) H_{\alpha + \beta} \left( \frac{ s_{ {\cal B } } - t_{ {\cal C} } }{ \sqrt{\delta} } \right)
\end{align*}
where $A_{\alpha}  ( {\cal B} )$ is given by
\begin{align*}
A_{\alpha}  ( {\cal B} ) = \frac{(-1)^{\|\alpha\|_1}}{\alpha !} \sum_{j \in {\cal B}} q_j \cdot \left( \frac{ s_j - s_{\cal B} }{ \sqrt{\delta} } \right)^{\alpha}.
\end{align*}
Therefore, the correctness follows similarly from Algorithm~\ref{alg:fmm_init}.

{\bf Running time of \textsc{Update}.}
In line~\ref{lin:update_A}, it takes $O(p^d)$ time to update all the Hermite expansions, i.e. to update the coefficients $A_{\alpha} ({\cal B})$ for all $\alpha \leq p$ and all sources boxes ${\cal B}$.

Making use of the large product in the definition of $H_{\alpha+\beta}$, we see that the time to compute the $p^d$ coefficients of $C_{\beta}$ is only $O(d p^{d+1})$ for each box ${\cal C}_l \in \mathsf{nb}(\mathcal{B}_k)$. Thus, thus the total time in line~\ref{lin:update_C} is
\begin{align*}
O((2k+1)^d d p^{d+1}).
\end{align*} 

{\bf Correctness of \textsc{Query}.}
To compute $\k q$ for a given $q\in \R^d$, notice that for any $i\in [N]$,
\begin{align*}
    (\k q)_i = &~ \sum_{j=1}^N q_j \cdot e^{-\|s_i-s_j\|_2^2/\delta}\\
    = &~ G(s_i).
\end{align*}
Hence, this problem reduces to $N$ \textsc{KDE-Query}() calls. And the additive error guarantee for $G(t)$ immediately gives the $\ell_\infty$-error guarantee for $\k q$.

{\bf Running time of \textsc{Query}.}
We first build the data structure with the charge vector $q$ given in the query, which takes $\widetilde{O}_d(N)$ time. Then, we perform $N$ KDE-Query, each takes $\widetilde{O}_d(1)$. Hence, the total running time is $\widetilde{O}_d(N)$.

We note that when the charge vector $q$ is slowly changing, i.e., $\Delta:=\|q^{\new}-q\|_0\leq o(N)$, we can  \textsc{Update} the source vectors whose charges are changed. Since each \textsc{Insert} or \textsc{Delete} takes $\widetilde{O}_d(1)$ time, it will take $\widetilde{O}_d(\Delta)$ time to update the data structure. 

Then, consider computing $\k q^{\new}$ in this setting. We note that each source box can only affect $\widetilde{O}_d(1)$ other target boxes, where the target vectors are just the source vectors in this setting. Hence, there are at most $\widetilde{O}_d(\Delta)$ boxes whose $C_\beta$ is changed. Let ${\cal S}$ denote the indices of source vectors in these boxes. Since
\begin{align*}
    G(s_i)= \sum_{\beta \leq p} C_{\beta}(\mathcal{B}_k) \cdot \left( \frac{ s_i - s_{\mathcal{B}_k} }{ \sqrt{\delta} } \right)^{\beta},
\end{align*}
we get that there are at most $\widetilde{O}_d(\Delta)$ coordinates of $\k q^{\new}$ that are significantly changed from $\k q$, and we only need to re-compute $G(s_i)$ for $i\in {\cal S}$. If we assume that the source vectors are well-separated, i.e., $|{\cal S}|=O(\delta)$, the total computational cost is $\widetilde{O}_d(\Delta)$.

Therefore, when the change of the charge vector $q$ is sparse, $\k q$ can be computed in sublinear time. 
\end{proof}

\section{Error Estimation}
\label{sec:error_estimation}

This section provides several technical lemma that are used in Appendix~\ref{sec:proof}. We first give a definition.
\begin{definition}[Hermite expansion and coefficients]\label{def:G_t}
Let ${\cal B}$ denote a box with center $s_{\cal B}\in \R^d$ and side length $r \sqrt{2\delta}$ with $r < 1$.
If source $s_j$ is in box ${\cal B}$, we will simply denote as $j \in {\cal B}$. Then the Gaussian evaluation from the sources in box ${\cal B}$ is,
\begin{align*}
G(t) = \sum_{j \in {\cal B}} q_j \cdot e^{ - \| t - s_j \|_2^2 /\delta  }.  
\end{align*}
The Hermite expansion of $G(t)$ is 
\begin{align}\label{eq:hermite_expansion_of_G_t}
G(t) = \sum_{\alpha \geq 0} A_{\alpha} \cdot H_{\alpha} \left( \frac{ t - s_{\cal B} }{ \sqrt{\delta} } \right),  
\end{align}
where the coefficients $A_{\alpha}$ are defined by
\begin{align}\label{eq:def_A_alpha}
A_{\alpha} = \frac{1}{\alpha !} \sum_{j \in {\cal B}} q_j \cdot \left( \frac{ s_j - s_{\cal B} }{ \sqrt{\delta} } \right)^{\alpha} 
\end{align}
\end{definition}

The rest of this section will present a batch of Lemmas that bound the error of the function truncated at certain degree of Taylor and Hermite expansion.

We first upper bound the truncation error of Hermite expansion.
\begin{lemma}[Truncated Hermite expansion]\label{lem:fast_gaussian_lemma_1}
Let $p$ denote an integer, let $\Err_H(p)$ denote the error after truncating the series $G(t)$ (as defined in Eq.~\eqref{eq:hermite_expansion_of_G_t}) after $p^d$ terms, i.e.,
\begin{align}\label{eq:hermit-expansion-truncate-error}
\Err_H(p) = \sum_{\alpha \geq p} A_{\alpha} \cdot H_{\alpha} \left( \frac{t - s_{\cal B}}{ \sqrt{\delta} } \right).
\end{align}
Then we have
\blue{
\begin{align*}
    | \Err_H(p) | \leq \frac{\sum_{j \in {\cal B}} |q_j|}{(1-r)^d}\sum_{k=0}^{d-1} \binom{d}{k} (1-r^p)^k \left(\frac{r^p}{\sqrt{p!}}\right)^{d-k}
\end{align*}
where $r \leq \frac{1}{2}$.}

\end{lemma}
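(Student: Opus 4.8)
The plan is to bound $|\Err_H(p)|$ by substituting the definition of the coefficients $A_\alpha$ and the Cramer-type bound on $H_\alpha$, then factoring the resulting multi-index sum into a product of one-dimensional sums. First I would use the triangle inequality on Eq.~\eqref{eq:hermit-expansion-truncate-error} together with $|A_\alpha| \le \frac{1}{\alpha!}\sum_{j\in{\cal B}} |q_j| \cdot \left|\left(\frac{s_j - s_{\cal B}}{\sqrt\delta}\right)^\alpha\right|$. Since ${\cal B}$ has side length $r\sqrt{2\delta}$ and $s_{\cal B}$ is its center, each coordinate of $(s_j - s_{\cal B})/\sqrt\delta$ has absolute value at most $r/\sqrt2$, so $|((s_j-s_{\cal B})/\sqrt\delta)^\alpha| \le (r/\sqrt2)^{\|\alpha\|_1}$. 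Next I would apply the multi-dimensional Cramer inequality (Lemma~\ref{lem:cramer_inequality}), $|H_\alpha(u)| \le K\, e^{-\|u\|_2^2/2}\, 2^{\|\alpha\|_1/2}\sqrt{\alpha!}$ with $u = (t-s_{\cal B})/\sqrt\delta$, and use $e^{-\|u\|_2^2/2}\le 1$ and (taking $K\le(1.09)^d$, absorbed into constants — but here we want the clean bound, so I would actually track the $\sqrt{\alpha!}/\alpha! = 1/\sqrt{\alpha!}$ and the cancellation of the $2^{\|\alpha\|_1/2}$ against $(1/\sqrt2)^{\|\alpha\|_1}$). After these substitutions the bound becomes $\left(\sum_{j\in{\cal B}}|q_j|\right)\sum_{\alpha\ge p}\frac{r^{\|\alpha\|_1}}{\sqrt{\alpha!}}$ up to the constant factor, and the key step is to evaluate $\sum_{\alpha\ge p}$.

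The combinatorial heart is then: $\{\alpha \ge p\} = \N^d \setminus \{\alpha \le p\}$, i.e., the multi-indices with \emph{at least one} coordinate $\ge p$. Using $\frac{r^{\|\alpha\|_1}}{\sqrt{\alpha!}} = \prod_{i=1}^d \frac{r^{\alpha_i}}{\sqrt{\alpha_i!}}$, I would write $\sum_{\alpha\in\N^d}\prod_i \frac{r^{\alpha_i}}{\sqrt{\alpha_i!}} = \left(\sum_{n\ge 0}\frac{r^n}{\sqrt{n!}}\right)^d$ and subtract off $\sum_{\alpha\le p} = \left(\sum_{n<p}\frac{r^n}{\sqrt{n!}}\right)^d$; this gives a telescoping/inclusion–exclusion identity $a^d - b^d$ with $a = \sum_{n\ge0}\frac{r^n}{\sqrt{n!}}$ and $b = \sum_{n<p}\frac{r^n}{\sqrt{n!}}$, which factors as $(a-b)\sum_{k=0}^{d-1} a^k b^{d-1-k}$; more carefully, to match the claimed binomial-in-$k$ form, I would instead split each one-dimensional sum as $\sum_{n\ge0} = \sum_{n<p} + \sum_{n\ge p}$, expand the product $\prod_i(\text{low}_i + \text{high}_i)$, and collect terms by the number $d-k$ of coordinates in the ``high'' part — the all-low term is exactly $\sum_{\alpha\le p}$ and is excluded, leaving $\sum_{k=0}^{d-1}\binom{d}{k}\big(\sum_{n<p}\frac{r^n}{\sqrt{n!}}\big)^k\big(\sum_{n\ge p}\frac{r^n}{\sqrt{n!}}\big)^{d-k}$. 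Finally I would use the crude bounds $\sum_{n<p}\frac{r^n}{\sqrt{n!}} \le \sum_{n\ge0} r^n = \frac{1}{1-r}$ and $\sum_{n\ge p}\frac{r^n}{\sqrt{n!}} \le \frac{1}{\sqrt{p!}}\sum_{n\ge p} r^n = \frac{r^p}{\sqrt{p!}(1-r)}$ (using $\sqrt{n!}\ge\sqrt{p!}$ for $n\ge p$), pulling out a factor $(1-r)^{-d}$; the $(1-r^p)$ factors in the statement come from the sharper estimate $\sum_{n<p} r^n = \frac{1-r^p}{1-r}$, so I would use that equality instead of the bound $\frac{1}{1-r}$ for the ``low'' sums.

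I expect the main obstacle to be the bookkeeping in the factorization step: ensuring that (i) the $2^{\|\alpha\|_1/2}$ from Cramer exactly cancels the $(1/\sqrt2)^{\|\alpha\|_1}$ from the box-size bound so that only $r^{\|\alpha\|_1}/\sqrt{\alpha!}$ survives with no stray powers of $2$, (ii) the constant $K$ from Cramer is handled cleanly (taking $K<1$ in each coordinate, or noting it can be taken $\le 1$ so the product is $\le1$), and (iii) the inclusion–exclusion over which coordinates exceed $p$ is organized to produce precisely $\binom{d}{k}$ with the stated split between $(1-r^p)^k$ and $(r^p/\sqrt{p!})^{d-k}$, rather than an $a^d-b^d$ form. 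None of these is conceptually deep, but getting the constants and the index ranges to line up with the exact inequality in the statement requires care; I would also double-check the boundary convention ``$\alpha\ge p$ iff some coordinate $\ge p$'' matches the highlighted definition in the preliminaries so that $\{\alpha\le p\}$ and $\{\alpha\ge p\}$ partition $\N^d$.
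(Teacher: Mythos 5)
Your proposal is correct and follows essentially the same route as the paper's proof: bound $|A_\alpha H_\alpha|$ coordinate-wise via Cramer's inequality so that each multi-index term factors into $\prod_i r^{\alpha_i}/\sqrt{\alpha_i!}$, split each one-dimensional sum into a low part ($n<p$) and a high part ($n\ge p$), and use the binomial expansion of the $d$-fold product minus the all-low term together with the geometric-series bounds $\sum_{n<p} r^n = \tfrac{1-r^p}{1-r}$ and $\sum_{n\ge p} \tfrac{r^n}{\sqrt{n!}} \le \tfrac{r^p}{\sqrt{p!}(1-r)}$. The only cosmetic difference is ordering (the paper writes the Gaussian's Hermite series directly as a $d$-fold product of split sums before bounding, whereas you bound $A_\alpha$ and $H_\alpha$ first and then factorize), and note that the Cramer constant $K<1.09$ is in fact slightly larger than $1$ — the paper itself silently drops it, so this is not a gap specific to your argument.
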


\begin{proof}
Using Eq.~\eqref{eq:hermite_expansion_of_gaussian} to expand each Gaussian (see Definition~\ref{def:G_t}) in the 
\begin{align*}
G(t) = \sum_{j \in {\cal B}} q_j \cdot e^{ - \| t - s_j \|_2^2 / \delta }
\end{align*}
into a Hermite series about $s_{\cal B}$:
\begin{align*}
 \sum_{j \in {\cal B}} q_j \sum_{\alpha \geq 0}  \frac{1}{\alpha !} \cdot \left( \frac{ s_j - s_{\cal B} }{ \sqrt{\delta} } \right)^{\alpha} \cdot H_{\alpha} \left( \frac{t - s_{\cal B}}{ \sqrt{\delta} } \right)
\end{align*}
and swap the summation over $\alpha$ and $j$ to obtain the desired form:
\begin{align*}
\sum_{\alpha \geq 0} \left( \frac{1}{\alpha !} \sum_{j \in {\cal B}} q_j \cdot \left( \frac{ s_j - s_{\cal B} }{ \sqrt{\delta} } \right)^{\alpha} \right) H_{\alpha} \left( \frac{t - s_{\cal B}}{ \sqrt{\delta} } \right)=\sum_{\alpha\geq 0}A_\alpha H_\alpha\left(\frac{t-s_{\cal B}}{\sqrt{\delta}}\right).
\end{align*}
Here, the truncation error bound is due to Lemma~\ref{lem:cramer_inequality} and the standard equation for the tail of a geometric series.

\blue{To formally bound the truncation error, we first rewrite the Hermit expansion as follows
\begin{align}\label{eq:hermit-expansion}
    e^{ - \frac{\| t - s_j \|_2^2}{\delta} } = &~ \prod_{i=1}^d \left(\sum_{n_i=1}^{p-1} \frac{1}{n_i!}\left(\frac{(s_j)_i-(s_{\cal B})_i}{\sqrt{\delta}}\right)^{n_i} h_{n_i}\left(\frac{t_i - (s_{\cal B})_i}{\sqrt{\delta}}\right) \right. \notag \\
    & ~+ \left. \sum_{n_i=p}^{\infty} \frac{1}{n_i!}\left(\frac{(s_j)_i-(s_{\cal B})_i}{\sqrt{\delta}}\right)^{n_i} h_{n_i}\left(\frac{t_i - (s_{\cal B})_i}{\sqrt{\delta}}\right)\right)
\end{align}
Notice from Cramer's inequality (Lemma~\ref{lem:cramer_univariate}),
\begin{align*}
    h_{n_i}\left(\frac{t_i - (s_{\cal B})_i}{\sqrt{\delta}}\right) \leq \sqrt{n!} \cdot 2^{n/2} \cdot e^{-(t_i - (s_{\cal B})_i)^2/2}.
\end{align*}
Therefore we can use properties of the geometric series (notice $\frac{(s_j)_i-(s_{\cal B})_i}{\sqrt{\delta}} \leq r/\sqrt{2}$) to bound each term in the product as follows
\begin{align}\label{eq:geometric-series-1}
    \sum_{n_i=1}^{p-1} \frac{1}{n_i!}\left(\frac{(s_j)_i-(s_{\cal B})_i}{\sqrt{\delta}}\right)^{n_i} h_{n_i}\left(\frac{t_i - (s_{\cal B})_i}{\sqrt{\delta}}\right)  \leq \frac{1-r^p}{1-r},
\end{align}
and
\begin{align}\label{eq:geometric-series-2}
    \sum_{n_i=p}^{\infty} \frac{1}{n_i!}\left(\frac{(s_j)_i-(s_{\cal B})_i}{\sqrt{\delta}}\right)^{n_i} h_{n_i}\left(\frac{t_i - (s_{\cal B})_i}{\sqrt{\delta}}\right)  \leq \frac{1}{\sqrt{p!}}\cdot\frac{r^p}{1-r}.
\end{align}
Now we come back to bound Eq.~\eqref{eq:hermit-expansion-truncate-error} as follows
\begin{align*}
    \Err_H(p)
    = &~ \sum_{j \in {\cal B}} q_j \sum_{\alpha \geq p}  \frac{1}{\alpha !} \cdot \left( \frac{ s_j - s_{\cal B} }{ \sqrt{\delta} } \right)^{\alpha} \cdot H_{\alpha} \left( \frac{t - s_{\cal B}}{ \sqrt{\delta} } \right)\\
    \leq &~ \left(\sum_{j \in {\cal B}} |q_j| \right) \left(e^{ - \frac{\| t - s_j \|_2^2}{\delta} } - \prod_{j=1}^d \left(\sum_{n_i=1}^{p-1} \frac{1}{n_i!}\left(\frac{(s_j)_i-(s_{\cal B})_i}{\sqrt{\delta}}\right)^{n_i} h_{n_i}\left(\frac{t_i - (s_{\cal B})_i}{\sqrt{\delta}}\right)\right)\right)\\
    \leq &~ \frac{\sum_{j \in {\cal B}} |q_j|}{(1-r)^d}\sum_{k=0}^{d-1} \binom{d}{k} (1-r^p)^k \left(\frac{r^p}{\sqrt{p!}}\right)^{d-k}
\end{align*}
where the first step comes from definition, the second step comes from Eq.~\eqref{eq:hermit-expansion} and the last step comes from Eq.~\eqref{eq:geometric-series-1} and Eq.~\eqref{eq:geometric-series-2} and binomial expansion.}

\end{proof}

\begin{remark}
By Stirling's formula, it is easy to see that when we take $p=\log(\|q\|_1/\epsilon)$, this error will be bounded by $\|q\|_1\cdot \epsilon$.
\end{remark}

The Lemma~\ref{lem:fast_gaussian_lemma_2} shows how to convert a Hermite expansion at location $s_{\cal B}$ into a Taylor expansion at location $t_{\cal C}$. Intuitively, the Taylor series converges rapidly in the box (that has side length $r \sqrt{2 \delta}$ center around $t_{\cal C}$, where $r \in (0, 1)$).
\begin{lemma}[Hermite expansion with truncated Taylor expansion]\label{lem:fast_gaussian_lemma_2}
Suppose the Hermite expansion of $G(t)$ is given by Eq.~\eqref{eq:hermite_expansion_of_G_t}, i.e.,
\begin{align}\label{eq:G_t_lem_C_3}
G(t) = \sum_{\alpha \geq 0} A_{\alpha} \cdot H_{\alpha} \left( \frac{ t - s_{\cal B} }{ \sqrt{\delta} } \right).
\end{align}
Then, the Taylor expansion of $G(t)$ at an arbitrary point $t_0$ can be written as:
\begin{align}\label{eq:taylor_expansion_of_G_t}
G(t) = \sum_{\beta \geq 0} B_{\beta} \left( \frac{ t - t_0 }{ \sqrt{\delta} } \right)^{\beta} .
\end{align}
where the coefficients $B_{\beta}$ are defined as 
\begin{align}\label{eq:def_B_beta}
B_{\beta} = \frac{ (-1)^{ \| \beta \|_1} }{ \beta ! } \sum_{\alpha \geq 0} (-1)^{\|\alpha\|_1} A_{\alpha} \cdot H_{\alpha + \beta} \left( \frac{ s_{\cal B} - t_0 }{ \sqrt{\delta} } \right).
\end{align}
Let $\Err_T(p)$ denote the error by truncating the Taylor expansion after $p^d$ terms, in the box ${\cal C}$ (that has center at $t_{\cal C}$ and side length $r \sqrt{2\delta}$), i.e.,
\begin{align*}
\Err_T(p) = \sum_{\beta \geq p} B_{\beta} \left( \frac{ t - t_{\cal C} }{ \sqrt{\delta} } \right)^{\beta}
\end{align*}
Then 
\blue{
\begin{align*}
    | \Err_T(p) | \leq \frac{\sum_{j \in {\cal B}} |q_j|}{(1-r)^d}\sum_{k=0}^{d-1} {d \choose k} (1-r^p)^k \left(\frac{r^p}{\sqrt{p!}}\right)^{d-k}
\end{align*}
where $r \leq 1/2$.}

\end{lemma}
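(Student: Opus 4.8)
The plan is to mirror the structure of the proof of Lemma~\ref{lem:fast_gaussian_lemma_1}, since the bound we want on $\Err_T(p)$ is \emph{identical} to the bound on $\Err_H(p)$. The key conceptual point is that transforming a Hermite expansion into a Taylor expansion about $t_{\cal C}$ does not lose anything: both expansions represent the \emph{same} analytic function $G(t) = \sum_{j\in{\cal B}} q_j e^{-\|t-s_j\|_2^2/\delta}$, so the truncation error of the Taylor series about $t_{\cal C}$ is just the tail of the Taylor expansion of that same function. Concretely, I would first verify that $B_\beta$ as defined in Eq.~\eqref{eq:def_B_beta} is exactly the $\beta$-th coefficient of the (multivariate) Taylor expansion of $G$ at $t_{\cal C}$, by plugging the Hermite series Eq.~\eqref{eq:G_t_lem_C_3} into the Taylor-series identity Eq.~\eqref{eq:taylor_series_of_H_t} for each $H_\alpha$ and swapping the (absolutely convergent) sums over $\alpha$ and $\beta$.

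Next I would set up the per-coordinate factorization exactly as in Lemma~\ref{lem:fast_gaussian_lemma_1}. Since $G(t)=\sum_{j\in{\cal B}} q_j \prod_{i=1}^d e^{-(t_i-(s_j)_i)^2/\delta}$ and each one-dimensional Gaussian factor has a convergent Taylor series about $(t_{\cal C})_i$ whose terms are $\frac{1}{n_i!}\big((t_i-(t_{\cal C})_i)/\sqrt{\delta}\big)^{n_i}\widetilde h_{n_i}\big((s_j-t_{\cal C})_i/\sqrt{\delta}\big)\cdot e^{-((s_j)_i-(t_{\cal C})_i)^2/\delta}$, I would apply Cramer's inequality (Lemma~\ref{lem:cramer}) to bound $|\widetilde h_{n_i}|$ and use that the displacement $|t_i-(t_{\cal C})_i|/\sqrt{\delta}\le r/\sqrt 2$ since $t$ lies in the box ${\cal C}$ of side length $r\sqrt{2\delta}$. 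This gives, for each coordinate, a convergent geometric series: the head (first $p-1$ terms) is bounded by $\frac{1-r^p}{1-r}$ and the tail (terms $\geq p$) by $\frac{1}{\sqrt{p!}}\cdot\frac{r^p}{1-r}$, exactly as in Eq.~\eqref{eq:geometric-series-1} and Eq.~\eqref{eq:geometric-series-2}. Expanding the product $\prod_{i=1}^d(\text{head}_i+\text{tail}_i)$ and subtracting off the all-heads term (which is the truncated Taylor approximation $\sum_{\beta\le p}B_\beta((t-t_{\cal C})/\sqrt{\delta})^\beta$), the remaining $2^d-1$ cross terms each contain at least one tail factor; grouping by the number $k$ of head factors yields
\begin{align*}
|\Err_T(p)| \leq \frac{\sum_{j\in{\cal B}}|q_j|}{(1-r)^d}\sum_{k=0}^{d-1}\binom{d}{k}(1-r^p)^k\left(\frac{r^p}{\sqrt{p!}}\right)^{d-k},
\end{align*}
which is the claimed bound. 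The factor $e^{-\|s_j-t_{\cal C}\|_2^2/\delta}\le 1$ from the Taylor expansion of the Gaussian is absorbed harmlessly.

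One subtlety worth handling carefully: in the Hermite-expansion case the displacement that must be small is $\|s_j-s_{\cal B}\|_\infty/\sqrt{\delta}\le r/\sqrt 2$ (a source-box condition), whereas here the small displacement is $\|t-t_{\cal C}\|_\infty/\sqrt{\delta}\le r/\sqrt 2$ (a target-box condition), and the roles of $s$ and $t$ in the Gaussian factor are symmetric so this goes through verbatim; one should just make sure the Taylor expansion is taken about $t_{\cal C}$ in the $t$-variable, not about $s_{\cal B}$. I expect the main (minor) obstacle to be bookkeeping: justifying the interchange of the infinite sums over $\alpha$ and $\beta$ in establishing the coefficient formula Eq.~\eqref{eq:def_B_beta}, which follows from absolute convergence via Cramer's inequality and the geometric decay, and then keeping track of which geometric ratio ($r^p$ versus $1/\sqrt{p!}$) attaches to the head versus the tail when expanding the $d$-fold product. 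No new analytic input beyond Lemma~\ref{lem:cramer}/Lemma~\ref{lem:cramer_univariate}, Eq.~\eqref{eq:taylor_series_of_H_t}, and the box side-length bound is needed. Finally, as in the remark after Lemma~\ref{lem:fast_gaussian_lemma_1}, Stirling's formula gives that taking $p=\log(\|q\|_1/\epsilon)$ makes this error at most $\|q\|_1\cdot\epsilon$.
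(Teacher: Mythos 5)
Your proposal is correct and follows essentially the same route as the paper: both recognize that $B_\beta$ collapses to the direct Taylor coefficient $\frac{(-1)^{\|\beta\|_1}}{\beta!}\sum_{j\in\mathcal{B}}q_j H_\beta\bigl((s_j-t_{\mathcal{C}})/\sqrt{\delta}\bigr)$, apply Cramer's inequality, and bound the tail by the same per-coordinate head-plus-tail factorization with ratios $(1-r^p)/(1-r)$ and $r^p/(\sqrt{p!}(1-r))$. The only cosmetic difference is that the paper first bounds $|B_\beta|$ uniformly and then factorizes the $\beta$-sum, while you factorize $G$ per source and per coordinate before summing; the two are Fubini-equivalent and give identical bounds.
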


\begin{proof}
Each Hermite function in Eq.~\eqref{eq:G_t_lem_C_3} can be expanded into a Taylor series by means of Eq.~\eqref{eq:taylor_series_of_H_t}. The expansion in Eq.~\eqref{eq:taylor_expansion_of_G_t} is due to swapping the order of summation. 

Next, we will bound the truncation error. Using Eq.~\eqref{eq:def_A_alpha} for $A_{\alpha}$, we can rewrite $B_{\beta}$:
\begin{align*}
B_{\beta} = & ~ \frac{ (-1)^{\|\beta\|_1} }{ \beta ! } \sum_{\alpha \geq 0} (-1)^{\|\alpha\|_1} A_{\alpha} H_{\alpha + \beta} \left( \frac{ s_{\cal B} - t_{\cal C} }{ \sqrt{\delta} } \right) \\
= & ~ \frac{ (-1)^{\|\beta\|_1} }{ \beta ! } \sum_{\alpha \geq 0} \left( \frac{(-1)^{\|\alpha\|_1}}{ \alpha ! } \sum_{j \in {\cal B} } q_j \left( \frac{ s_j - s_{\cal B} }{ \sqrt{\delta} } \right)^{\alpha} \right) H_{\alpha + \beta} \left( \frac{ s_{\cal B} - t_{\cal C} }{ \sqrt{\delta} } \right) \\
= & ~ \frac{ (-1)^{\|\beta\|_1} }{ \beta ! } \sum_{j \in {\cal B} } q_j \sum_{\alpha \geq 0 } \frac{(-1)^{\|\alpha\|_1}}{ \alpha !} \left( \frac{s_j - s_{\cal B} }{ \sqrt{\delta} } \right)^{\alpha} \cdot H_{\alpha + \beta} \left( \frac{ s_{\cal B} - t_{\cal C} }{ \sqrt{\delta} } \right)
\end{align*}
By Eq.~\eqref{eq:taylor_series_of_H_t}, the inner sum is the Taylor expansion of $H_{\beta} ( (s_j - t_{\cal C}) / \sqrt{\delta} )$. Thus
\begin{align*}
B_{\beta} = \frac{ (-1)^{\|\beta\|_1} }{ \beta ! } \sum_{j \in {\cal B} } q_j \cdot H_{\beta} \left( \frac{ s_j - t_{\cal C} }{ \sqrt{\delta} } \right)
\end{align*}
and Cramer's inequality implies
\begin{align*}
| B_{\beta} | \leq \frac{1}{\beta !} K \cdot Q_B 2^{\|\beta\|_1/2} \sqrt{\beta !} = K Q_B \frac{ 2^{ \| \beta \|_1 / 2 } }{ \sqrt{\beta !} }.
\end{align*}

\blue{To formally bound the truncation error, we have
\begin{align*}
    \Err_T(p)
    = &~ \sum_{\beta \geq p} B_{\beta} \left( \frac{ t - t_{\cal C} }{ \sqrt{\delta} } \right)^{\beta}\\
    \leq &~ K Q_{B} \left(\prod_{i=1}^d \left(\sum_{n_i = 0}^{\infty}\frac{1}{\sqrt{n_i!}}2^{n_i/2}\left(\frac{t-t_{\mathcal{C}}}{\delta}\right)^{n_i}\right) - \prod_{i=1}^d \left(\sum_{n_i = 0}^{p-1}\frac{1}{\sqrt{n_i!}}2^{n_i/2}\left(\frac{t-t_{\mathcal{C}}}{\delta}\right)^{n_i}\right)\right)\\
    \leq &~ \frac{\sum_{j \in {\cal B}} |q_j|}{(1-r)^d}\sum_{k=0}^{d-1} {d \choose k} (1-r^p)^k \left(\frac{r^p}{\sqrt{p!}}\right)^{d-k}
\end{align*}
where the second step uses $| B_{\beta} | \leq K Q_B \frac{ 2^{ \| \beta \|_1 / 2 } }{ \sqrt{\beta !} }$ and the rest are similar to those in Lemma~\ref{lem:fast_gaussian_lemma_1}.}  
\end{proof}

For designing our algorithm, we would like to make a variant of Lemma~\ref{lem:fast_gaussian_lemma_2} that combines the  truncations of Hermite expansion and Taylor expansion. More specifically, we first truncate the Taylor expansion of $G_p(t)$, and then truncate the Hermite expansion in Eq.~\eqref{eq:def_B_beta} for the coefficients. 

\begin{lemma}[Truncated Hermite expansion with truncated Taylor expansion]\label{lem:fast_gaussian_lemma_3}
Let $G(t)$ be defined as Def~\ref{def:G_t}. For an integer $p$, let $G_{p}(t)$ denote the Hermite expansion of $G(t)$ truncated at $p$, i.e.,
\begin{align*}
G_p(t) = \sum_{\alpha \leq p} A_{\alpha} H_{\alpha} \left( \frac{ t - s_{\cal B} }{ \sqrt{\delta} } \right).
\end{align*}
The Taylor expansion of function $G_p(t)$ at an arbitrary point $t_0$ can be written as:
\begin{align*}
G_p(t) = \sum_{\beta \geq 0} C_{\beta} \cdot \left( \frac{ t - t_0 }{ \sqrt{\delta} } \right)^{\beta},
\end{align*}
where the coefficients $C_{\beta}$ are defined as
\begin{align}\label{eq:def_C_beta}
C_{\beta}= \frac{ (-1)^{\| \beta \|_1} }{ \beta ! } \sum_{\alpha \leq p} (-1)^{\|\alpha\|_1}A_{\alpha} \cdot H_{\alpha + \beta} \left( \frac{s_{\cal B} - t_{\cal C} }{ \sqrt{\delta} } \right).
\end{align}
Let $\Err_T(p)$ denote the error in truncating the Taylor series after $p^d$ terms, in the box ${\cal C}$ (that has center $t_{\cal C}$ and side length $r \sqrt{2\delta} $), i.e.,
\begin{align*}
\Err_T(p ) = \sum_{\beta \geq p} C_{\beta} \left( \frac{ t - t_{\cal C} }{ \sqrt{\delta} } \right)^{\beta}.
\end{align*}
Then, we have
\blue{
\begin{align*}
    | \Err_T(p) | \leq \frac{2\sum_{j \in {\cal B}} |q_j|}{(1-r)^d}\sum_{k=0}^{d-1} {d \choose k} (1-r^p)^k \left(\frac{r^p}{\sqrt{p!}}\right)^{d-k}
\end{align*}
where $r \leq 1/2$.}

\end{lemma}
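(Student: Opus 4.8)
The plan is to reduce Lemma~\ref{lem:fast_gaussian_lemma_3} to Lemmas~\ref{lem:fast_gaussian_lemma_1} and~\ref{lem:fast_gaussian_lemma_2} via the triangle inequality, rather than re-expanding everything from scratch. First I would observe that $G_p(t)$ differs from the full $G(t)$ only by the truncated Hermite tail $\Err_H(p)$ from Lemma~\ref{lem:fast_gaussian_lemma_1}. Writing $G(t) = G_p(t) + \Err_H(p)$, and letting $B_\beta$ be the exact Taylor coefficients of $G(t)$ at $t_{\cal C}$ (as in Lemma~\ref{lem:fast_gaussian_lemma_2}) and $C_\beta$ be those of $G_p(t)$, I would note that both $G(t)$ and $G_p(t)$ are entire, so their Taylor series converge everywhere in the box ${\cal C}$; hence on ${\cal C}$ we have the identity $\sum_{\beta\geq 0} C_\beta((t-t_{\cal C})/\sqrt\delta)^\beta = G_p(t) = G(t) - \Err_H(p) = \sum_{\beta\geq 0}B_\beta((t-t_{\cal C})/\sqrt\delta)^\beta - \Err_H(p)$.

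The key computation then is to bound $|\Err_T^{(C)}(p)| := |\sum_{\beta\geq p}C_\beta((t-t_{\cal C})/\sqrt\delta)^\beta|$. The cleanest route is to bound $|C_\beta|$ directly. From Eq.~\eqref{eq:def_C_beta}, $C_\beta = \frac{(-1)^{\|\beta\|_1}}{\beta!}\sum_{\alpha\leq p}(-1)^{\|\alpha\|_1}A_\alpha H_{\alpha+\beta}((s_{\cal B}-t_{\cal C})/\sqrt\delta)$, which is exactly the partial sum (over $\alpha\leq p$) of the series defining $B_\beta$ in Eq.~\eqref{eq:def_B_beta}. So $|C_\beta| \leq |B_\beta| + |B_\beta - C_\beta|$, where $|B_\beta|\leq K Q_B \frac{2^{\|\beta\|_1/2}}{\sqrt{\beta!}}$ was established inside the proof of Lemma~\ref{lem:fast_gaussian_lemma_2}, and $B_\beta - C_\beta = \frac{(-1)^{\|\beta\|_1}}{\beta!}\sum_{\alpha\geq p}(-1)^{\|\alpha\|_1}A_\alpha H_{\alpha+\beta}(\cdot)$ is the ``extra'' Hermite tail. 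A slicker alternative: since $C_\beta$ are the Taylor coefficients of $G_p(t)$ and $G_p(t) = G(t) - \Err_H(p)$, I can split $\sum_{\beta\geq p}C_\beta(\cdot)^\beta$ as (the Taylor tail of $G$, i.e.\ $\Err_T(p)$ from Lemma~\ref{lem:fast_gaussian_lemma_2}) minus (the Taylor tail of $\Err_H(p)$ at $t_{\cal C}$). Each piece is a truncation-tail of an entire function whose Taylor coefficients obey the same Cramer-type bound $K Q_B 2^{\|\beta\|_1/2}/\sqrt{\beta!}$ (the coefficients of $\Err_H(p)$ satisfy the same bound since $\Err_H(p)$ is itself a Gaussian-type sum with total weight $\leq \sum_{j\in{\cal B}}|q_j|$), so each tail is bounded by $\frac{\sum_{j\in{\cal B}}|q_j|}{(1-r)^d}\sum_{k=0}^{d-1}\binom{d}{k}(1-r^p)^k(r^p/\sqrt{p!})^{d-k}$ exactly as in Lemmas~\ref{lem:fast_gaussian_lemma_1} and~\ref{lem:fast_gaussian_lemma_2}. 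Adding the two contributions yields the stated bound with the factor $2$ in front.

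Concretely I would write: $|\Err_T(p)| \leq |\sum_{\beta\geq p}B_\beta(\tfrac{t-t_{\cal C}}{\sqrt\delta})^\beta| + |\sum_{\beta\geq p}(B_\beta - C_\beta)(\tfrac{t-t_{\cal C}}{\sqrt\delta})^\beta|$; the first term is bounded by Lemma~\ref{lem:fast_gaussian_lemma_2} directly, and for the second term I expand $B_\beta - C_\beta = \frac{(-1)^{\|\beta\|_1}}{\beta!}\sum_{j\in{\cal B}}q_j\sum_{\alpha\geq p}\frac{(-1)^{\|\alpha\|_1}}{\alpha!}(\tfrac{s_j-s_{\cal B}}{\sqrt\delta})^\alpha H_{\alpha+\beta}(\tfrac{s_{\cal B}-t_{\cal C}}{\sqrt\delta})$, recognize the $\alpha$-sum as the truncated-from-below Taylor expansion of $H_\beta(\tfrac{s_j-t_{\cal C}}{\sqrt\delta})$ (by Eq.~\eqref{eq:taylor_series_of_H_t}), and then the combined $\beta\geq p$ and $\alpha\geq p$ tails produce, via the same geometric-series / binomial bookkeeping used in Lemma~\ref{lem:fast_gaussian_lemma_1} (the factor $\tfrac{(s_j-s_{\cal B})_i}{\sqrt\delta}\leq r/\sqrt2$ and $\tfrac{(t-t_{\cal C})_i}{\sqrt\delta}\leq r/\sqrt2$ both hold because $s_j,t$ lie in boxes of side $r\sqrt{2\delta}$), the same $\frac{\sum_{j\in{\cal B}}|q_j|}{(1-r)^d}\sum_{k=0}^{d-1}\binom{d}{k}(1-r^p)^k(r^p/\sqrt{p!})^{d-k}$ bound.

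The main obstacle I anticipate is the bookkeeping for the second term: when \emph{both} the Hermite index $\alpha$ and the Taylor index $\beta$ are being truncated at $p$ simultaneously, the product-over-coordinates structure means one must carefully separate, for each coordinate $i$, the cases ``$\alpha_i\geq p$'' versus ``$\beta_i\geq p$'' versus ``both small'' — but since the constraint $\alpha+\beta\geq p$ (in the sense $\{\gamma\geq p\}\cup\{\gamma\leq p\}=\N^d$ used throughout) is what actually controls the tail, and each coordinate's one-dimensional Hermite/geometric sum is bounded by $\tfrac{1}{1-r}$ (small part) or $\tfrac{1}{\sqrt{p!}}\cdot\tfrac{r^p}{1-r}$ (tail part) exactly as in Eqs.~\eqref{eq:geometric-series-1}–\eqref{eq:geometric-series-2}, the binomial sum over $k$ = (number of coordinates in the ``small'' regime) comes out identically. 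I would double-check that no constant worse than $2$ is needed, and that the $r\leq 1/2$ hypothesis (guaranteeing convergence of all the geometric series with ratio $\leq r/\sqrt2<1$) is used consistently; modulo that, the proof is a short triangle-inequality argument chaining the two previous lemmas.
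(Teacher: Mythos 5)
Your proposal is correct and matches the paper's proof essentially exactly: both split $\Err_T(p)$ via the triangle inequality into a $\sum_{\beta\geq p}B_\beta(\cdot)^\beta$ piece (handled by Lemma~\ref{lem:fast_gaussian_lemma_2}) and a $\sum_{\beta\geq p}(C_\beta-B_\beta)(\cdot)^\beta$ piece, then observe that $|C_\beta-B_\beta|$ obeys the same Cramer-type bound $KQ_B 2^{\|\beta\|_1/2}/\sqrt{\beta!}$ as $|B_\beta|$, giving the factor of $2$. The obstacle you anticipate about joint $\alpha,\beta$ truncation does not actually arise on this route, since one bounds $|C_\beta-B_\beta|$ uniformly in $\beta$ first and only afterward sums the $\beta\geq p$ tail, just as the paper does.
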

\begin{proof}

We can write $C_{\beta}$ in the following way:
\begin{align*}
C_{\beta}
= & ~ \frac{ (-1)^{ \| \beta \|_1 } }{ \beta ! } \sum_{j \in {\cal B}} q_j \sum_{\alpha \leq p} \frac{(-1)^{\|\alpha\|_1}}{ \alpha ! } \left( \frac{ s_j - s_{\cal B} }{ \sqrt{\delta} } \right)^{\alpha} \cdot H_{\alpha + \beta} \left( \frac{ s_{\cal B} - t_{\cal C} }{ \sqrt{\delta} } \right) \\
= & ~ \frac{ (-1)^{\| \beta \|_1} }{ \beta ! } \sum_{j \in {\cal B}} q_j \left( \sum_{\alpha \geq 0} - \sum_{\alpha > p} \right) \frac{(-1)^{\|\alpha\|_1}}{ \alpha ! } \left( \frac{ s_j - s_{\cal B} }{ \sqrt{\delta} } \right)^{\alpha} \cdot H_{\alpha + \beta} \left( \frac{ s_{\cal B} - t_{\cal C} }{ \sqrt{\delta} } \right) \\
= & ~ B_{\beta} - \frac{ (-1)^{ \| \beta \|_1 } }{ \beta ! } \sum_{j \in {\cal B}} q_j \sum_{\alpha > p} \frac{(-1)^{\|\alpha\|_1}}{ \alpha ! } \left( \frac{ s_j - s_{\cal B} }{ \sqrt{\delta} } \right)^{\alpha} \cdot H_{\alpha + \beta} \left( \frac{ s_{\cal B} - t_{\cal C} }{ \sqrt{\delta} } \right) \\
= & ~ B_{\beta} + (C_{\beta} - B_{\beta})
\end{align*}

Next, we have
\begin{align}\label{eq:split_Err_T_p_into_two_terms}
| \Err_T(p) | \leq \left| \sum_{\beta \geq p} B_{\beta} \left( \frac{ t - t_{\cal C} }{ \sqrt{\delta} } \right)^{\beta} \right| + \left| \sum_{\beta \geq p} (C_{\beta} - B_{\beta}) \cdot \left( \frac{ t - t_{\cal C} }{ \sqrt{\delta} } \right)^{\beta} \right|
\end{align}
Using Lemma~\ref{lem:fast_gaussian_lemma_2}, we can upper bound the first term in the Eq.~\eqref{eq:split_Err_T_p_into_two_terms} by,
\blue{
\begin{align*}
\left| \sum_{\beta \geq p} B_{\beta} \left( \frac{ t - t_{\cal C} }{ \sqrt{\delta} } \right)^{\beta} \right| \leq &~ \frac{\sum_{j \in {\cal B}} |q_j|}{(1-r)^d}\sum_{k=0}^{d-1} {d \choose k} (1-r^p)^k \left(\frac{r^p}{\sqrt{p!}}\right)^{d-k}.
\end{align*}
Since we can similarly bound $C_\beta - B_{\beta}$ as follows
\begin{align*}
|C_{\beta} - B_{\beta}| \leq \frac{1}{\beta !} K \cdot Q_B 2^{\|\beta\|_1/2} \sqrt{\beta !} \leq K Q_B \frac{ 2^{ \| \beta \|_1 / 2 } }{ \sqrt{\beta !} },
\end{align*}
we have the same bound for the second term
\begin{align*}
~ \left| \sum_{\beta \geq p} (C_{\beta} - B_{\beta}) \left( \frac{ t - t_{\cal C} }{ \sqrt{\delta} } \right)^{\beta} \right| 
\leq ~ \frac{\sum_{j \in {\cal B}} |q_j|}{(1-r)^d}\sum_{k=0}^{d-1} {d \choose k} (1-r^p)^k \left(\frac{r^p}{\sqrt{p!}}\right)^{d-k}.
\end{align*}
}

\end{proof}

The proof of the following Lemma is almost identical, but it directly bounds the truncation error of Taylor expansion of the Gaussian kernel. We omit the proof here.
\begin{lemma}[Truncated Taylor expansion]\label{lem:fast_gaussian_lemma_4}
Let $G_{s_j} (t):\R^d\rightarrow \R$ be defined as
\begin{align*}
G_{s_j}(t) = q_j \cdot e^{- \| t - s_j \|_2^2 / \delta}.
\end{align*}
The Taylor expansion of $G_{s_j}(t)$ at $t_{\cal C}\in \R^d$ is:
\begin{align*}
G_{s_j}(t) = \sum_{\beta \geq 0} {\cal B}_{\beta} \left( \frac{ t - t_{\cal C} }{ \sqrt{\delta} } \right)^{\beta},
\end{align*}
where the coefficients $B_{\beta}$ is defined as
\begin{align*}
B_{\beta} = q_j \cdot \frac{ (-1)^{ \| \beta\|_1 } }{ \beta ! } \cdot H_{\beta} \left( \frac{ s_j - t_{\cal C} }{ \sqrt{\delta} } \right)
\end{align*}
and the absolute value of the error (truncation after $p^d$ terms) can be upper bounded as
\blue{
\begin{align*}
    | \Err_T(p) | \leq \frac{\sum_{j \in {\cal B}} |q_j|}{(1-r)^d}\sum_{k=0}^{d-1} {d \choose k} (1-r^p)^k \left(\frac{r^p}{\sqrt{p!}}\right)^{d-k}
\end{align*}
where $r \leq 1/2$.}

\end{lemma}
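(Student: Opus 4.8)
\textbf{Proof proposal for Lemma~\ref{lem:fast_gaussian_lemma_4}.}

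The plan is to mirror the argument already carried out in Lemma~\ref{lem:fast_gaussian_lemma_1}, since the structure is identical: we are truncating a product of $d$ univariate power series and want to control the tail. First I would write the Gaussian in coordinate-separated form, $G_{s_j}(t) = q_j \prod_{i=1}^d e^{-(t_i - (s_j)_i)^2/\delta}$, and recall from Eq.~\eqref{eq:taylor_series_of_H_t} (specialized to a single source, i.e. from the Taylor expansion of $e^{-x^2}$) that each univariate factor expands around $(t_{\cal C})_i$ as $\sum_{n_i \geq 0} \frac{(-1)^{n_i}}{n_i!} H_{n_i}\!\left(\frac{s_j - t_{\cal C}}{\sqrt\delta}\right)_i \left(\frac{t - t_{\cal C}}{\sqrt\delta}\right)_i^{n_i}$, so that the multi-dimensional coefficient is exactly $B_\beta = q_j \frac{(-1)^{\|\beta\|_1}}{\beta!} H_\beta\!\left(\frac{s_j - t_{\cal C}}{\sqrt\delta}\right)$ as claimed in the statement.

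The key step is the tail bound. By Cramer's inequality in the univariate form (Lemma~\ref{lem:cramer_univariate}), $|h_{n}(x)| \leq K \cdot 2^{n/2}\sqrt{n!}\, e^{-x^2/2}$, and since both $s_j$ and $t_{\cal C}$ lie in (or near) the unit box with box side $r\sqrt{2\delta}$, the relevant argument satisfies $\left|\frac{(s_j)_i - (t_{\cal C})_i}{\sqrt\delta}\right| \leq r/\sqrt{2}$ (or a small constant multiple thereof), which makes $\frac{1}{\sqrt{n_i!}} 2^{n_i/2} \left|\frac{t - t_{\cal C}}{\sqrt\delta}\right|_i^{n_i}$ a geometrically decaying sequence with ratio $\leq r < 1$. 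Then I would split each univariate factor into its first $p$ terms plus a tail, exactly as in Eq.~\eqref{eq:hermit-expansion}; the partial sum over $n_i \leq p-1$ is bounded by $\frac{1-r^p}{1-r}$ and the tail over $n_i \geq p$ is bounded by $\frac{1}{\sqrt{p!}}\cdot\frac{r^p}{1-r}$, using the standard geometric-series estimates (cf. Eq.~\eqref{eq:geometric-series-1} and Eq.~\eqref{eq:geometric-series-2}). Expanding the difference $\prod_i(\text{full}) - \prod_i(\text{partial})$ by a telescoping/binomial argument over which of the $d$ factors contributes a tail versus a head yields precisely
\begin{align*}
    |\Err_T(p)| \leq \frac{\sum_{j\in{\cal B}}|q_j|}{(1-r)^d}\sum_{k=0}^{d-1}\binom{d}{k}(1-r^p)^k\left(\frac{r^p}{\sqrt{p!}}\right)^{d-k},
\end{align*}
where the $|q_j|$ appears because $|B_\beta|$ carries a factor $|q_j|$ and the sum ranges over the sources in ${\cal B}$ (for a single source this reduces to $|q_j|$ times the remaining factor).

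The main obstacle — and the reason the paper flags this as a fixed ``flawed error-estimate'' overlooked in prior FMM literature — is that the naive bound would replace the geometric-series tail by something like $(r^p/\sqrt{p!})^d$, which is too optimistic: one cannot get the full $d$-th power of the small quantity because in the product expansion the ``bad'' factors can be paired with $d-k$ ``good'' factors that only contribute $O(1/(1-r))$ each, not a small quantity. So the careful accounting is the binomial sum $\sum_{k=0}^{d-1}\binom{d}{k}(1-r^p)^k(r^p/\sqrt{p!})^{d-k}$, where at least one factor is from the tail ($k \leq d-1$) but the rest need not be. I expect the bulk of the work to be bookkeeping this binomial expansion cleanly and verifying the base-case reduction; everything else is a direct transcription of the Lemma~\ref{lem:fast_gaussian_lemma_1} argument with $e^{-x^2}$ (Taylor / $\widetilde h_n$ side) in place of the Hermite-function side, which is why the paper elects to omit the details.
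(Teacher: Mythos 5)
Your reconstruction is correct and is essentially the argument the paper intends (it omits the proof, calling it ``almost identical'' to Lemmas~\ref{lem:fast_gaussian_lemma_1}--\ref{lem:fast_gaussian_lemma_3}, and indeed the closed form $B_\beta = q_j\frac{(-1)^{\|\beta\|_1}}{\beta!}H_\beta\left(\frac{s_j-t_{\cal C}}{\sqrt\delta}\right)$ and its tail bound already appear verbatim inside the proof of Lemma~\ref{lem:fast_gaussian_lemma_2}). One small slip: you justify the geometric decay by asserting $\left|\frac{(s_j)_i-(t_{\cal C})_i}{\sqrt\delta}\right|\leq r/\sqrt2$, but $s_j$ need not be in the target box ${\cal C}$ at all — it typically sits in a neighboring source box. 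The quantity that is small is $\left|\frac{t_i-(t_{\cal C})_i}{\sqrt\delta}\right|\leq r/\sqrt2$, since $t$ lies in ${\cal C}$; the factor $H_\beta\left(\frac{s_j-t_{\cal C}}{\sqrt\delta}\right)$ is controlled purely by Cramer's inequality (dropping the $e^{-\|\cdot\|^2/2}$ factor), with no smallness assumption on $s_j-t_{\cal C}$. You in fact write the correct expansion variable $\left|\frac{t-t_{\cal C}}{\sqrt\delta}\right|_i^{n_i}$ in the very next clause, so the final binomial bound is unaffected — but the stated justification should be fixed.
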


\section{Low Dimension Subspace FGT}\label{sec:subspace}
In this section, we consider FGT for data in a lower dimensional subspace of $\R^d$. The problem is formally defined below:

\begin{problem}[Dynamic FGT on a low dimensional set]\label{prob:low_dim_fgt}
Let $W $ be a subspace of $\R^d$ with dimension $\dim(S) = w \ll d$. Given $N$ source points $s_1,\dots,s_N \in W$ with charges $q_1,\dots,q_N$, and $M $ target points $t_1,\dots,t_M \in W$, find a dynamic data structure that supports the following operations:
\begin{itemize}

    \item \textsc{Insert/Delete}$(s_i,q_i )$ Insert or Delete a source point $s_i \in \R^d$ along with its ``charge'' $q_i\in \R$, in $\log^{O(w)} ( \| q \|_1 / \epsilon ) $ time. 
    
    \item\textsc{Density-Estimation}$(t\in \R^d)$ For any point $t \in \R^d$, output the kernel density of $t$ with respect to the sources, i.e., output $\wt{G}$ such that $G(t)-\eps \leq \wt{G} \leq G(t) + \eps$ in $\log^{O(w)} ( \| q \|_1 / \epsilon ) $ time. 
    
    \item \textsc{Query}$(q \in \R^N)$ Given an arbitrary query vector $q\in \R^N$, output $\wt{\k q}$ in $N\cdot \log^{O(w)}(\|q\|/\eps)$ time. 
    
\end{itemize}

\end{problem}

\begin{algorithm}[!ht]\caption{Initialization of low-dim FGT.}
\label{alg:fmm_init_low_rank}
\begin{algorithmic}[1]
\State {\bf data structure} \textsc{DynamicFGT} %
\State {\bf members}
    \State $A_{\alpha}(\mathcal{B}_{i}), i \in [N(B)], \alpha \leq p$
    \State $C_{\beta}(\mathcal{C}_i), i \in [N(C)], \beta \leq p$
    \State $t_{\mathcal{C}_i}, i \in [N(C)]$
    \State $s_{\mathcal{B}_i}, i \in [N(B)]$
\State {\bf end members}
\\
\Procedure{Init}{$\{ s_j \in \R^d, j \in [N]\}$, $\{q_j \in \R, j \in [N]\}$}
    \State $p \leftarrow \log(\| q \|_1/\epsilon)$
    \State Compute SVD: $(U_0, \Sigma, V_0) \leftarrow \mathrm{SVD}\left((s_1, \dots, s_N, t_1, \dots, t_M) \right)$ 
    \State \Comment{$U_0 \Sigma V_0^\top = (s_1, \dots, s_N, t_1, \dots, t_M), U_0 \in \R^{ d \times d}, \Sigma \in \R^{d \times (N+M)}, V_0 \in \R^{(N+M) \times (N+M)}$}
    \State Let $B \leftarrow U_0 \Sigma_{:,1:w} \in \R^{d \times w}$ \Comment{$\Sigma_{:,1:w}$ denotes the first $w$ columns of $\Sigma$}
    \State Compute the spectral decomposition $U \Lambda U^\top = B^\top B$, and let $\P \leftarrow \Lambda^{-1/2}U^{-1} B^\top \in \R^{w \times d}$
    \For{$i \in [N]$ and $j \in [M]$}
    \State $x_i \leftarrow \P s_i, y_j \leftarrow \P t_j$
    \EndFor
    \State Assign $x_1, \dots, x_N$ into $N(B)$ boxes $\mathcal{B}_1, \dots, \mathcal{B}_{N(B)}$ of length $r\sqrt{\delta}$
    \State Divide $\R^w$ into $N(C)$ boxes $\mathcal{C}_1, \dots, \mathcal{C}_{N(C)}$ of length $r\sqrt{\delta}$
    \State Set center $x_{\mathcal{B}_i}, i \in [N(B)]$ of source boxes $\mathcal{B}_1, \dots, \mathcal{B}_{N(B)}$
    \State Set centers $y_{\mathcal{C}_j}, j \in [N(C)]$ of target boxes $\mathcal{C}_1, \dots, \mathcal{C}_{N(C)}$
    \For{$l \in [N(B)]$} \Comment{Source box $\mathcal{B}_{l}$ with center $s_{\mathcal{B}_{l}}$}
    \For{$\alpha \leq p$} \Comment{we say $\alpha \leq p$ if $\alpha_i \leq p, \forall i \in [w]$}
         \State Compute
         $$A_{\alpha}(\mathcal{B}_{l}) \leftarrow \frac{(-1)^{\|\alpha\|_1}}{\alpha !} \sum_{x_j \in \mathcal{B}_{l}} q_j \left(\frac{x_j - x_{\mathcal{B}_{l}}}{\sqrt{\delta}}\right)^{\alpha}$$ \Comment{Takes $p^w N$ time in total} \label{lin:compute-A_low_rank}
    \EndFor
    \EndFor
    \For{$l \in [N(C)]$}\Comment{Target box $\mathcal{C}_{l}$ with center $t_{\mathcal{C}_{l}}$}
    \State Find $(2k+1)^w$ nearest source boxes to $\mathcal{C}_l$, denote by $\mathsf{nb}(\mathcal{C}_l)$
    \Comment{$k \leq \log(\|q\|_1/\epsilon)$}
    \For{$\beta \leq p$}
         \State Compute
         $$C_{\beta}(\mathcal{C}_l) \leftarrow \frac{ (-1)^{\| \beta \|_1} }{ \beta ! } \sum_{\mathcal{B} \in \mathsf{nb}(\mathcal{C}_l)}\sum_{\alpha \leq p} A_{\alpha}(\mathcal{B}) \cdot H_{\alpha + \beta} \left( \frac{x_{ \mathcal{B}} - y_{ \mathcal{C}_{l}} }{ \sqrt{\delta} } \right)$$\Comment{Takes $N(C) \cdot (2k+1)^w d p^{w+1}$ time in total}\label{lin:compute-C_low_rank}
         \State \Comment{$N(C) \leq \min \{(r\sqrt{2\delta})^{-d/2},M\}$}
    \EndFor
    \EndFor
\EndProcedure
\State {\bf end data structure}
\end{algorithmic}
\end{algorithm}

\begin{algorithm}[!ht]\caption{This algorithm is the query part of Theorem~\ref{thm:fmm_online_low_rank}.}
\label{alg:fmm_query_low_rank}
\begin{algorithmic}[1]
\State {\bf data structure} \textsc{DynamicFGT}
\Procedure{KDE-Query}{$t \in \R^d$}
    \State Find the box $\P t \in  \mathcal{C}_l$
    \State Compute\Comment{Takes $p^{w}$ time in total}\label{lin:compute-G_L}
        $$G_p(t) \leftarrow \sum_{\beta \leq p} C_{\beta}(\mathcal{C}_l) \cdot \left( \frac{ \P (t - t_{\mathcal{C}_l}) }{ \sqrt{\delta} } \right)^{\beta}$$
    \State \Return $G_p(t)$
\EndProcedure

\Procedure{Query}{$q\in \R^N$}
    \State \textsc{Init}($\{s_j,j\in [N]\}, q$) \Comment{Takes $\widetilde{O}(N)$ time}
    \For{$j\in [N]$}
        \State $u_j\gets \textsc{Local-Query}$($s_j$) \Comment{$\|u-\k q\|_\infty\leq \epsilon$}
    \EndFor
    \State \Return $u$
\EndProcedure
\State {\bf end data structure}
\end{algorithmic}
\end{algorithm}

\begin{algorithm}[!ht]\caption{This algorithm is the update part of Theorem~\ref{thm:fmm_online_low_rank}.}
\label{alg:fmm_insert_low_rank}
\begin{algorithmic}[1]
\State {\bf data structure} \textsc{DynamicFGT} %
\State {\bf members} \Comment{This is exact same as the members in Algorithm~\ref{alg:fmm_init_low_rank}.}
    \State \hspace{4mm} $A_{\alpha}(\mathcal{B}_i), i \in [N(B)], \alpha \leq p$
    \State \hspace{4mm} $C_{\beta}(\mathcal{C}_i), i \in [N(C)], \beta \leq p$
    \State \hspace{4mm} $t_{\mathcal{C}_i}, i \in [N(C)]$
    \State \hspace{4mm} $s_{\mathcal{B}_i}, i \in [N(B)]$
\State {\bf end members}
\\
\Procedure{Insert}{$s \in \R^d, q \in \R$}
    \State Find the box $s \in \mathcal{B}$
    \For{$\alpha \leq p$} \Comment{we say $\alpha \leq p$ if $\alpha_i \leq p, \forall i \in [w]$}
         \State Compute $$A^{\new}_{\alpha}(\mathcal{B}) \leftarrow A_{\alpha} (\mathcal{B}) + \frac{(-1)^{\|\alpha\|_1}q}{\alpha !} (\frac{\P (s - s_{\mathcal{B}}) }{\sqrt{\delta}})^{\alpha}$$ %
         \Comment{Takes $p^w$ time}
        \label{lin:update_A_L}
    \EndFor
    \State Find $(2k+1)^w$ nearest target boxes to $\mathcal{B}$, denote by $\mathsf{nb}(\mathcal{B})$
    \Comment{$k \leq \log(\|q\|_1/\epsilon)$}
    \For{$\mathcal{C}_l \in \mathsf{nb}(\mathcal{B})$}
    \For{$\beta \leq p$}
        \State Compute
        \begin{align*}
                C^{\new}_{\beta}(\mathcal{C}_l) \leftarrow C_{\beta}(\mathcal{C}_l)+ \frac{ (-1)^{\| \beta \|_1} }{ \beta ! } \sum_{\alpha \leq p} \left(A^{\new}_{\alpha}(\mathcal{B}) - A_{\alpha}(\mathcal{B})\right)\cdot H_{\alpha + \beta} \left( \frac{\P(s_{ \mathcal{B}} - t_{ \mathcal{C}_{l}}) }{ \sqrt{\delta} } \right)
        \end{align*}
        \Comment{Takes $(2k+1)^w p^w$ time}
        \label{lin:update_C_L}
    \EndFor
    \EndFor
    \For{$\alpha \leq p$}
        \State $A_{\alpha}(\mathcal{B}) \leftarrow A^{\new}_{\alpha}(\mathcal{B})$
        \Comment{Takes $p^w$ time}
    \EndFor
    \For{$\mathcal{C}_l \in \mathsf{nb}(\mathcal{B})$}
        \For{$\beta \leq p$}
            \State $C_{\beta}(\mathcal{C}_{l}) \leftarrow C^{\new}_{\beta}(\mathcal{C}_{l})$
            \Comment{Takes $(2k+1)^w p^w$ time}
        \EndFor
    \EndFor
\EndProcedure

\State {\bf end data structure}
\end{algorithmic}
\end{algorithm}

\begin{algorithm}[!ht]\caption{This algorithm is another update part of Theorem~\ref{thm:fmm_online_low_rank}.}
\label{alg:fmm_delete_low_rank}
\begin{algorithmic}[1]
\State {\bf data structure} \textsc{DynamicFGT}
\State {\bf members}
    \State \hspace{4mm} $A_{\alpha}(\mathcal{B}_i), i \in [N(B)], \alpha \leq p$
    \State \hspace{4mm} $C_{\beta}(\mathcal{C}_i), i \in [N(C)], \beta \leq p$
    \State \hspace{4mm} $t_{\mathcal{C}_i}, i \in [N(C)]$
    \State \hspace{4mm} $s_{\mathcal{B}_i}, i \in [N(B)]$
    \State \hspace{4mm}
    $\delta \in \R$
\State {\bf end members}
\\
\Procedure{Delete}{$s \in \R^d, q \in \R$}
    \State Find the box $s \in \mathcal{B}$
    \For{$\alpha \leq p$} \Comment{we say $\alpha \leq p$ if $\alpha_i \leq p, \forall i \in [w]$}
         \State Compute $$A^{\new}_{\alpha}(\mathcal{B}) \leftarrow A_{\alpha}(\mathcal{B}) - \frac{(-1)^{\|\alpha\|_1}q}{\alpha !} \left(\frac{\P (s - s_{\mathcal{B}})}{\sqrt{\delta}}\right)^{\alpha}$$ %
         \label{lin:delete_A_L}
         \Comment{Takes $p^w$ time}
    \EndFor
    \State Find $(2k+1)^w$ nearest target boxes to $\mathcal{B}$, denote by $\mathsf{nb}(\mathcal{B})$
    \Comment{$k \leq \log(\|q\|_1/\epsilon)$}
    \For{$\mathcal{C}_l \in \mathsf{nb}(\mathcal{B})$}
    \For{$\beta \leq p$}
        \State Compute
        \begin{align*}
                C^{\new}_{\beta}(\mathcal{C}_l) \leftarrow C_{\beta}(\mathcal{C}_l) + \frac{ (-1)^{\| \beta \|_1} }{ \beta ! } \sum_{\alpha \leq p} \left(A^{\new}_{\alpha}(\mathcal{B}) - A_{\alpha}(\mathcal{B})\right)\cdot H_{\alpha + \beta} \left( \frac{\P (s_{ \mathcal{B}} - t_{ \mathcal{C}_{l}}) }{ \sqrt{\delta} } \right)
        \end{align*}
        \Comment{Takes $(2k+1)^w p^w$ time}
        \label{lin:delete_C_L}
    \EndFor
    \EndFor
    \For{$\alpha \leq p$}
        \State $A_{\alpha}(\mathcal{B}) \leftarrow A^{\new}_{\alpha}(\mathcal{B})$
        \Comment{Takes $p^w$ time}
    \EndFor
    \For{$\mathcal{C}_l \in \mathsf{nb}(\mathcal{B})$}
        \For{$\beta \leq p$}
            \State $C_{\beta}(\mathcal{C}_{l}) \leftarrow C^{\new}_{\beta}(\mathcal{C}_{l})$
            \Comment{Takes $(2k+1)^w p^w$ time}
        \EndFor
    \EndFor
\EndProcedure

\State {\bf end data structure}
\end{algorithmic}
\end{algorithm}

We generalize our dynamic data structure to solve Problem~\ref{prob:low_dim_fgt}, which is stated in the following theorem. The computational cost of each update or query depends on the intrinsic dimension $w$ instead of $d$.
\begin{theorem}[Low Rank Dynamic FGT Data Structure, formal version of Theorem~\ref{thm:main_informal}]\label{thm:fmm_online_low_rank}
Let $W $ be a subspace of $\R^d$ with dimension $\dim(S) = w \ll d$. Given $N$ source points $s_1,\dots,s_N \in W$ with charges $q_1,\dots,q_N$, and $M $ target points $t_1,\dots,t_M \in W$, %
a number $\delta > 0$, and a vector $q \in \R^N$, 
let $G : \R^d \rightarrow \R$ be defined as $G(t) = \sum_{i=1}^N q_i \cdot \k(s_i, t)$ denote the \emph{kernel-density} of $t$ with respect to $S$, where $\k(s_i,t)=f(\|s_i-t\|_2)$ for $f$ satisfying the properties in Definition~\ref{def:property_1} . 
There is a dynamic data structure  
that supports the following operations:
\begin{itemize}
    \item \textsc{Init}$()$ (Algorithm~\ref{alg:fmm_init_low_rank}) Preprocess in $ N \cdot \log^{O(w)} ( \| q \|_1 / \epsilon ) $ time.
    \item \textsc{KDE-Query}$(t\in \R^d)$ (Algorithm~\ref{alg:fmm_query_low_rank}) Output $\wt{G}$ such that $G(t)-\eps \leq \wt{G} \leq G(t) + \eps$ in $\log^{O(w)} ( \| q \|_1 / \epsilon ) $ time.
    \item \textsc{Insert}$(s \in \R^d, q_s \in \R)$ (Algorithm~\ref{alg:fmm_insert_low_rank})  For any source point $s \in \R^d$ and its charge $q_s$, update the data structure by adding this source point in $\log^{O(w)} ( \| q \|_1 / \epsilon ) $ time.  
    \item \textsc{Delete}$(s \in \R^d)$ (Algorithm~\ref{alg:fmm_delete_low_rank}) For any source point $s \in \R^d$ and its charge $q_s$, update the data structure by deleting this source point in $\log^{O(w)} ( \| q \|_1 / \epsilon ) $ time.  
    \item \textsc{Query}$(q\in \R^N)$ (Algorithm~\ref{alg:fmm_query_low_rank}) Output $\widetilde{\k  q}\in \R^N$ such that $\|\widetilde{\k q}-\k q\|_\infty\leq \epsilon$, where $\k\in \R^{N\times N}$ is defined by $\k_{i,j}=\k(s_i,s_j)$ in $N\log^{O(w)}(\|q\|_1/\epsilon)$ time. 
\end{itemize}
\end{theorem}

\subsection{Projection Lemma}
\begin{lemma}[Hermite projection lemma in low-dimensional space, formal version of Lemma~\ref{lem:proj_expansion_informal}]\label{lem:proj_expansion}
Given a subspace $B \in \R^{d \times w}$. Let $B^\top B = U \Lambda U^\top \in \R^{w \times w}$ denote the spectral decomposition where $U \in \R^{w \times w}$ and a diagonal matrix $\Lambda \in \R^{w \times w}$.

We define $\P = \Lambda^{-1/2}U^{-1} B^\top \in \R^{w \times d}$. Then we have for any $t,s\in \R^d$ from subspace $B$, the following equation holds
\begin{align*}
    e^{- \| t - s \|_2^2/\delta } = \sum_{ \alpha \geq 0 } \frac{ ( \sqrt{{1}/{\delta}} \P (t-s) )^{\alpha} }{ \alpha ! } h_{\alpha} ( \sqrt{{1}/{\delta}} \P (t-s) ).
\end{align*}

\end{lemma}

\begin{proof}

First, we know that
\begin{align}\label{eq:rewrite_Pt}
    \P t  
    = & ~ \Lambda^{-1/2} U^{-1} B^\top t \notag \\
    = & ~ \Lambda^{-1/2} U^{-1} B^\top B x \notag \\
    = & ~ \Lambda^{-1/2} U^{-1} U \Lambda U^\top x \notag \\
    = & ~ \Lambda^{-1/2} \Lambda U^\top x \notag \\
    = & ~ \Lambda^{1/2} U^\top x
\end{align}
where the first step follows from $\P = \Lambda^{-1/2} U^{-1} B^\top$, the second step follows from $t = Bx$ (since $t$ is from low dimension, then there is always a vector $x$), the third step follows $B^\top B =  U \Lambda U^\top$, the forth step follows $U^{-1} U = I$, and the last step follows from $\Lambda^{-1/2} \Lambda = \Lambda^{1/2}$.

Compute the spectral decomposition $B^\top B = U \Lambda U^\top$, $ U \in \R^{w \times w}$ is the orthonormal basis, $\Lambda = \mathrm{diag} (\lambda_1,\dots,\lambda_k) \in \R^{w \times w}$. Let $u_i \in \R^w$ denote the vector that is the transpose of $i$-th row $U \in \R^{w \times w}$.  
Then we have
\begin{align*}
    e^{- \| t - s \|_2^2/\delta } = &~ e^{- (x-y)^\top B^\top B (x-y)/\delta }\\
    = &~ e^{- (x-y)^\top U \Lambda U^\top (x-y)/\delta }\\
    = &~ \prod_{i=1}^w \Big( \sum_{n=1}^\infty \frac{1}{n!} (\sqrt{{\lambda_i}/{\delta}} \cdot u_{i}^\top (x-y) )^n \cdot h_n (\sqrt{{\lambda_i}/{\delta}} \cdot u_{i}^\top (x-y) ) \Big)\\
    = &~ \sum_{ \alpha \geq 0 } \frac{ \left( \sqrt{1/\delta} \Lambda^{1/2} U^\top (x-y) \right)^{\alpha} }{ \alpha ! } \cdot h_{\alpha} \left( \sqrt{1/\delta} \Lambda^{1/2} U^\top (x-y) \right) \\
    = & ~ \sum_{ \alpha \geq 0 } \frac{ \left( \sqrt{1/\delta} \cdot \P (t-s) \right)^{\alpha} }{ \alpha ! } \cdot h_{\alpha} \left( \sqrt{1/\delta} \cdot \P (t-s) \right) 
\end{align*}
where the first step follows from changing the basis preserves the $\ell_2$-distance, the second step follows from $B^\top B = U \Lambda U^\top$, and the fifth step follows from Eq.~\eqref{eq:rewrite_Pt}. 

\end{proof}

\subsection{Proof of Main Result in Low-Dimensional Case}

\begin{proof}[Proof of Theorem~\ref{thm:fmm_online_low_rank}] 
{\bf Correctness of \textsc{KDE-Query}.}
Algorithm~\ref{alg:fmm_init_low_rank} accumulates all sources into truncated Hermite expansions and transforms all Hermite expansions into Taylor expansions via Lemma~\ref{lem:fast_gaussian_lemma_3_low}. Thus it can approximate the function $G(t)$ by %
\begin{align*}
G(t) = & ~ \sum_{{\cal B}} \sum_{j \in {\cal B}} q_j \cdot e^{ - \| t - s_j \|_2^2 / \delta } \\
= & ~ \sum_{\beta \leq p} C_{\beta} \left( \frac{ \P( t - t_{\cal C} ) }{ \sqrt{\delta} } \right)^{\beta} + \Err_T(p) + \Err_H(p)
\end{align*}
where $|\Err_H(p)| + |\Err_T(p)| \leq Q \cdot \epsilon$ by $p = \log ( \| q \|_1 / \epsilon )$,
\begin{align*}
C_{\beta} = \frac{ (-1)^{\| \beta \|_1 } }{ \beta ! } \sum_{{\cal B}} \sum_{\alpha \leq p} A_{\alpha} ({\cal B}) H_{\alpha + \beta} \left( \frac{ \P ( s_{ {\cal B } } - t_{ {\cal C} } ) }{ \sqrt{\delta} } \right)
\end{align*}
and the coefficients $A_{\alpha}({\cal B})$ are defined as Line~\ref{lin:compute-A_low_rank}.

{\bf Running time of \textsc{KDE-Query}.}
In line~\ref{lin:compute-A_low_rank}, it takes $O(p^w N)$ time to compute all the Hermite expansions, i.e., to compute the coefficients $A_{\alpha} ({\cal B})$ for all $\alpha \leq p$ and all source boxes ${\cal B}$.

Making use of the large product in the definition of $H_{\alpha+\beta}$, we see that the time to compute the $p^w$ coefficients of $C_{\beta}$ is only $O(d p^{w+1})$ for each box ${\cal B}$ in the range. Thus, we know for each target box ${\cal C}$, the running time is $O( (2k +1)^w d p^{w+1} )$, thus the total time in line~\ref{lin:compute-C_low_rank} is
\begin{align*}
O(N(C) \cdot (2k+1)^w d p^{w+1}).
\end{align*} 

Finally, we need to evaluate the appropriate Taylor series for each target $t_i$, which can be done in $O(p^w M)$ time in line~\ref{lin:compute-G_L}. Putting it all together, Algorithm~\ref{alg:fmm_init_low_rank} takes time
\begin{align*}
& ~ O( (2k+1)^w d p^{w+1} N(C) ) + O(p^w N) + O(p^w M) \\
= & ~ O \left( (M + N) \log^{O(w)} ( \| q \|_1 / \epsilon ) \right).
\end{align*}

{\bf Correctness of \textsc{Update}.}
Algorithm~\ref{alg:fmm_insert_low_rank} and Algorithm~\ref{alg:fmm_delete_low_rank} maintains $C_\beta$ as follows,
\begin{align*}
C_{\beta} = \frac{ (-1)^{\| \beta \|_1 } }{ \beta ! } \sum_{{\cal B}} \sum_{\alpha \leq p} A_{\alpha} ({\cal B}) H_{\alpha + \beta} \left( \frac{ \P( s_{ {\cal B } } - t_{ {\cal C} } ) }{ \sqrt{\delta} } \right)
\end{align*}
where $A_{\alpha}  ( {\cal B} )$ is given by
\begin{align*}
A_{\alpha}  ( {\cal B} ) = \frac{(-1)^{\|\alpha\|_1}}{\alpha !} \sum_{j \in {\cal B}} q_j \cdot \left( \frac{ \P( s_j - s_{\cal B} ) }{ \sqrt{\delta} } \right)^{\alpha}.
\end{align*}
Therefore, the correctness follows similarly from Algorithm~\ref{alg:fmm_init_low_rank}.

{\bf Running time of \textsc{Update}.}
In line~\ref{lin:update_A}, it takes $O(p^w)$ time to update all the Hermite expansions, i.e. to update the coefficients $A_{\alpha} ({\cal B})$ for all $\alpha \leq p$ and all sources boxes ${\cal B}$.

Making use of the large product in the definition of $H_{\alpha+\beta}$, we see that the time to compute the $p^w$ coefficients of $C_{\beta}$ is only $O(d p^{w+1})$ for each box ${\cal C}_l \in \mathsf{nb}(\mathcal{B})$. Thus, thus the total time in line~\ref{lin:update_C_L} is
\begin{align*}
O((2k+1)^w d p^{w+1}).
\end{align*} 

{\bf Correctness of \textsc{Query}.}
To compute $\k q$ for a given $q\in \R^w$, notice that for any $i\in [N]$,
\begin{align*}
    (\k q)_i = &~ \sum_{j=1}^N q_j \cdot e^{-\|s_i-s_j\|_2^2/\delta}\\
    = &~ G(s_i).
\end{align*}
Hence, this problem reduces to $N$ \textsc{KDE-Query}() calls. And the additive error guarantee for $G(t)$ immediately gives the $\ell_\infty$-error guarantee for $\k q$.

{\bf Running time of \textsc{Query}.}
We first build the data structure with the charge vector $q$ given in the query, which takes $\widetilde{O}_d(N)$ time. Then, we perform $N$ KDE-Query, each takes $\widetilde{O}_d(1)$. Hence, the total running time is $\widetilde{O}_d(N)$.

We note that when the charge vector $q$ is slowly changing, i.e., $\Delta:=\|q^{\new}-q\|_0\leq o(N)$, we can  \textsc{Update} the source vectors whose charges are changed. Since each \textsc{Insert} or \textsc{Delete} takes $\widetilde{O}_d(1)$ time, it will take $\widetilde{O}_d(\Delta)$ time to update the data structure. 

Then, consider computing $\k q^{\new}$ in this setting. We note that each source box can only affect $\widetilde{O}_d(1)$ other target boxes, where the target vectors are just the source vectors in this setting. Hence, there are at most $\widetilde{O}_d(\Delta)$ boxes whose $C_\beta$ is changed. Let ${\cal S}$ denote the indices of source vectors in these boxes. Since
\begin{align*}
    G(s_i)= \sum_{\beta \leq p} C_{\beta}(\mathcal{B}_k) \cdot \left( \frac{ \P( s_i - s_{\mathcal{B}_k} ) }{ \sqrt{\delta} } \right)^{\beta},
\end{align*}
we get that there are at most $\widetilde{O}_d(\Delta)$ coordinates of $\k q^{\new}$ that are significantly changed from $\k q$, and we only need to re-compute $G(s_i)$ for $i\in {\cal S}$. If we assume that the source vectors are well-separated, i.e., $|{\cal S}|=O(\delta)$, the total computational cost is $\widetilde{O}_d(\Delta)$.

Therefore, when the change of the charge vector $q$ is sparse, $\k q$ can be computed in sublinear time. 
\end{proof}

\begin{lemma}[Truncated Hermite expansion with truncated Taylor expansion (low dimension version of Lemma~\ref{lem:fast_gaussian_lemma_3} )]\label{lem:fast_gaussian_lemma_3_low} %
Let $G(t)$ be defined as Def~\ref{def:G_t}. For an integer $p$, let $G_{p}(t)$ denote the Hermite expansion of $G(t)$ truncated at $p$, i.e.,
\begin{align*}
G_p(t) = \sum_{\alpha \leq p} A_{\alpha} H_{\alpha} \left( \frac{ \P( t - s_{\cal B} ) }{ \sqrt{\delta} } \right).
\end{align*}
The Taylor expansion of function $G_p(t)$ at an arbitrary point $t_0$ can be written as:
\begin{align*}
G_p(t) = \sum_{\beta \geq 0} C_{\beta} \cdot \left( \frac{ \P ( t - t_0 ) }{ \sqrt{\delta} } \right)^{\beta},
\end{align*}
where the coefficients $C_{\beta}$ are defined as
\begin{align}\label{eq:def_C_beta_L}
C_{\beta}= \frac{ (-1)^{\| \beta \|_1} }{ \beta ! } \sum_{\alpha \leq p} (-1)^{\|\alpha\|_1}A_{\alpha} \cdot H_{\alpha + \beta} \left( \P \frac{( s_{\cal B} - t_{\cal C} ) }{ \sqrt{\delta} } \right).
\end{align}
Let $\Err_T(p)$ denote the error in truncating the Taylor series after $p^w$ terms, in the box ${\cal C}$ (that has center $t_{\cal C}$ and side length $r \sqrt{2\delta} $), i.e.,
\begin{align*}
\Err_T(p ) = \sum_{\beta \geq p} C_{\beta} \left( \frac{ \P( t - t_{\cal C} ) }{ \sqrt{\delta} } \right)^{\beta}.
\end{align*}
Then, we have
\blue{
\begin{align*}
    | \Err_T(p) | \leq \frac{2\sum_{j \in {\cal B}} |q_j|}{(1-r)^w}\sum_{l=0}^{w-1} {w \choose l} (1-r^p)^l \left(\frac{r^p}{\sqrt{p!}}\right)^{w-l}
\end{align*}
where $r \leq 1/2$.}

\end{lemma}
\begin{proof}

We can write $C_{\beta}$ in the following way:
\begin{align*}
C_{\beta}
= & ~ \frac{ (-1)^{ \| \beta \|_1 } }{ \beta ! } \sum_{j \in {\cal B}} q_j \sum_{\alpha \leq p} \frac{(-1)^{\|\alpha\|_1}}{ \alpha ! } \left(  \frac{ \P(s_j - s_{\cal B} ) }{ \sqrt{\delta} } \right)^{\alpha} \cdot H_{\alpha + \beta} \left( \frac{ \P( s_{\cal B} - t_{\cal C} ) }{ \sqrt{\delta} } \right) \\
= & ~ \frac{ (-1)^{\| \beta \|_1} }{ \beta ! } \sum_{j \in {\cal B}} q_j \left( \sum_{\alpha \geq 0} - \sum_{\alpha > p} \right) \frac{(-1)^{\|\alpha\|_1}}{ \alpha ! } \left(  \frac{ \P( s_j - s_{\cal B} ) }{ \sqrt{\delta} } \right)^{\alpha} \cdot H_{\alpha + \beta} \left( \frac{ \P( s_{\cal B} - t_{\cal C} ) }{ \sqrt{\delta} } \right) \\
= & ~ B_{\beta} - \frac{ (-1)^{ \| \beta \|_1 } }{ \beta ! } \sum_{j \in {\cal B}} q_j \sum_{\alpha > p} \frac{(-1)^{\|\alpha\|_1}}{ \alpha ! } \left( \frac{ \P( s_j - s_{\cal B} ) }{ \sqrt{\delta} } \right)^{\alpha} \cdot H_{\alpha + \beta} \left( \frac{ \P( s_{\cal B} - t_{\cal C} ) }{ \sqrt{\delta} } \right) \\
= & ~ B_{\beta} + (C_{\beta} - B_{\beta})
\end{align*}

Next, we have
\begin{align}\label{eq:split_Err_T_p_into_two_terms_L}
| \Err_T(p) | \leq \left| \sum_{\beta \geq p} B_{\beta} \left( \frac{  \P( t - t_{\cal C} ) }{ \sqrt{\delta} } \right)^{\beta} \right| + \left| \sum_{\beta \geq p} ( C_{\beta} - B_{\beta}) \cdot \left( \frac{ \P( t - t_{\cal C} ) }{ \sqrt{\delta} } \right)^{\beta} \right|
\end{align}
Using Lemma~\ref{lem:fast_gaussian_lemma_2}, we can upper bound the first term in the Eq.~\eqref{eq:split_Err_T_p_into_two_terms_L} by,
\blue{
\begin{align*}
\left| \sum_{\beta \geq p} B_{\beta} \left( \frac{ \P( t - t_{\cal C} ) }{ \sqrt{\delta} } \right)^{\beta} \right| \leq &~ \frac{\sum_{j \in {\cal B}} |q_j|}{(1-r)^w}\sum_{l=0}^{w-1} {w \choose l} (1-r^p)^l \left(\frac{r^p}{\sqrt{p!}}\right)^{w-l}.
\end{align*}
Since we can similarly bound $C_\beta - B_{\beta}$ as follows
\begin{align*}
|C_{\beta} - B_{\beta}| \leq \frac{1}{\beta !} K \cdot Q_B 2^{\|\beta\|_1/2} \sqrt{\beta !} \leq K Q_B \frac{ 2^{ \| \beta \|_1 / 2 } }{ \sqrt{\beta !} },
\end{align*}
we have the same bound for the second term
\begin{align*}
~ \left| \sum_{\beta \geq p} (C_{\beta} - B_{\beta}) \left( \frac{ \P( t - t_{\cal C} ) }{ \sqrt{\delta} } \right)^{\beta} \right| 
\leq ~ \frac{\sum_{j \in {\cal B}} |q_j|}{(1-r)^w}\sum_{l=0}^{w-1} {w \choose l} (1-r^p)^l \left(\frac{r^p}{\sqrt{p!}}\right)^{w-l}.
\end{align*}
}

\end{proof}

\subsection{Dynamic Low-Rank FGT with Increasing Rank}

\begin{algorithm}[!ht]\caption{This algorithm is the update part of Theorem~\ref{thm:fmm_online_low_rank_dynamic}.}
\label{alg:fmm_insert_low_rank_dynamic}
\begin{algorithmic}[1]
\State {\bf data structure} \textsc{DynamicFGT} %
\State {\bf members} %
    \State \hspace{4mm} $k \in \N$ \Comment{Rank of $\mathrm{span}(s_1,\dots,s_N,t_1,\dots,t_M)$}
    \State \hspace{4mm} $A_{\alpha}(\mathcal{B}_l), l \in [N(B)], \alpha \leq p$
    \State \hspace{4mm} $C_{\beta}(\mathcal{C}_l), l \in [N(C)], \beta \leq p$
    \State \hspace{4mm} $t_{\mathcal{C}_l}, l \in [N(C)]$
    \State \hspace{4mm} $s_{\mathcal{B}_l}, l \in [N(B)]$
    \State \hspace{4mm} $\P \in \R^{w \times d}$
\State {\bf end members}
\\
\Procedure{Insert}{$s \in \R^d, q \in \R$}
    \State $\textsc{Scale}(s,q)$
    \State Find the box $s \in \mathcal{B}$
    \For{$\alpha \leq p$} \Comment{we say $\alpha \leq p$ if $\alpha_i \leq p, \forall i \in [w]$}
         \State Compute $$A^{\new}_{\alpha}(\mathcal{B}) \leftarrow A_{\alpha} (\mathcal{B}) + \frac{(-1)^{\|\alpha\|_1}q}{\alpha !} (\frac{\P (s - s_{\mathcal{B}}) }{\sqrt{\delta}})^{\alpha}$$ %
         \Comment{Takes $p^k$ time}
    \EndFor
    \State Find $(2k+1)^w$ nearest target boxes to $\mathcal{B}$, denote by $\mathsf{nb}(\mathcal{B})$
    \Comment{$k \leq \log(\|q\|_1/\epsilon)$}
    \For{$\mathcal{C}_l \in \mathsf{nb}(\mathcal{B})$ and $\beta \leq p$}
        \State Compute
        \begin{align*}
                C^{\new}_{\beta}(\mathcal{C}_l) \leftarrow C_{\beta}(\mathcal{C}_l)+ \frac{ (-1)^{\| \beta \|_1} }{ \beta ! } \sum_{\alpha \leq p} \left(A^{\new}_{\alpha}(\mathcal{B}) - A_{\alpha}(\mathcal{B})\right)\cdot H_{\alpha + \beta} \left( \frac{\P(s_{ \mathcal{B}} - t_{ \mathcal{C}_{l}}) }{ \sqrt{\delta} } \right)
        \end{align*}
        \Comment{Takes $(2k+1)^w p^w$ time}
    \EndFor
    \For{$\alpha \leq p$}
        \State $A_{\alpha}(\mathcal{B}) \leftarrow A^{\new}_{\alpha}(\mathcal{B})$
        \Comment{Takes $p^w$ time}
    \EndFor
    \For{$\mathcal{C}_l \in \mathsf{nb}(\mathcal{B})$ and $\beta \leq p$}
            \State $C_{\beta}(\mathcal{C}_{l}) \leftarrow C^{\new}_{\beta}(\mathcal{C}_{l})$
            \Comment{Takes $(2k+1)^w p^w$ time}
    \EndFor
\EndProcedure

\State {\bf end data structure}
\end{algorithmic}
\end{algorithm}

\begin{algorithm}[!ht]\caption{This algorithm is another part of Theorem~\ref{thm:fmm_online_low_rank_dynamic}.}
\label{alg:fmm_scale_low_rank_dynamic_scale}
\begin{algorithmic}[1]
\State {\bf data structure} \textsc{DynamicFGT} %
\State {\bf members} %
    \State \hspace{4mm} $w \in \N$ \Comment{Rank of $\mathrm{span}(s_1,\dots,s_N,t_1,\dots,t_M)$}
    \State \hspace{4mm} $A_{\alpha}(\mathcal{B}_l), l \in [N(B)], \alpha \leq p$
    \State \hspace{4mm} $C_{\beta}(\mathcal{C}_l), l \in [N(C)], \beta \leq p$
    \State \hspace{4mm} $t_{\mathcal{C}_l}, l \in [N(C)]$
    \State \hspace{4mm} $s_{\mathcal{B}_l}, l \in [N(B)]$
    \State \hspace{4mm} $\P \in \R^{w \times d}$
\State {\bf end members}
\\
\Procedure{Scale}{$s \in \R^d, q \in \R$}
    \If{$s \in \mathrm{span}(P)$}
        \State {\bf pass}
    \Else
    \State $\P \leftarrow (\P, (I - \P)s/\|(I - \P)s\|_2)$, $w \leftarrow w + 1$
    \For{$\mathcal{B}_l, l \in [N(B)]$ and $\mathcal{C}_l, l \in [N(C)]$} 
        \State $s_{\mathcal{B}_l} \leftarrow (s_{\mathcal{B}_l},0)$ and $t_{\mathcal{C}_l} \leftarrow (t_{\mathcal{C}_l},0)$
    \EndFor
    \State Find the box $\mathcal{B}_{N(B) + 1}$ of length $r\sqrt{\delta}$ containing $s$ and let $s_{\mathcal{B}_{N(B) + 1}}$ be its center
    \For{$\alpha \leq p \in \mathbb{N}^w$ and $\mathcal{B}_l, l \in [N(B)]$} 
        \State $A_{(\alpha,0)}(\mathcal{B}_l) \leftarrow A_{\alpha}(\mathcal{B}_l)$
    \EndFor
    \For{$\beta \leq p \in \mathbb{N}^w, 0\leq i \leq p$ and $\mathcal{C}_l, l \in [N(C)]$}
        \State $C_{(\beta,i)}(\mathcal{C}_l) \leftarrow \frac{(-1)^i}{i!}h_i(0)\cdot C_{\beta}(\mathcal{C}_l)$
    \EndFor
    \EndIf
\EndProcedure
\State {\bf end data structure}
\end{algorithmic}
\end{algorithm}

We further give an algorithm for FGT when the low-dimensional subspace is dynamic, i.e., the rank may increase with data insertions.

\begin{theorem}[Low Rank Dynamic FGT Data Structure]\label{thm:fmm_online_low_rank_dynamic}
Let $W $ be a subspace of $\R^d$ with dimension $\dim(S) = w \ll d$. Given $N$ source points $s_1,\dots,s_N \in W$ with charges $q_1,\dots,q_N$, and $M $ target points $t_1,\dots,t_M \in W$, %
a number $\delta > 0$, and a vector $q \in \R^N$, 
let $G : \R^d \rightarrow \R$ be defined as $G(t) = \sum_{i=1}^N q_i \cdot \k(s_i, t)$ denote the \emph{kernel-density} of $t$ with respect to $S$, where $\k(s_i,t)=f(\|s_i-t\|_2)$ for $f$ satisfying the properties in Definition~\ref{def:property_1} . 
There is a dynamic data structure  
that supports the following operations:
\begin{itemize}
    \item \textsc{Init}$()$ (Algorithm~\ref{alg:fmm_init_low_rank}) Preprocess in $ N \cdot \log^{O(w)} ( \| q \|_1 / \epsilon ) $ time.
    \item \textsc{KDE-Query}$(t\in \R^d)$ (Algorithm~\ref{alg:fmm_query_low_rank}) Output $\wt{G}$ such that $G(t)-\eps \leq \wt{G} \leq G(t) + \eps$ in $\log^{O(w)} ( \| q \|_1 / \epsilon ) $ time.
    \item \textsc{Insert}$(s \in \R^d, q_s \in \R)$ (Algorithm~\ref{alg:fmm_insert_low_rank_dynamic})  For any source point $s \in \R^d$ and its charge $q_s$, update the data structure by adding this source point in $\log^{O(w)} ( \| q \|_1 / \epsilon ) $ time. The subspace dimension $w$ may be increased by 1 if $s$ is not in the original subspace.  
    \item \textsc{Query}$(q\in \R^N)$ (Algorithm~\ref{alg:fmm_query_low_rank}) Output $\widetilde{\k  q}\in \R^N$ such that $\|\widetilde{\k q}-\k q\|_\infty\leq \epsilon$, where $\k\in \R^{N\times N}$ is defined by $\k_{i,j}=\k(s_i,s_j)$ in $N\log^{O(w)}(\|q\|_1/\epsilon)$ time. 
\end{itemize}
\end{theorem}

\begin{proof}
Since Algorithm~\ref{alg:fmm_insert_low_rank_dynamic} updates $A_{\alpha}, C_\beta$ in the same way as Algorithm~\ref{alg:fmm_insert_low_rank}, the correctness of Procedures \textsc{KDE-Query} and \textsc{Query} follows similarly from Theorem~\ref{thm:fmm_online}.

Furthermore, $\textsc{Scale}$ takes $O(wd+(N(B)+N(C)) \cdot p^w)$ time.  For the correctness, we know that the rows of $\P$ form an orthonormal basis for the subspace. For a newly inserted point $s$, if it is not lie in the subspace, $(I-\P)s$ gives a new basis direction. Therefore, we can easily update $\P$ by attaching this vector (after normalization) as a column. 
Then, we show that the intermediate variables $A_\alpha$ and $C_\beta$ can be correctly updated for the new subspace.
For each source box ${\cal B}$ and each $w$-tuple $\alpha\leq p$, we have
\begin{align*}
    A_{(\alpha,0)}^{\mathsf{new}}  ( {\cal B} ) = \frac{(-1)^{\|\alpha\|_1}\cdot (-1)^i}{\alpha !\cdot i!} \sum_{j \in {\cal B}} q_j \cdot \left( \frac{ x'_j-x'_{\cal B} }{ \sqrt{\delta} } \right)^{(\alpha,i)}
    = A_{\alpha}({\cal B}),
\end{align*}
where $x'_j$ denotes the ``lifted'' point in the new subspace.
And $A_{(\alpha,i)}^{\mathsf{new}}({\cal B})=0$ for all $i>0$, since $(x'_j-x'_{\cal B})_{k+1}=0$. 
Similarly, for each target box ${\cal C}$,
\begin{align*}
    C_{(\beta,i)}^{\mathsf{new}}({\cal C})=&\frac{ (-1)^{\| \beta \|_1 } (-1)^i }{ \beta ! i!} \sum_{{\cal B}} \sum_{\alpha \leq p}\sum_{j=0}^p A_{(\alpha,j)}^{\mathsf{new}} ({\cal B}) H_{(\alpha + \beta,i+j)} \left( \frac{ x'_{{\cal B}}-y'_{{\cal C}} }{ \sqrt{\delta} } \right)\\
    =& \frac{ (-1)^{\| \beta \|_1 } (-1)^i }{ \beta ! i!} \sum_{{\cal B}} \sum_{\alpha \leq p} A_{\alpha}({\cal B}) H_{\alpha + \beta} \left( \frac{ x_{{\cal B}}-y_{{\cal C}} }{ \sqrt{\delta} } \right)\cdot h_i(0)\\
    = & \frac{(-1)^i}{i!}h_i(0)\cdot C_{\beta}({\cal C}),
\end{align*}
where the second step follows from $A_{(\alpha,i)}^{\mathsf{new}}({\cal B})= A_{\alpha}({\cal B})\cdot {\bf 1}_{i=0}$.
Therefore, by enumerating all boxes ${\cal B}, {\cal C}$ and indices $\alpha,\beta\leq p$, we can correctly compute $A_{(\alpha,0)}^{\mathsf{new}}  ( {\cal B} )$ and $C_{(\beta,i)}^{\mathsf{new}}({\cal C})$. Thus, we complete  the proof of the correctness of Algorithm~\ref{alg:fmm_scale_low_rank_dynamic_scale}.
\end{proof}

\ifdefined\isarxivversion
\bibliographystyle{alpha}
\bibliography{ref} 

\newcommand{\etalchar}[1]{$^{#1}$}
\begin{thebibliography}{BEJWY20}

\bibitem[ACSS20]{acss20}
Josh Alman, Timothy Chu, Aaron Schild, and Zhao Song.
\newblock Algorithms and hardness for linear algebra on geometric graphs.
\newblock In {\em 2020 IEEE 61st Annual Symposium on Foundations of Computer
  Science (FOCS)}, pages 541--552. IEEE, 2020.

\bibitem[ACW17]{acw17}
Haim Avron, Kenneth~L Clarkson, and David~P Woodruff.
\newblock Sharper bounds for regularized data fitting.
\newblock {\em Approximation, Randomization, and Combinatorial Optimization.
  Algorithms and Techniques (Approx-Random)}, 2017.

\bibitem[AKM{\etalchar{+}}17]{akm+17}
Haim Avron, Michael Kapralov, Cameron Musco, Christopher Musco, Ameya
  Velingker, and Amir Zandieh.
\newblock Random fourier features for kernel ridge regression: Approximation
  bounds and statistical guarantees.
\newblock In {\em International Conference on Machine Learning}, pages
  253--262. PMLR, 2017.

\bibitem[ALS{\etalchar{+}}22]{als+22}
Josh Alman, Jiehao Liang, Zhao Song, Ruizhe Zhang, and Danyang Zhuo.
\newblock Bypass exponential time preprocessing: Fast neural network training
  via weight-data correlation preprocessing.
\newblock {\em arXiv preprint arXiv:2211.14227}, 2022.

\bibitem[AM15]{am15}
Ahmed Alaoui and Michael~W Mahoney.
\newblock Fast randomized kernel ridge regression with statistical guarantees.
\newblock {\em Advances in Neural Information Processing Systems}, 28:775--783,
  2015.

\bibitem[AMC{\etalchar{+}}21]{acmnss21}
Josh Alman, Gary Miller, Timothy Chu, Shyam Narayanan, Mark Sellke, and Zhao
  Song.
\newblock Metric transforms and low rank representations of kernels.
\newblock In {\em arXiv preprint}, 2021.

\bibitem[AS23]{as23}
Josh Alman and Zhao Song.
\newblock Fast attention requires bounded entries.
\newblock {\em arXiv preprint arXiv:2302.13214}, 2023.

\bibitem[AZLS19a]{als19_dnn}
Zeyuan Allen-Zhu, Yuanzhi Li, and Zhao Song.
\newblock A convergence theory for deep learning via over-parameterization.
\newblock In {\em ICML}, 2019.

\bibitem[AZLS19b]{als19_rnn}
Zeyuan Allen-Zhu, Yuanzhi Li, and Zhao Song.
\newblock On the convergence rate of training recurrent neural networks.
\newblock In {\em NeurIPS}, 2019.

\bibitem[BCIS18]{bcis18}
Arturs Backurs, Moses Charikar, Piotr Indyk, and Paris Siminelakis.
\newblock Efficient density evaluation for smooth kernels.
\newblock In {\em 2018 IEEE 59th Annual Symposium on Foundations of Computer
  Science (FOCS)}, pages 615--626. IEEE, 2018.

\bibitem[BEJWY20]{bjwy20}
Omri Ben-Eliezer, Rajesh Jayaram, David~P Woodruff, and Eylon Yogev.
\newblock A framework for adversarially robust streaming algorithms.
\newblock In {\em Proceedings of the 39th ACM SIGMOD-SIGACT-SIGAI symposium on
  principles of database systems}, pages 63--80, 2020.

\bibitem[BG97]{bg97}
Rick Beatson and Leslie Greengard.
\newblock A short course on fast multipole methods.
\newblock {\em Wavelets, multilevel methods and elliptic PDEs}, 1:1--37, 1997.

\bibitem[BIK{\etalchar{+}}23]{bik23}
Ainesh Bakshi, Piotr Indyk, Praneeth Kacham, Sandeep Silwal, and Samson Zhou.
\newblock Subquadratic algorithms for kernel matrices via kernel density
  estimation.
\newblock In {\em The Eleventh International Conference on Learning
  Representations}, 2023.

\bibitem[BLSS20]{blss20}
Jan van~den Brand, Yin~Tat Lee, Aaron Sidford, and Zhao Song.
\newblock Solving tall dense linear programs in nearly linear time.
\newblock In {\em Proceedings of the 52nd Annual ACM SIGACT Symposium on Theory
  of Computing (STOC)}, pages 775--788, 2020.

\bibitem[BPSW21]{bpsw21}
Jan van~den Brand, Binghui Peng, Zhao Song, and Omri Weinstein.
\newblock Training (overparametrized) neural networks in near-linear time.
\newblock In {\em ITCS}, 2021.

\bibitem[BR02]{br02}
Bradley John~Charles Baxter and George Roussos.
\newblock A new error estimate of the fast gauss transform.
\newblock {\em SIAM Journal on Scientific Computing}, 24(1):257--259, 2002.

\bibitem[BS80]{BS80}
Jon~Louis Bentley and James~B Saxe.
\newblock Decomposable searching problems i. static-to-dynamic transformation.
\newblock {\em Journal of Algorithms}, 1(4):301--358, 1980.

\bibitem[CDG{\etalchar{+}}06]{CDG_HSS}
S.~Chandrasekaran, P.~Dewilde, M.~Gu, W.~Lyons, and T.~Pals.
\newblock A fast solver for hss representations via sparse matrices.
\newblock {\em SIAM J. Matrix Anal. Appl.}, 29(1):67–81, dec 2006.

\bibitem[CDL{\etalchar{+}}21]{cdl+21}
Beidi Chen, Tri Dao, Kaizhao Liang, Jiaming Yang, Zhao Song, Atri Rudra, and
  Christopher Re.
\newblock Pixelated butterfly: Simple and efficient sparse training for neural
  network models.
\newblock {\em ICLR}, 2021.

\bibitem[CKNS20]{ckns20}
Moses Charikar, Michael Kapralov, Navid Nouri, and Paris Siminelakis.
\newblock Kernel density estimation through density constrained near neighbor
  search.
\newblock In {\em 2020 IEEE 61st Annual Symposium on Foundations of Computer
  Science (FOCS)}, pages 172--183. IEEE, 2020.

\bibitem[CLM{\etalchar{+}}15]{CLMMPS15}
Michael~B. Cohen, Yin~Tat Lee, Cameron Musco, Christopher Musco, Richard Peng,
  and Aaron Sidford.
\newblock Uniform sampling for matrix approximation.
\newblock In Tim Roughgarden, editor, {\em Proceedings of the 2015 Conference
  on Innovations in Theoretical Computer Science, {ITCS} 2015, Rehovot, Israel,
  January 11-13, 2015}, pages 181--190. {ACM}, 2015.

\bibitem[CLP{\etalchar{+}}20]{clp20}
Beidi Chen, Zichang Liu, Binghui Peng, Zhaozhuo Xu, Jonathan~Lingjie Li, Tri
  Dao, Zhao Song, Anshumali Shrivastava, and Christopher Re.
\newblock Mongoose: A learnable lsh framework for efficient neural network
  training.
\newblock In {\em International Conference on Learning Representations (ICLR)},
  2020.

\bibitem[CMF{\etalchar{+}}20]{cmf+20}
Beidi Chen, Tharun Medini, James Farwell, Charlie Tai, Anshumali Shrivastava,
  et~al.
\newblock Slide: In defense of smart algorithms over hardware acceleration for
  large-scale deep learning systems.
\newblock {\em Proceedings of Machine Learning and Systems}, 2:291--306, 2020.

\bibitem[CMZ15]{cmz15}
Eduardo Corona, Per-Gunnar Martinsson, and Denis Zorin.
\newblock An o (n) direct solver for integral equations on the plane.
\newblock {\em Applied and Computational Harmonic Analysis}, 38(2):284--317,
  2015.

\bibitem[CN20]{cn20}
Yeshwanth Cherapanamjeri and Jelani Nelson.
\newblock On adaptive distance estimation.
\newblock In {\em Thirty-fourth Conference on Neural Information Processing
  Systems (NeurIPS)}, 2020.

\bibitem[CN22]{cn22}
Yeshwanth Cherapanamjeri and Jelani Nelson.
\newblock Uniform approximations for randomized hadamard transforms with
  applications.
\newblock {\em arXiv preprint arXiv:2203.01599}, 2022.

\bibitem[CS17]{cs17}
Moses Charikar and Paris Siminelakis.
\newblock Hashing-based-estimators for kernel density in high dimensions.
\newblock In {\em 2017 IEEE 58th Annual Symposium on Foundations of Computer
  Science (FOCS)}, pages 1032--1043. IEEE, 2017.

\bibitem[CS19]{cs19}
Moses Charikar and Paris Siminelakis.
\newblock Multi-resolution hashing for fast pairwise summations.
\newblock In {\em 2019 IEEE 60th Annual Symposium on Foundations of Computer
  Science (FOCS)}, pages 769--792. IEEE, 2019.

\bibitem[Dar00]{d00}
Eric Darve.
\newblock The fast multipole method: numerical implementation.
\newblock {\em Journal of Computational Physics 160.1}, 2000.

\bibitem[DJR97]{DHR97}
James~R. Driscoll, Dennis M.~Healy Jr., and Daniel~N. Rockmore.
\newblock Fast discrete polynomial transforms with applications to data
  analysis for distance transitive graphs.
\newblock {\em {SIAM} J. Comput.}, 26(4):1066--1099, 1997.

\bibitem[DMS23]{dms23}
Yichuan Deng, Sridhar Mahadevan, and Zhao Song.
\newblock Randomized and deterministic attention sparsification algorithms for
  over-parameterized feature dimension.
\newblock {\em arxiv preprint: arxiv 2304.03426}, 2023.

\bibitem[DS00]{dongarra2000guest}
Jack Dongarra and Francis Sullivan.
\newblock Guest editors' introduction: The top 10 algorithms.
\newblock {\em Computing in Science \& Engineering}, 2(1):22, 2000.

\bibitem[DZPS19]{dzps19}
Simon~S Du, Xiyu Zhai, Barnabas Poczos, and Aarti Singh.
\newblock Gradient descent provably optimizes over-parameterized neural
  networks.
\newblock In {\em ICLR}, 2019.

\bibitem[EMRV92]{emrv92}
Nader Engheta, William~D. Murphy, Vladimir Rokhlin, and Marius Vassiliou.
\newblock The fast multipole method for electromagnetic scattering computation.
\newblock {\em IEEE Transactions on Antennas and Propagation 40}, pages
  634--641, 1992.

\bibitem[FG96]{FG96}
{Jianqing} Fan and {Irène} Gijbels.
\newblock {\em Local polynomial modelling and its applications}.
\newblock Number~66 in Monographs on statistics and applied probability series.
  Chapman \& Hall, London [u.a.], 1996.

\bibitem[GE94]{GE94}
Ming Gu and Stanley~C. Eisenstat.
\newblock A stable and efficient algorithm for the rank-one modification of the
  symmetric eigenproblem.
\newblock {\em SIAM Journal on Matrix Analysis and Applications},
  15:1266--1276, 1994.

\bibitem[GGR21]{GGR21}
Albert Gu, Karan Goel, and Christopher R{\'{e}}.
\newblock Efficiently modeling long sequences with structured state spaces.
\newblock {\em CoRR}, abs/2111.00396, 2021.

\bibitem[GMS23]{gms23}
Yeqi Gao, Sridhar Mahadevan, and Zhao Song.
\newblock An over-parameterized exponential regression.
\newblock {\em arXiv preprint arXiv:2303.16504}, 2023.

\bibitem[GR87]{gr87}
Leslie Greengard and Vladimir Rokhlin.
\newblock A fast algorithm for particle simulations.
\newblock {\em Journal of computational physics}, 73(2):325--348, 1987.

\bibitem[GR88]{gr88}
Leslie Greengard and Vladimir Rokhlin.
\newblock The rapid evaluation of potential fields in three dimensions.
\newblock {\em Vortex Methods. Springer, Berlin, Heidelberg}, pages 121--141,
  1988.

\bibitem[GR89]{gr89}
Leslie Greengard and Vladimir Rokhlin.
\newblock On the evaluation of electrostatic interactions in molecular
  modeling.
\newblock {\em Chemica scripta}, 29:139--144, 1989.

\bibitem[GR96]{gr96}
Leslie Greengard and Vladimir Rokhlin.
\newblock An improved fast multipole algorithm in thre dimensions.
\newblock {\em .}, 1996.

\bibitem[Gre88]{g88}
Leslie Greengard.
\newblock {\em The rapid evaluation of potential fields in particle systems}.
\newblock MIT press, 1988.

\bibitem[Gre90]{g90}
Leslie Greengard.
\newblock The numerical solution of the $n$-body problem.
\newblock {\em Computers in physics}, 4(2):142--152, 1990.

\bibitem[Gre94]{g94}
Leslie Greengard.
\newblock Fast algorithms for classical physics.
\newblock {\em Science}, 265(5174):909--914, 1994.

\bibitem[GS91]{gs91}
Leslie Greengard and John Strain.
\newblock The fast gauss transform.
\newblock {\em SIAM Journal on Scientific and Statistical Computing},
  12(1):79--94, 1991.

\bibitem[GSY23]{gsy23}
Yeqi Gao, Zhao Song, and Xin Yang.
\newblock Differentially private attention computation.
\newblock {\em arXiv preprint arXiv:2305.04701}, 2023.

\bibitem[Her64]{hermite1864}
M~Hermite.
\newblock {\em Sur un nouveau d{\'e}veloppement en s{\'e}rie des fonctions}.
\newblock Imprimerie de Gauthier-Villars, 1864.

\bibitem[Hil26]{h26}
Einar Hille.
\newblock A class of reciprocal functions.
\newblock {\em Annals of Mathematics}, pages 427--464, 1926.

\bibitem[HSS08]{hss08}
Thomas Hofmann, Bernhard Sch{\"o}lkopf, and Alexander~J Smola.
\newblock Kernel methods in machine learning.
\newblock {\em The annals of statistics}, 36(3):1171--1220, 2008.

\bibitem[HSWZ22]{hswz22}
Hang Hu, Zhao Song, Omri Weinstein, and Danyang Zhuo.
\newblock Training overparametrized neural networks in sublinear time.
\newblock {\em arXiv preprint arXiv:2208.04508}, 2022.

\bibitem[HW13]{hw13}
Moritz Hardt and David~P Woodruff.
\newblock How robust are linear sketches to adaptive inputs?
\newblock In {\em Proceedings of the forty-fifth annual ACM symposium on Theory
  of computing (STOC)}, pages 121--130, 2013.

\bibitem[JGH18]{jgh18}
Arthur Jacot, Franck Gabriel, and Cl{\'e}ment Hongler.
\newblock Neural tangent kernel: Convergence and generalization in neural
  networks.
\newblock In {\em Advances in neural information processing systems}, pages
  8571--8580, 2018.

\bibitem[JKDG08]{jkdg08}
Prateek Jain, Brian Kulis, Inderjit~S Dhillon, and Kristen Grauman.
\newblock Online metric learning and fast similarity search.
\newblock In {\em NIPS}, volume~8, pages 761--768. Citeseer, 2008.

\bibitem[JPW22]{JPW22}
Shunhua Jiang, Binghui Peng, and Omri Weinstein.
\newblock Dynamic least-squares regression.
\newblock In {\em ICLR}, 2022.

\bibitem[LL18]{ll18}
Yuanzhi Li and Yingyu Liang.
\newblock Learning overparameterized neural networks via stochastic gradient
  descent on structured data.
\newblock In {\em NeurIPS}, 2018.

\bibitem[LMG05]{lmg05}
Dongryeol Lee, Andrew Moore, and Alexander Gray.
\newblock Dual-tree fast gauss transforms.
\newblock {\em Advances in Neural Information Processing Systems}, 18, 2005.

\bibitem[LSS{\etalchar{+}}20]{lsswy20}
Jason~D Lee, Ruoqi Shen, Zhao Song, Mengdi Wang, and Zheng Yu.
\newblock Generalized leverage score sampling for neural networks.
\newblock In {\em NeurIPS}, 2020.

\bibitem[Mar12]{m12}
Per-Gunnar Martinsson.
\newblock Encyclopedia entry on ``fast multipole methods''.
\newblock In {\em University of Colorado at Boulder}.
  \url{http://amath.colorado.edu/faculty/martinss/2014_CBMS/Refs/2012_fmm_encyclopedia.pdf},
  2012.

\bibitem[Mar19]{m19}
Per-Gunnar Martinsson.
\newblock Fast summation and multipole expansions.
\newblock Lecture note, 2019.

\bibitem[NJW02]{njw02}
Andrew~Y Ng, Michael~I Jordan, and Yair Weiss.
\newblock On spectral clustering: Analysis and an algorithm.
\newblock In {\em Advances in neural information processing systems}, pages
  849--856, 2002.

\bibitem[PKP{\etalchar{+}}19]{PARISI201954}
German~I. Parisi, Ronald Kemker, Jose~L. Part, Christopher Kanan, and Stefan
  Wermter.
\newblock Continual lifelong learning with neural networks: A review.
\newblock {\em Neural Networks}, 113:54--71, 2019.

\bibitem[RN10]{REN10}
Carl~Edward Rasmussen and Hannes Nickisch.
\newblock Gaussian processes for machine learning (gpml) toolbox.
\newblock {\em J. Mach. Learn. Res.}, 11:3011–3015, 2010.

\bibitem[RR08]{rr08}
Ali Rahimi and Benjamin Recht.
\newblock Random features for large-scale kernel machines.
\newblock In {\em Advances in neural information processing systems}, pages
  1177--1184.
  \url{https://people.eecs.berkeley.edu/~brecht/papers/07.rah.rec.nips.pdf},
  2008.

\bibitem[SGP{\etalchar{+}}18]{SGPRR18}
Christopher~De Sa, Albert Gu, Rohan Puttagunta, Christopher R{\'{e}}, and Atri
  Rudra.
\newblock A two-pronged progress in structured dense matrix vector
  multiplication.
\newblock In Artur Czumaj, editor, {\em Proceedings of the Twenty-Ninth Annual
  {ACM-SIAM} Symposium on Discrete Algorithms, (SODA)}, pages 1060--1079.
  {SIAM}, 2018.

\bibitem[SS02]{SS02}
Bernhard Sch\"{o}lkopf and Alexander~J. Smola.
\newblock {\em Learning with kernels : support vector machines, regularization,
  optimization, and beyond}.
\newblock Adaptive computation and machine learning. MIT Press, 2002.

\bibitem[SSB02]{ssb02}
Bernhard Sch{\"o}lkopf, Alexander~J Smola, and Francis Bach.
\newblock {\em Learning with kernels: support vector machines, regularization,
  optimization, and beyond}.
\newblock MIT press, 2002.

\bibitem[SSX21]{ssx21}
Anshumali Shrivastava, Zhao Song, and Zhaozhuo Xu.
\newblock Sublinear least-squares value iteration via locality sensitive
  hashing.
\newblock {\em arXiv preprint arXiv:2105.08285}, 2021.

\bibitem[STC04]{sc04}
John Shawe-Taylor and Nello Cristianini.
\newblock {\em Kernel methods for pattern analysis}.
\newblock Cambridge university press, 2004.

\bibitem[SYZ21]{syz21}
Zhao Song, Shuo Yang, and Ruizhe Zhang.
\newblock Does preprocessing help training over-parameterized neural networks?
\newblock {\em Advances in Neural Information Processing Systems}, 34, 2021.

\bibitem[SZK14]{OutlierDetection_KDE}
Erich Schubert, Arthur Zimek, and {Hans Peter} Kriegel.
\newblock Generalized outlier detection with flexible kernel density estimates.
\newblock In {\em Proceedings of the 2014 SIAM International Conference on Data
  Mining}, pages 542--550, 2014.

\bibitem[SZZ21]{szz21}
Zhao Song, Lichen Zhang, and Ruizhe Zhang.
\newblock Training multi-layer over-parametrized neural network in subquadratic
  time.
\newblock {\em arXiv preprint arXiv:2112.07628}, 2021.

\bibitem[XSS21]{xss21}
Zhaozhuo Xu, Zhao Song, and Anshumali Shrivastava.
\newblock Breaking the linear iteration cost barrier for some well-known
  conditional gradient methods using maxip data-structures.
\newblock {\em Advances in Neural Information Processing Systems (NeurIPS)},
  34, 2021.

\bibitem[YDD04]{ydd04}
Changjiang Yang, Ramani Duraiswami, and Larry Davis.
\newblock Efficient kernel machines using the improved fast gauss transform.
\newblock In {\em NIPS}, 2004.

\bibitem[YDGD03]{ydgd03}
Changjiang Yang, Ramani Duraiswami, Nail~A. Gumerov, and Larry Davis.
\newblock Improved fast gauss transform and efficient kernel density
  estimation.
\newblock In {\em Proceedings Ninth IEEE International Conference on Computer
  Vision (ICCV)}. IEEE, 2003.

\bibitem[YJZ{\etalchar{+}}23]{yjz23}
Hongru Yang, Ziyu Jiang, Ruizhe Zhang, Zhangyang Wang, and Yingbin Liang.
\newblock Convergence and generalization of wide neural networks with large
  bias, 2023.

\bibitem[Zha22]{z22}
Lichen Zhang.
\newblock Speeding up optimizations via data structures: Faster search, sample
  and maintenance.
\newblock Master's thesis, Carnegie Mellon University, 2022.

\bibitem[ZHDK23]{zhdk23}
Amir Zandieh, Insu Han, Majid Daliri, and Amin Karbasi.
\newblock Kdeformer: Accelerating transformers via kernel density estimation.
\newblock {\em arXiv preprint arXiv:2302.02451}, 2023.

\end{thebibliography}
\else

\fi

\end{document}